\colorlet{Changes@Color}{red}
\DeclareMathAlphabet{\mathcalligra}{T1}{calligra}{m}{n} 
\DeclareMathOperator*{\argmin}{argmin}
\theoremstyle{definition}
\newtheorem{example}{Example}
\newtheorem{remark}{Remark}
\theoremstyle{plain}
\newtheorem{lemma}{Lemma}
\newtheorem{theorem}{Theorem}
\newtheorem{proposition}{Proposition}
\newtheorem{corollary}{Corollary}
\newcommand{\red}{\textcolor{black}}
\newcommand{\blue}{\textcolor{black}}
\begin{document}

\begin{frontmatter}
\title{Combinatorial views on persistent characters in phylogenetics}

\author{Kristina Wicke}

\author{Mareike Fischer\corref{cor1}}
\ead{email@mareikefischer.de}
\cortext[cor1]{Corresponding author}

\address{Institute of Mathematics and Computer Science, University of Greifswald, Greifswald, Germany}

\begin{abstract} 
The so-called binary perfect phylogeny with persistent characters has recently been thoroughly studied in computational biology as it is less restrictive than the well-known binary perfect phylogeny. Here, we focus on the notion of (binary) persistent characters, i.e. characters that can be realized on a phylogenetic tree by at most one $0 \rightarrow 1$ transition followed by at most one $1 \rightarrow 0$ transition in the tree, and analyze these characters under different aspects. 
First, we illustrate the connection between persistent characters and Maximum Parsimony, where we characterize persistent characters in terms of the first phase of the famous Fitch algorithm.
Afterwards we focus on the number of persistent characters for a given phylogenetic tree. We show that this number solely depends on the balance of the tree. To be precise, we develop a formula for counting the number of persistent characters for a given phylogenetic tree based on an index of tree balance, namely the Sackin index. 
Lastly, we consider the question of how many (carefully chosen) binary characters together with their persistence status are needed to uniquely determine a phylogenetic tree and provide an upper bound for the number of characters needed. 
\end{abstract}

\begin{keyword}
Persistent characters \sep Maximum Parsimony  \sep Fitch algorithm \sep Tree balance \sep Sackin index \sep $X$-splits
 \end{keyword}
\end{frontmatter}


\section{Introduction}
Reconstructing the evolutionary history of a set of species based on so-called characters is a central goal in evolutionary biology. A character is usually seen as some \enquote{characteristic} of a species and might for example be of morphological nature (e.g. a character could describe the number of teeth in different species) or of genetic nature (i.e. a genetic character could describe the nucleotide at a particular position in the DNA). Given a set $X$ of species together with a set of characters, the overall aim is now to find a phylogenetic tree with leaf set $X$ that explains the evolution of the characters associated with the species in $X$. However, there are several evolutionary models that explain how exactly a character could have evolved on a tree from some early ancestor (the root of the tree) to the present day species (the leaves of the tree), ranging from very restrictive ones (e.g. the perfect phylogeny model (\citet{Baca2001})) to more general models.
Here, we focus on binary characters and consider a model that has recently caught attention in the literature, namely the so-called \emph{binary perfect phylogeny with persistent characters}. This model is less restrictive than the perfect phylogeny model, but more restrictive than for example the Dollo Parsimony (\citet{Dollo}). 
\red{While the perfect phylogeny model is based on the assumption that a character can only be gained once throughout the course of evolution (i.e. there is no parallel evolution) and cannot be lost once it is gained (i.e. evolution cannot be reversed and there are no back mutations), the binary perfect phylogeny model with persistent characters assumes that a character can only be gained once but may also be lost once. This scenario is often more realistic than the perfect phylogeny model, for example, in the context of explaining the evolution of tumor phylogenies, where there is strong evidence for back mutations as well as parallel mutations (\citet{Kuipers2017}). The persistent phylogeny model and generalizations of it are thus an important tool in the context of reconstructing tumor phylogenies and are a topical and active area of research (\citet{Bonizzoni2017, Bonizzoni2019,El-Kebir2018}). 
In this note, we focus less on applications of the persistent phylogeny model, but more on the combinatorial structure underlying  it. More precisely, we study the combinatorial properties of characters consistent with the persistent phylogeny model, namely  so-called \emph{persistent} characters.}
A binary character is called persistent, if it can be realized on a phylogenetic tree by at most one $0 \rightarrow 1$ transition followed by at most one $1 \rightarrow 0$ transition. 
The problem of reconstructing a \emph{persistent phylogeny} for a set of species together with a set of binary characters if it exists, i.e. reconstructing an evolutionary tree on which all the given characters are persistent, is called the \emph{Persistent Phylogeny Problem} (PPP) and has been thoroughly studied (\citet{Bonizzoni2012, Bonizzoni2014, Bonizzoni2017b, Bonizzoni2017}). In particular it was shown in \citet{Bonizzoni2017b} that the Persistent Phylogeny Problem can be solved in polynomial time. 
Here, we will not be concerned with algorithmic questions regarding the Persistent Phylogeny Problem, but rather consider persistent characters from a combinatorial perspective. We start by illustrating a connection between persistent characters and  Maximum Parsimony, in particular concerning the first phase of the so-called Fitch algorithm. We then analyze the number of binary characters that are persistent on a given phylogenetic tree and relate this number to an index of tree balance, namely the Sackin index (\citet{Sackin}). 
We will see that the more balanced a tree is, the fewer binary characters are persistent on it.
Lastly, we consider the question of how many (carefully chosen) characters together with their persistence status (i.e. information about whether the characters are persistent or not) are needed to uniquely determine a phylogenetic tree. 
This question was posed by Prof. Mike Steel as part of the \enquote{Kaikoura 2014 Challenges} at the Kaikoura 2014 Workshop, a satellite meeting of the Annual New Zealand Phylogenomics Meeting (\citet{Kaikoura2014} and \url{http://www.math.canterbury.ac.nz/bio/events/kaikoura2014/files/kaikoura-problems.pdf}). We partially answer this question by providing an upper bound for this number.

\section{Preliminaries}
Before we can start to discuss the notion of persistent characters in detail, we first need to introduce all concepts used in this manuscript.

\subsection*{Trees and characters}
A \emph{tree} $T$ is a connected, acyclic graph with node set $V$ and edge set $E$. We use $V_L$ in order to denote the set of \emph{leaves} of a tree and $\mathring{V}$ to denote the set of \emph{inner nodes}. A tree is called \emph{rooted} if there is a designated root node $\rho$. Otherwise it is called \emph{unrooted}. Moreover, a tree is called \emph{rooted binary} if the root has degree 2 and all other non-leaf nodes have degree 3. For persistence, we have to add an extra edge to the root $\rho$, which we call {\em root edge}, and whose second endnode we call $\rho'$. 
Moreover, a \emph{phylogenetic $X$-tree} is a tuple $\mathcal{T} = (T, \phi)$, where $\phi$ is a bijection from $V_L$ to $X$. $T$ is often referred to as the \textit{topology} or \textit{tree shape} of $\mathcal{T} = (T, \phi)$ and $X$ is called the \emph{taxon set} of $\mathcal{T}$. Note that we refer to $T$ instead of $\mathcal{T}$ whenever the specific leaf labeling $\phi$ is irrelevant for our considerations. Throughout this manuscript, when we refer to trees, we always mean rooted binary phylogenetic trees (possibly with an additional root edge) on a taxon set $X$, and without loss of generality assume that $X=\{1, \ldots, n\}$. Furthermore, we implicitly assume that all trees are \emph{directed} from the root to the leaves and for an edge $e=(u,v)$ we call $u$ the \emph{direct ancestor} or \emph{parent} of $v$ (and $v$ the \emph{direct descendant} or \emph{child} of $u$). Alternatively, we sometimes call $u$ the \emph{source node} of the edge $(u,v)$. 
Furthermore, we call the two subtrees rooted at an internal node $u$ the two \emph{maximal pending subtrees} of $u$.

Moreover, recall that two leaves $v$ and $w$ are said to form a \emph{cherry} $[v,w]$, if $v$ and $w$ have the same parent. 

Now, let $T$ be a rooted binary tree with root $\rho$ and let $x \in V_L$ be a leaf of $T$. Then we denote by $\delta_x$ the \emph{depth} of $x$ in $T$, which is the number of edges on the unique shortest path from $\rho$ to $x$. Then, the \emph{height} of $T$ is defined as $h(T) = \max\limits_{x \, \in  \, V_L} \delta_x$. 

Lastly, we want to introduce two particular trees which will be of interest later on, namely the \emph{caterpillar tree} $T_{n}^{cat}$, the unique rooted binary tree with $n$ leaves that has only one cherry, and the \emph{fully balanced tree} $T_{k}^{bal}$ with $n= 2^k$ leaves in which all leaves have depth precisely $k$ (Figure \ref{Fig_TbalTcat}).

\begin{figure}
	\centering
	\includegraphics[scale=0.35]{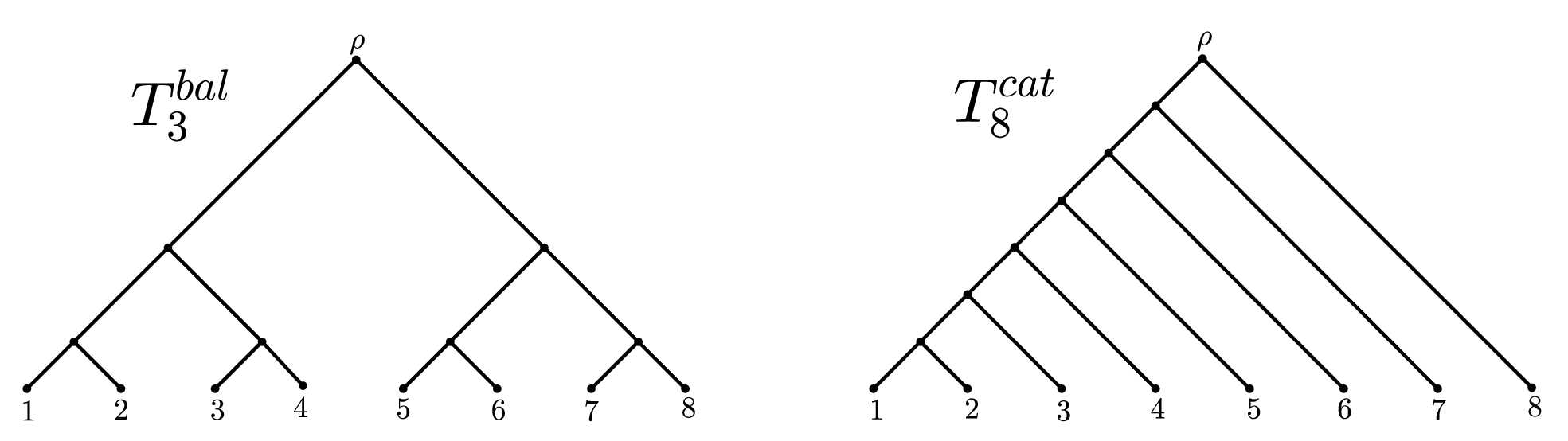}
	\caption{Fully balanced tree of height 3, $T_3^{bal}$, and caterpillar tree, $T_8^{cat}$, on 8 leaves.}
	\label{Fig_TbalTcat}
\end{figure}

Now that we have introduced the concept of a tree, we need to introduce the data that we will map onto the leaves of a tree, i.e. we need to introduce binary characters. A \emph{binary character} $f$ is a function $f: X \rightarrow \{0,1\}$ from the taxon set $X$ to the set $\{0,1\}$.
We often abbreviate a character $f$ by denoting it as $f= f(1)f(2) \ldots f(n)$ for $X=\{1, \ldots, n\}$. 
Moreover, given a binary character $f$ we use $\bar{f}$ to denote its inverted character (where we replace all ones by zeros and vice versa), e.g. if $f=0011$, then $\bar{f}=1100$.  
A binary character on state set $\mathcal{R} = \{0,1\}$ is called \emph{informative} if it contains at least two zeros and two ones. Recall that in biology, a sequence of characters is also often called an \emph{alignment} $\mathcal{A}$.
Furthermore, an \emph{extension} of a binary character $f$ is a map $g: V \rightarrow \{0,1\}$ such that $g(i)=f(i)$ for all $i \in X$.
Last but not least, for a phylogenetic tree $T$, we call $ch(g) = \vert \{ (u,v) \in E, \, g(u) \neq g(v)\} \vert$ the \emph{changing number} of $g$ on $T$. 

\subsection*{Persistence}
Let $T$ be a phylogenetic tree with root $\rho$ and an additional root edge, whose second endnode is $\rho'$. Throughout this manuscript we assume that the state of $\rho'$ is 0. Then we call a binary character \emph{persistent} if it can be realized on $T$ by at most one $0 \rightarrow 1$ transition followed by at most one $1 \rightarrow 0$ transition in the tree. 
More formally, we call a character $f$ persistent if there exists an extension $g$ of $f$ that realizes $f$ with at most one $0 \rightarrow 1$ transition and at most one $1 \rightarrow 0$ transition. 
We call such an extension a \emph{minimal persistent extension} if it minimizes the changing number $ch(g)$. 
As an example consider the caterpillar tree $T$ on four leaves depicted in Figure \ref{Fig_PersNonPers} and the two characters $f_1 = 0110$ and $f_2=0101$. $f_1$ is persistent, because it can be realized by a $0 \rightarrow 1$ change on edge $(\rho, u)$ followed by a $1 \rightarrow 0$ change on edge $(v,1)$. There is a unique minimal extension for $f_1$, namely the one that assigns state $0$ to $\rho$ and state $1$ to $u$ and $v$. $f_2$, however, is not persistent, because it would require at least two $0 \rightarrow 1$ transitions in the tree (or one $0 \rightarrow 1$ transition followed by two $1 \rightarrow 0$ transitions). It can easily be verified that there exists no extension of $f_2$ such that $f_2$ is persistent. 

In general, we denote by a tuple $(f,\mathcalligra{p}(f,T))$ a character $f$ together with its {\em persistence status $\mathcalligra{p}$ on a given tree $T$}, where $$\mathcalligra{p}(f,T)=\begin{cases}p & \mbox{ if $f$ is persistent on $T$} \\ np & \mbox{else,} \end{cases}$$ i.e. $\mathcalligra{p}$ is the persistence indicator function.

Additionally, for a character $f$ that is persistent on $T$, we define $l_p(f,T)$ as the {\em persistence score of $f$ on $T$}, i.e. 
$$l_p(f,T)=\begin{cases}0 & \mbox{if $f = 0\ldots 0$, i.e. $f$ requires no changes,} \\ 1 & \mbox{if $f$ requires one $0\rightarrow 1$ change,} \\ 2 & \mbox{if $f$ requires one $0\rightarrow 1$ change followed by one $1 \rightarrow 0$ change}. \end{cases}$$
Note that with `requires' we explicitly mean that $f$ cannot be realized with fewer changes. 

\begin{figure}[htbp]
	\centering
	\includegraphics[scale=0.25]{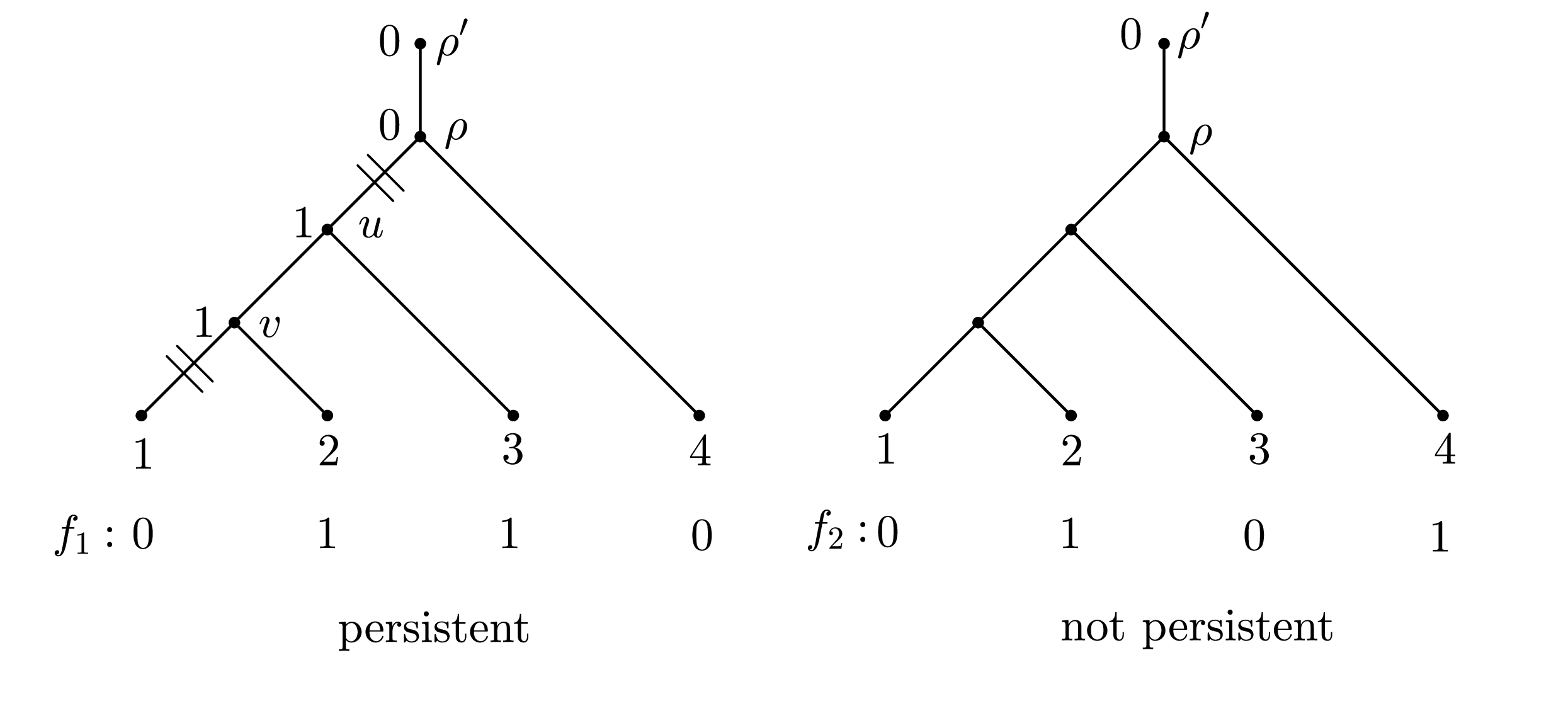}
	\caption{Caterpillar tree $T^{cat}_4$ with a persistent and a non-persistent character assigned to its leaves. Character $f_2$ on the right is not persistent, because there exists no persistent extension of $f_2$.}
	\label{Fig_PersNonPers}
\end{figure}

\subsection*{Maximum Parsimony and the Fitch Algorithm}
As we will later on establish a relationship between persistent characters and the so-called Fitch algorithm, we now introduce the criterion of Maximum Parsimony.

Given a character $f$, the idea of Maximum Parsimony is to find a phylogenetic tree $T$ that minimizes the \emph{parsimony score} $l(f,T)$ of $f$, where $l(f,T) = \min\limits_{g} ch(g,T)$ and the minimum runs over all possible extensions $g$ of $f$. For a given tree $T$, an extension $g$ that minimizes the changing number is called a \emph{most parsimonious extension} or {\em minimal extension} and a tree $T$ that minimizes the parsimony score is called a \emph{Maximum Parsimony tree}. Note that this concept is similar to the persistence score $l_p(f,T)$ introduced above and we will see later on how these two scores relate to each other.
Given a phylogenetic tree $T$ and a character $f$, we can use the so-called Fitch algorithm (\citet{Fitch}) in order to calculate the parsimony score $l(f,T)$ as well as a minimal extension $g$ of $f$ that realizes $f$ on $T$ with $l(f,T)$ changes. Formally, the Fitch algorithm consists of three phases, but we will only consider the first two phases here. The first phase, which is most important for our purposes, is based on \emph{Fitch's parsimony operation} $\ast$, which is defined as follows: Let $A, B \subseteq \mathcal{R}$. Then
$$ A \ast B = \begin{cases} 
				A \cap B, \, \text{ if } A \cap B \neq \emptyset \\
				A \cup B, \, \text{ otherwise.} 
			  \end{cases}$$
In principle, the first phase of the Fitch algorithm goes from the leaves to the root of a tree $T$ and assigns each parental node a state set based on the states of its children. First, each leaf is assigned the set consisting of the state assigned to it by $f$. Then all other nodes $v$, whose children both have already been assigned state sets, say $A$ and $B$, are assigned the set $A \ast B$. 
Note that the parsimony score $l(f,T)$ corresponds to the number of times the union is taken. 
Moreover, note that in this manuscript $\mathcal{R}=\{0,1\}$. Whenever a node is assigned state set $\{0,1\}$ as result of the union of $\{0\}$ and $\{1\}$ being taken, we call it a \emph{$\{0,1\}$ union node}. Else, if the assignment of state set $\{0,1\}$ results from the intersection of $\{0,1\}$ and $\{0,1\}$ being taken, we refer to it as a \emph{$\{0,1\}$ intersection node}. This distinction will be of relevance in subsequent analyses, because whenever there are only two $\{0,1\}$ nodes in a tree, we know that each of them must be a union node.
As an example, consider the caterpillar tree $T_4^{cat}$ depicted in Figure \ref{Fig_Fitch}, where this concept is illustrated.\\
The second phase of the Fitch algorithm goes from the root to the leaves in order to compute a minimal extension of $f$. First, the root is arbitrarily assigned one state of its state set that was assigned during the first phase of the algorithm. Then, for every inner node $v$ that is a child of a node $u$ that has already been assigned a state, say $g(u)$, we do the following: if $g(u)$ is contained in the ancestral state set of $v$, we set $g(v)=g(u)$. Otherwise, we arbitrarily assign any state from the ancestral state set of $v$ to $v$. Overall, this gives a minimum extension $g$ of $f$ and we will later on see how the Fitch algorithm relates to persistence. 

\begin{figure}[htbp]
	\centering
	\includegraphics[scale=0.2]{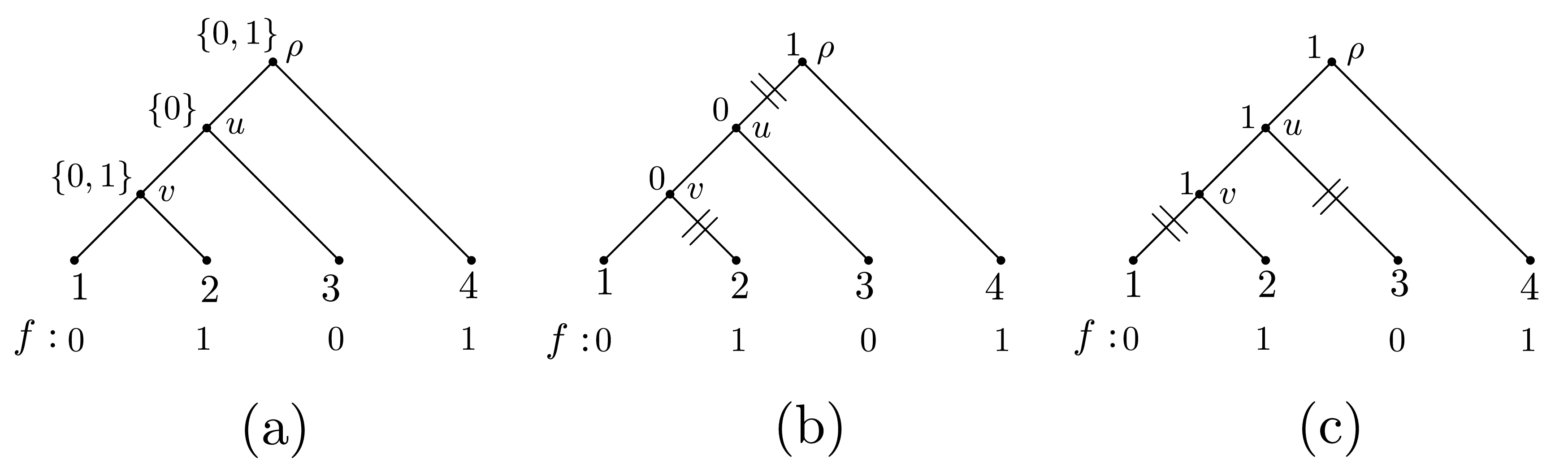}
	\caption{Caterpillar tree $T_4^{cat}$ together with character $f=0101$. In (a) the ancestral state sets found by the first phase of the Fitch algorithm are depicted. Note that $l(f,T)=2$ as there are precisely two $\{0,1\}$ union nodes. In (b) a most parsimonious extension found by the second phase of the Fitch phase is depicted, while in (c) a most parsimonious extension not found by the Fitch algorithm is shown.}
	\label{Fig_Fitch}
\end{figure}

\subsection*{Tree Balance}
Having introduced the notion of persistent characters above, we will later not only be concerned with the relationship between persistence and the Fitch algorithm, but also with the relationship between persistence and tree shape, in particular tree balance. There are different methods to measure the balance of a tree, one of them being the Sackin index (\citet{Sackin}). 
Recall that the Sackin index of a rooted binary phylogenetic tree $T$ is defined as $\mathcal{S}(T)=\sum\limits_{u \in \mathring{V}(T)}n_u$, where $n_u$ is the number of leaves of the subtree of $T$ rooted at $u$. This index is maximized by the caterpillar tree (\citet{Fischer2018}).
If for two trees $T_1$ and $T_2$ we have $\mathcal{S}(T_1) < \mathcal{S}(T_2)$, then $T_1$ is called {\em more balanced} than $T_2$.

As an example consider the caterpillar tree $T_4^{cat}$ depicted in Figure \ref{Fig_Fitch}. Here, we have $\mathcal{S}(T_4^{cat}) = 2+3+4=9$, because $v$ has two descendant leaves, $u$ has three descendant leaves and $\rho$ has four.

In the following we will see that the more balanced a tree, the fewer binary characters are persistent on it.

\subsection*{Splits and Compatibility}
One last concept that we need to introduce is that of splits and compatibility. In principle, splits are a way to describe unrooted phylogenetic trees. Given a set $X$, a bipartition of $X$ into two non-empty subsets $A$ and $B$ is called an $X$\emph{-split} and is denoted by $\sigma = A \vert B$. 
We call two $X$-splits $\sigma_1 = A_1 \vert B_1$ and $\sigma_2 = A_2 \vert B_2$ \emph{compatible}, if at least one of the following intersections is empty: $A_1 \cap A_2, \, A_1 \cap B_2, \, A_2 \cap B_1$, or $B_1 \cap B_2$ (cf. \citet{Semple2003}).

Now, let $T$ be an unrooted phylogenetic tree on $X$ and let $e$ be an edge of $T$. Then the removal of $e$ splits $T$ into two connected components, say $T_1$ and $T_2$. Let $X_i \subseteq X$ be the leaf set of $T_i$ for $i=1,2$. Then we say that edge $e$ \emph{induces} the split $\sigma_e = X_1 \vert X_2$. Note that every edge leading to a  leaf $x \in X$ (sometimes also called a \emph{pending edge}) induces a so-called \emph{trivial} split $\{x\} \vert X \setminus \{x\}$. We will denote the set of all induced \emph{non-trivial splits} (i.e. splits $\sigma_e = X_1 \vert X_2$ with $\vert X_1 \vert, \, \vert X_2 \vert \geq 2$) by $\Sigma^{\ast}(T)$.
Note that a non-trivial split $\sigma$ can be translated into an informative binary character $f_{\sigma}$ by assigning one of the two states to all taxa in $X_1$ and the other one to the taxa in $X_2$.

We will later on need these concepts when we consider the question of how many binary characters together with their persistence status are needed to uniquely determine a phylogenetic tree.

\section{Results}

We are now in the position to illustrate our results concerning the notion of persistence, where we start with elaborating the connection between persistence and Maximum Parsimony.

\subsection{Links between persistence and Maximum Parsimony}
The overall aim of this section is to fully characterize persistent characters in terms of their parsimony score and the first phase of the Fitch algorithm. Based on this, we provide an algorithm to decide whether a character is persistent or not. We will see that a character $f$ with $l(f,T) \leq 1$ is always persistent on $T$, while a character $f$ with $l(f,T) > 2$ is never persistent. Characters with parsimony score $2$, on the other hand, may or may not be persistent on a given tree $T$ (Figure \ref{Fig_PersNonPers}). However, we will show that the ones that are persistent can be characterized with the help of the first phase of the Fitch algorithm. In the following we will thus prove the main theorem of this section:

\begin{theorem}[Characterization of persistent characters] \label{characterize}
Let $f$ be a binary character on $\mathcal{R}=\{0,1\}$ and let $T$ be a phylogenetic tree. Then, we have:
\begin{enumerate}
\item If $l(f,T) > 2$, then $f$ is not persistent on $T$.
\item If $l(f,T) \leq 1$, then $f$ is persistent on $T$. 
\item If $l(f,T) = 2$, let the two $\{0,1\}$ union nodes found during the 1st phase of the Fitch algorithm be denoted by $u$ and $w$, respectively. Then, we have:
 $$ f \mbox{ is persistent on } T $$ 
 $$\Leftrightarrow $$
\hspace*{3cm} all of the following conditions hold: \\

\begin{enumerate} 
\item $u$ is an ancestor of $w$ or vice versa; wlog $u$ is the ancestor of $w$. 
\item The ancestral state sets found by the first phase of the Fitch algorithm fulfill the following conditions (Figure \ref{Fig_characterize}):
	\begin{itemize}
	\item all nodes that are descendants of the direct descendant $v$ of $u$ on the path to $w$, but not of $w$ are assigned state set $\{1\}$ (in particular, all nodes on the path from $v$ to $w$ (including $v$) are assigned state set $\{1\}$),
	\item all nodes that are not descendants of $v$ are assigned state set $\{0\}$.
	\end{itemize}
\end{enumerate}
\end{enumerate}
\end{theorem}

\begin{remark}\label{atleastoneinbetween} Concerning the first part of 3(b) in the theorem, note that if a character has parsimony score 2, the two $\{0,1\}$ union nodes, of which one is by Part 3(a) of Theorem \ref{characterize} an ancestor of the other one, cannot be directly adjacent -- i.e. there is no edge connecting the two nodes. More precisely, while one of the two nodes is an ancestor of the other one, it cannot be a {\em direct} ancestor. This is due to the fact that by the Fitch algorithm, the only way to get a direct ancestor of a $\{0,1\}$ node to also be assigned $\{0,1\}$ is to have {\em both} children assigned $\{0,1\}$, not only one (if the other child is for instance assigned $\{0\}$, then the parent would be assigned the intersection of $\{0\}$ and $\{0,1\}$, namely $\{0\}$, rather than $\{0,1\}$). However, the parent $\{0,1\}$ node would then be an intersection node and not a union node, but we are only considering $\{0,1\}$ union nodes. In total, this guarantees that there is at least one node $v$ between $u$ and $w$, which is a direct descendant of $u$. In particular, we have $v\neq w$. \\
Moreover, as by 3(a), $u$ is an ancestor of $w$, of the two maximal pending subtrees of $w$, one only has nodes that are assigned state set $\{0\}$ and the other only has nodes that are assigned state set $\{1\}$. Again, this is due to the fact that $u$ and $w$ are the \emph{only} $\{0,1\}$ union nodes.
\end{remark}

The proofs of Parts 1 and 2 of Theorem \ref{characterize} are relatively straightforward and can be found in the appendix. The proof of Part 3 requires several intermediate results. The general strategy is to show that $f$ has a minimal persistent extension on $T$ if and only if conditions (a) and (b) hold. If they hold, this minimal persistent extension is unique and can be found with the first phase of the Fitch algorithm. 

Before elaborating on this, we illustrate how the persistence score and the parsimony score of a character $f$ on a tree $T$ relate to each other and prove the following statements, which in turn will be used to prove Parts 1 and 2 of Theorem \ref{characterize}.

\begin{lemma} \label{MPboundsPersistence} Let $\mathcalligra{p}(f,T)=p$. Then,
\begin{enumerate}
\item $l_p(f,T) \geq l(f,T)$.
\item $l(f,T)\leq 2$.
\end{enumerate}
\end{lemma}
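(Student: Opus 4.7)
The plan is to derive both statements directly from the definitions of $l_p$ and $l$, so the argument will be short; the main conceptual point is simply that a (minimal) persistent extension is in particular an extension of $f$, and therefore participates in the minimization that defines the parsimony score.

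For Part 1, I would let $g^{\ast}$ be a minimal persistent extension of $f$ on $T$; such an extension exists by the assumption $\mathcalligra{p}(f,T)=p$. By definition of $l_p(f,T)$ as a case distinction over the three possible ``shapes'' of a minimal persistent extension (no change, exactly one $0\to 1$ change, or one $0\to 1$ followed by one $1\to 0$ change), the changing number of $g^{\ast}$ on $T$ equals $l_p(f,T)$. Since $g^{\ast}$ is, in particular, an extension of $f$, it is one of the maps over which the minimum defining $l(f,T)=\min_{g} ch(g,T)$ is taken. Hence $l(f,T)\leq ch(g^{\ast},T)=l_p(f,T)$, which is the inequality in Part 1.

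For Part 2, I would simply note that the case distinction defining $l_p(f,T)$ forces $l_p(f,T)\in\{0,1,2\}$ whenever $\mathcalligra{p}(f,T)=p$; so $l_p(f,T)\leq 2$, and combining this with Part 1 gives $l(f,T)\leq l_p(f,T)\leq 2$.

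There is essentially no obstacle: both statements are bookkeeping on the definitions. The only thing worth flagging explicitly is that $l_p$ is defined only when $\mathcalligra{p}(f,T)=p$, which is exactly the assumption of the lemma, so no vacuity issue arises. If desired, one could also give an alternative one-line argument for Part 2 that bypasses Part 1: persistence directly exhibits an extension realizing $f$ with at most $2$ changes, and by definition $l(f,T)$ is bounded above by the changing number of any such extension.
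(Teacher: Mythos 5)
Your proposal is correct and follows essentially the same route as the paper: Part 1 observes that a minimal persistent extension is one of the extensions over which the parsimony minimum is taken, and Part 2 combines Part 1 with the fact that the persistence score is by definition at most $2$. The only subtlety worth noting (which the paper flags in a remark after the lemma) is that the changing number of a minimal persistent extension may include a change on the root edge not counted by $l(f,T)$, which is why one only gets an inequality rather than equality in Part 1 -- but this does not affect your argument.
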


\begin{proof} \leavevmode
\begin{enumerate}
\item By definition, $l(f,T)$ denotes the minimum number of changes required to realize $f$ on $T$. Thus, the number of changes of any persistent extension is at least $l(f,T)$.
\item By part 1 and by the definition of the persistence score, we have $l(f,T)\leq l_p(f,T)\leq 2$.
\end{enumerate}
\end{proof}

\begin{remark}
Note that the first part of the lemma does not state equality for persistent characters, because for instance if $f=1,\ldots,1$, i.e. $f$ is the constant 1-character, we have $l(f,T)=0$ for any tree $T$, whereas we have $l_p(f,T)=1$, because a $0 \rightarrow 1$ change is needed on the root edge. 
\end{remark}

In the following we will establish a connection between the ancestral state sets found by the 1st Fitch phase and persistent extensions, in particular for characters with parsimony score 2. However, we start with characterizing minimal persistent extensions.

\begin{lemma} \label{notadjacent} Let $f$ be persistent on $T$ and let $g$ be a minimal persistent extension  of $f$ on $T$. Then, if $g$ contains a $0 \rightarrow 1$ change on an edge $(u,v)$ and a subsequent $1 \rightarrow 0$ change on an edge $(w,x)$, these two change edges are not adjacent, i.e. $v \neq w$. 
\end{lemma}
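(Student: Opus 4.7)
The plan is to argue by contradiction: assume $v = w$, and then construct a strictly better persistent extension $g'$, contradicting minimality of $g$.

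First I would exploit that since $g$ is persistent, its only changes are the $0 \to 1$ change on $(u,v)$ and the $1 \to 0$ change on $(w,x)$; there are no further $0 \to 1$ or $1 \to 0$ transitions anywhere in $T$ under $g$. Hence $g(u)=0$, $g(v)=1$, $g(x)=0$, and every node outside the subtree rooted at $v$ has $g$-value $0$, while every node inside the subtree rooted at $v$ but outside the subtree rooted at $x$ has $g$-value $1$, and finally every node in the subtree rooted at $x$ has $g$-value $0$.

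Next I would use that $v=w$ has, besides $x$, a second child $y$ (since $T$ is binary and $v$ is internal). By the observation above, $g(y) = 1$, and in fact every node in the subtree rooted at $y$ has $g$-value $1$. Now I define $g'$ to agree with $g$ everywhere except at $v$, where $g'(v) = 0$. Since $v$ is an internal node, $g'$ is still an extension of $f$. Checking the edges incident to $v$: $(u,v)$ becomes $0\to 0$, $(v,x)$ becomes $0\to 0$, and $(v,y)$ becomes a single $0 \to 1$ change; no other edges are affected. So $g'$ has exactly one transition, namely a $0 \to 1$ on $(v,y)$, meaning $g'$ is persistent (with $l_p$-value $1$).

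Comparing changing numbers yields $ch(g') = 1 < 2 \leq ch(g)$, contradicting the assumption that $g$ is a \emph{minimal} persistent extension. Hence $v \neq w$. The main subtlety, and the only place where one has to be careful, is step one: correctly reading off the states of $g$ on the rest of the tree from the fact that a persistent extension admits at most one change of each kind, so that flipping $g(v)$ really does reduce the total number of changes rather than create new ones elsewhere.
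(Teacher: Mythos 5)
Your proof is correct and follows essentially the same route as the paper: both arguments observe that when $v=w$, the leaves in state $1$ are exactly those below the second child of $v$, and then exhibit a persistent extension with a single $0\to 1$ change on the edge to that child (your $g'$, obtained by flipping $g(v)$ to $0$, coincides with the paper's $\tilde{g}$), contradicting minimality of $g$.
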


The proof of this lemma can be found in the appendix. Before we can proceed to characterize all persistent characters with parsimony score 2, we further characterize minimal persistent extensions of such characters in relation to the first phase of the Fitch algorithm.

\begin{lemma} \label{01nodes} Let $f$ be persistent on $T$ and $l(f,T)=2$ and let $g$ be a minimal persistent extension of $f$ on $T$. Then, $g$ has the property that all of its change edges have a source node that is assigned $\{0,1\}$ by the first phase of the Fitch algorithm. 
\end{lemma}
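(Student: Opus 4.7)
The plan is to exploit the extreme rigidity of a minimal persistent extension $g$ when $l(f,T)=2$ to pin down Fitch's first-phase state sets directly. Since $l(f,T)=2$ and $f$ is persistent, Lemma~\ref{MPboundsPersistence} gives $l_p(f,T)=2$, so $g$ realizes $f$ with exactly one $0\to 1$ change on some edge $e_1=(u_1,v_1)$ and exactly one subsequent $1\to 0$ change on some edge $e_2=(u_2,v_2)$. A short preliminary observation rules out $e_1$ being the root edge, since in that case $f$ would be realizable by the single remaining non-root-edge change, contradicting $l(f,T)=2$. By the ``followed by'' clause of persistence together with Lemma~\ref{notadjacent}, $v_1$ is then a strict ancestor of $u_2$.

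The next step is to read off the state $g$ assigns to every node of $T$: outside the subtree rooted at $v_1$ all nodes are in state $0$; inside the subtree of $v_1$ but outside the subtree of $v_2$ all nodes (including every node on the path from $v_1$ to $u_2$ and every off-path maximal pending subtree along that path) are in state $1$; and inside the subtree rooted at $v_2$ all nodes are back in state $0$. In particular, the leaves of $T$ fall into three ``monochromatic'' blocks induced by these three regions, so Fitch's first phase assigns a constant singleton state set throughout each of them.

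Finally, I would propagate Fitch's first phase upward along the spine from $u_2$ to $u_1$. At $u_2$ the two children carry $\{0\}$ and $\{1\}$, yielding $\{0\}\ast\{1\}=\{0,1\}$. Climbing the path from $u_2$ to $v_1$, every ancestor has an off-path sibling contributing $\{1\}$, so the operation $\ast$ yields $\{1\}$ at each such node; in particular $A_{v_1}=\{1\}$. At $u_1$ the children $v_1$ and its off-path sibling carry $\{1\}$ and $\{0\}$, once again producing the union $\{0,1\}$. Hence the source nodes $u_1$ and $u_2$ of the two change edges are both assigned $\{0,1\}$ by the first phase of the Fitch algorithm, as required. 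The only mildly subtle step is excluding the root edge from bearing a change when $l(f,T)=2$; once this is done, the rest of the argument reduces to a direct bottom-up Fitch computation on the structure just described.
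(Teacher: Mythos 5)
Your proof is correct and follows essentially the same route as the paper's: identify the two non-adjacent change edges via Lemma \ref{notadjacent}, read off the full state assignment of $g$ in the three regions, and run the first Fitch phase bottom-up to obtain $\{0,1\}$ at both source nodes (your explicit exclusion of the root edge as a change edge is a point the paper leaves implicit). The only loose spot is the claim that Fitch assigns ``a constant singleton state set throughout each'' of the three regions -- the spine nodes between $v_1$ and $u_2$, and $u_2$ itself, do not get their sets from monochromaticity alone -- but your final propagation along the spine handles exactly these nodes correctly, so the argument stands.
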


Again, the proof of this lemma can be found in the appendix. We will use this lemma subsequently, e.g. in the proof of Proposition \ref{unique}. However, note that the result of Lemma \ref{01nodes} does not necessarily hold if $l(f,T) \leq 1$, because then an extra change might be required on the root edge, which will not be captured by the Fitch algorithm. 

Next, recall that the first phase of the Fitch algorithm, which assigns possible ancestral states to internal nodes, does not necessarily find all possible ancestral states for each node.\footnote{Moreover, it might find too many states, which will never be used in the subsequent phases of the Fitch algorithm, but this is not relevant here (\citet{Fitch, Felsenstein2004}).} For example, if you consider the caterpillar tree $T_4^{cat}$ depicted in Figure \ref{Fig_Fitch} together with the character $f=0101$, the first Fitch phase returns ancestral state sets $\{0,1\}$ for node $v$, $\{0\}$ for node $u$ and $\{0,1\}$ for node $\rho$, respectively. In particular, the parent node of leaf $3$ is assigned state set $\{0\}$. So these sets would not support the choice of assigning 1 to all ancestral nodes -- but this choice would also lead to a changing number of 2 and thus be a most parsimonious extension. This is why the Fitch algorithm requires a correction phase if {\em all} most parsimonious extensions are needed (\citet{Fitch}).

However, while this example shows that the first phase of the Fitch algorithm might miss some most parsimonious extensions for $f$ on $T$ even if $f$ is binary, we will now show that -- if $f$ is persistent -- the first phase of the Fitch algorithm always suffices to reconstruct a minimal persistent extension. In particular, it cannot happen that we miss all minimal persistent extensions when using the first phase of the Fitch algorithm. Moreover, if a minimal persistent extension exists, it is unique. So compared to the fact that there might be many most parsimonious extensions  -- some of which cannot even be found by the first phase of the Fitch algorithm -- the following proposition is remarkable.

\begin{proposition} \label{unique} 
Let $f$ be persistent on $T$. Then, $f$ has a unique minimal persistent extension on $T$.
\end{proposition}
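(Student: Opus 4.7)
The plan is to proceed by a case analysis on the persistence score $l_p(f,T)\in\{0,1,2\}$ and show in each case that the structural constraints of persistence pin the minimal persistent extension $g$ down uniquely. The case $l_p(f,T)=0$ is immediate: since $g(\rho')=0$ and no transitions occur, $g$ must be identically $0$. For $l_p(f,T)=1$, the single $0\to 1$ transition lies either on the root edge -- forcing $f\equiv 1$ and $g\equiv 1$ on $V$ -- or on some internal edge $(u_1,v_1)$. In the latter situation, the absence of any $1\to 0$ transition forces $g=0$ on every strict ancestor of $v_1$ and $g=1$ on $v_1$ and all of its descendants, so the $1$-taxa of $f$ coincide exactly with the leaves of the subtree rooted at $v_1$. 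Hence $v_1$ is uniquely determined as the most recent common ancestor (MRCA) of the $1$-taxa, and $g$ is unique.

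The main case is $l_p(f,T)=2$. Let $g$ be any minimal persistent extension, with $0\to 1$ transition at edge $(u_1,v_1)$ and $1\to 0$ transition at edge $(u_2,v_2)$. By Lemma \ref{notadjacent}, $v_1$ is a \emph{proper} ancestor of $u_2$, and the pair $(v_1,v_2)$ completely determines $g$: it equals $0$ outside the subtree rooted at $v_1$, equals $1$ on the subtree of $v_1$ minus the subtree of $v_2$, and equals $0$ on the subtree of $v_2$. It therefore suffices to show that $(v_1,v_2)$ is uniquely determined by $f$. I would split this into sub-cases (A) $v_1=\rho$ and (B) $v_1\neq\rho$.

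In (A), the $0\to 1$ change is on the root edge, and the leaves of the subtree rooted at $v_2$ are precisely the $0$-taxa of $f$, so $v_2$ is forced to be the MRCA of the $0$-taxa (with all its descendants being $0$-taxa), pinning $v_2$ uniquely. In (B), I would argue that $v_1$ must coincide with $m$, the MRCA of the $1$-taxa. Indeed, if $v_1$ were a strict ancestor of $m$, then the subtree hanging off the child of $v_1$ not on the path to $m$ would consist entirely of $0$-taxa, and since $g=1$ on the subtree of $v_1$ minus the subtree of $v_2$, this sibling subtree must lie inside the subtree of $v_2$; propagating this forces $v_2$ to be exactly that sibling child and the $1$-taxa to coincide with the full leaf set below $m$, but then $l_p(f,T)=1$ could already be realized using only the edge entering $m$, contradicting $l_p(f,T)=2$. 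Hence $v_1=m$, and consequently $v_2$ is the MRCA of the $0$-taxa lying below $m$, again unique. Here Lemma \ref{01nodes} can be invoked to give a parallel Fitch-based description, noting that $u_1$ and $u_2$ must both be $\{0,1\}$ source nodes in the first Fitch phase.

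Finally, I would verify that sub-cases (A) and (B) cannot simultaneously yield valid $l_p=2$ extensions. If (A) applies, the $0$-taxa form a pending subtree rooted at some node $y$; a short case split on the ancestor-descendant relationship between $y$ and $m$ in $T$ (ancestor, descendant, or incomparable) forces either the $1$-taxa to be empty or to form a pending subtree themselves, both contradicting the hypotheses of (B), namely $m\neq\rho$ together with $l_p(f,T)=2$. Hence exactly one sub-case applies, so $(v_1,v_2)$ is unique and therefore $g$ is unique. The principal difficulty I expect lies in sub-case (B): ruling out a strictly higher placement of $v_1$ requires carefully tracking where the $1\to 0$ transition can land and then contradicting the minimality of $l_p(f,T)=2$ via a reduction to a persistent realization with only one transition.
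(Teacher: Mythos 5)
Your proof is correct, but it takes a genuinely different route from the paper's. The paper's proof splits on the parsimony score $l(f,T)\in\{0,1,2\}$ (justified by Lemma \ref{MPboundsPersistence}) and, in the decisive case $l(f,T)=2$, obtains uniqueness from Lemma \ref{01nodes}: the two $\{0,1\}$ union nodes of the first Fitch phase are canonical objects depending only on $f$ and $T$, every minimal persistent extension must place the sources of its change edges exactly there, and all remaining states are then forced. You instead split on the persistence score $l_p(f,T)$ and pin the change edges down by MRCA arguments on the character itself ($v_1$ as the MRCA of the $1$-taxa, $v_2$ as the MRCA of the $0$-taxa below it), using Lemma \ref{notadjacent} essentially and Lemma \ref{01nodes} only as an aside. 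Your route is more elementary (no appeal to Fitch at all), but it pays a price the paper's does not: your sub-cases (A) and (B) are properties of the \emph{extension} rather than of the pair $(f,T)$, so their mutual exclusivity must be proved separately --- the step you sketch at the end --- whereas the paper's cases, indexed by the invariant $l(f,T)$, are disjoint for free (your (A) is precisely the paper's $l(f,T)=1$ situation where the inner-edge change is a $1\rightarrow 0$ change, and your (B) is the paper's $l(f,T)=2$ case). Your exclusion argument does go through: in (A) the $1$-taxa are either empty or the full leaf set of a pending subtree, while in (B) they are a proper, non-pending subset of the leaves below $m\neq\rho$; when writing it out, make explicit that Lemma \ref{notadjacent} places $v_2$ at least two edges below $v_1$, since that is what rules out $v_2$ being the sibling child of $v_1$ in your strict-ancestor argument.
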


\begin{proof}
By Lemma \ref{MPboundsPersistence}, Part 2, we need to distinguish three cases. 
\begin{enumerate}
\item If $l(f,T)=0$, we know that $f=0,\ldots,0$ or $f=1,\ldots,1$. In the first case, assigning state 0 to all nodes of $T$ leads to an extension that requires no changes. This must be minimal, and it is also persistent. As there is no other extension requiring no changes, this assignment is also unique. On the other hand, if $f=1,\ldots,1$, assigning state 1 to all nodes of $T$ -- except for $\rho'$, which by definition must be assigned 0 -- will lead to an extension with one $0 \rightarrow 1$ change on the root edge but no changes otherwise. Note that as $f$ employs 1 as a state, there must be a $0 \rightarrow 1$ change somewhere in $T$, which is why one change is already best possible. Moreover, as there is no other extension requiring only one change, this assignment is unique.
\item If $l(f,T)=1$, \red{we can conclude that there is a unique most parsimonious extension of $f$ on $T$. To see this, note that $l(f,T)=1$ implies that $f$ requires a change on precisely one edge of $T$.} In particular, this edge subdivides $T$ into two subtrees, one of which only contains leaves in state 0, whereas the other one contains only leaves in state 1. If we assign the respective state to all inner nodes, too, we end up with one subtree with only nodes in state 0 and the other subtree with only nodes in state 1. This gives us the unique most parsimonious extension in this case. We now have to show that it is also persistent.

Note that as  $l(f,T)=1$, $f$ cannot be constant, so we know that $f$ employs state 1 and thus at least one change is necessary. So if the change on the inner edge corresponding to $f$ is a $0 \rightarrow 1$ change, we are done -- our extension is persistent with $l_p(f,T)=1$. If, on the other hand, the change is a $1 \rightarrow 0$ change, we need to add a $0 \rightarrow 1$ change to the root edge in order to make the extension persistent, which leads  to $l_p(f,T)=2$. Note that there is no other edge where we could add this change, as any other choice would require additional changes, which are not permitted. So in any case, the extension we found is persistent, minimal and unique.

\item If $l(f,T)=2$ and $f$ is persistent, we know by Lemma \ref{01nodes} that the first phase of the Fitch algorithm assigns $\{0,1\}$ to two nodes which -- by definition of persistence --  lie on one path from the root to some leaf (i.e. one of them is an ancestor of the other one), and we know that any minimal persistent extension of $f$ must have a $0 \rightarrow 1$ change starting at the $\{0,1\}$ node closer to the root, say $u$, and a $1 \rightarrow 0$ change starting at the $\{0,1\}$ node further away from the root, say $w$. Moreover, we know by Lemma \ref{notadjacent} that the two $\{0,1\}$ nodes are not adjacent. As $l(f,T)=2$, we can thus conclude that these two are the only two $\{0,1\}$ nodes (because the scenario that one $\{0,1\}$ node is the parent of two other $\{0,1\}$ nodes cannot happen).

Now let us consider $w$ first. We know that $w$ is assigned $\{0,1\}$ and that $w$ is the source of the $1 \rightarrow 0$ edge, say $(w,x)$. So $w$ must be in state 1 and $x$ in state 0 in any minimal persistent extension. Analogously, $u$ as the source of the $0 \rightarrow 1$ edge, say $(u,v)$, must therefore be in state 0 and $v$ in state 1. 

As $f$ is persistent, we know that any persistent extension $g$ will be such that all nodes descending from $x$ must be in state 0 (because after the $1 \rightarrow 0$ change, no more changes are possible), and all nodes descending from $v$ but not from $x$ must be in state 1. Moreover, all nodes of $T$ that are not descendants of $v$ must be in state 0. So altogether, this induces only one minimal persistent extension, because there is no freedom to make alternative choices, which completes the proof.
\end{enumerate} 
\end{proof}

We have now analyzed minimal persistent extensions in terms of the first phase of the Fitch algorithm and have seen that if a character is persistent, it has a unique minimal persistent extension. Now, we are in the position to fully characterize all persistent characters and prove the main result of this section, namely Theorem \ref{characterize}. We only show the proof for Part 3 of Theorem \ref{characterize}. The proofs of Parts 1 and 2 are relatively straightforward and can be found in the appendix.

\begin{proof}[Proof of Theorem \ref{characterize}, Part 3]  \leavevmode
Let $l(f,T) = 2$. First we assume that $f$ is persistent and show that then conditions (a) and (b) hold. As $l(f,T)=2$ and $f$ is persistent on $T$, by Lemma \ref{MPboundsPersistence} and the definition of persistence, we have $l_p(f,T)=2$. By Proposition \ref{unique}, $f$ has a unique minimal persistent extension on $T$. This is in fact the only persistent extension, because as $l_p(f,T)=2$, we require at least two changes, which is why no extension with more changes can be persistent. We now show properties (a) and (b).
	
	\begin{enumerate}
	\item [(a)] We first need to show that one of the two $\{0,1\}$ union nodes found during the first phase of the Fitch algorithm is an ancestor of the other $\{0,1\}$ union node. As in the last part of the proof of Proposition \ref{unique}, this follows from Lemma \ref{01nodes}, because as $l(f,T)=2$, we know that any extension of $f$ requires at least two changes, and as $f$ is persistent, we know that this can be achieved by a $0 \rightarrow 1$ change and a subsequent $1 \rightarrow 0$ change. So such a persistent extension is automatically minimal and therefore, by Lemma \ref{01nodes}, the source nodes of both change edges correspond to the two $\{0,1\}$ union nodes of the 1st phase of the Fitch algorithm. Therefore, as the $1 \rightarrow 0$ change can by definition only affect descendants of the $0 \rightarrow 1$ change, one of the two  $\{0,1\}$ nodes must be an ancestor of the other one.

	\item [(b)] Now we want to show that the ancestral state sets found by the first phase of the Fitch algorithm are such that all nodes that are descendants of $v$ but not of $w$ are assigned state set $\{1\}$ and all nodes that are not descendants of $v$ are assigned state set $\{0\}$ (Figure \ref{Fig_characterize}). As we assume that there are exactly two $\{0,1\}$ union nodes, namely $u$ and $w$, and have shown that one is an ancestor of the other (in our case, without loss of generality, $u$ is an ancestor of $w$; note, however, that $u$ cannot be a direct ancestor of $w$ due to Lemma \ref{notadjacent}, which directly implies that there cannot be a third $\{0,1\}$ node), by Lemma \ref{01nodes} we know that the $0 \rightarrow 1$ change must have source node $u$ and the $1 \rightarrow 0$ change must have source node $w$. Moreover, as node $v$ lies on the path from $u$ to $w$, all nodes that are descendants of $v$ but not of $w$ have to be in state 1 (and will thus be assigned state set $\{1\}$), because they are affected by the $0 \rightarrow 1$ change starting on the edge $(u,v)$, but not by the $1 \rightarrow 0$ change starting in $w$. On the other hand, all nodes that are not descendants of $v$ and are thus not affected by the $0 \rightarrow 1$ change, have to be in state 0 and are  assigned state set $\{0\}$ by the first phase of the Fitch algorithm.
	\end{enumerate}
	
So if $f$ is persistent, then all of the conditions hold. 	

We now assume that both conditions hold and prove that this is sufficient for $f$ to be persistent:

If one of the two union nodes that are assigned $\{0,1\}$ by the first Fitch phase is the ancestor of the other one (with at least one node in-between), then we can use Lemma \ref{01nodes} to conclude that the two change edges must start at these nodes. However, this alone would not be sufficient, because for persistence, the changes have to occur in the right order. However, as we also assume that the ancestral state sets found by the first phase of the Fitch algorithm are as claimed in condition (b), we can directly construct a minimal persistent extension. For all nodes for which the first phase of the Fitch algorithm makes an unambiguous choice, i.e. $\{0\}$ or $\{1\}$, we assign the corresponding state to the respective node. 
In particular, in one of the two maximal pending subtrees of $w$ we assign state $0$ to all nodes, while we assign state $1$ to all nodes in the other maximal pending subtree of $w$.
Moreover, we set $g(u)=0$ and $g(w)=1$. This gives us a persistent extension, requiring exactly two changes: a $0 \rightarrow 1$ change on the edge $(u,v)$ and a $1 \rightarrow 0$ change on an edge $(w,x)$ with source $w$. Thus, $l_p(f,T)=2$ and we can conclude that $f$ is persistent on $T$. This completes the proof.
\end{proof}

\begin{figure}[htbp]
	\centering
	\includegraphics[scale=0.1]{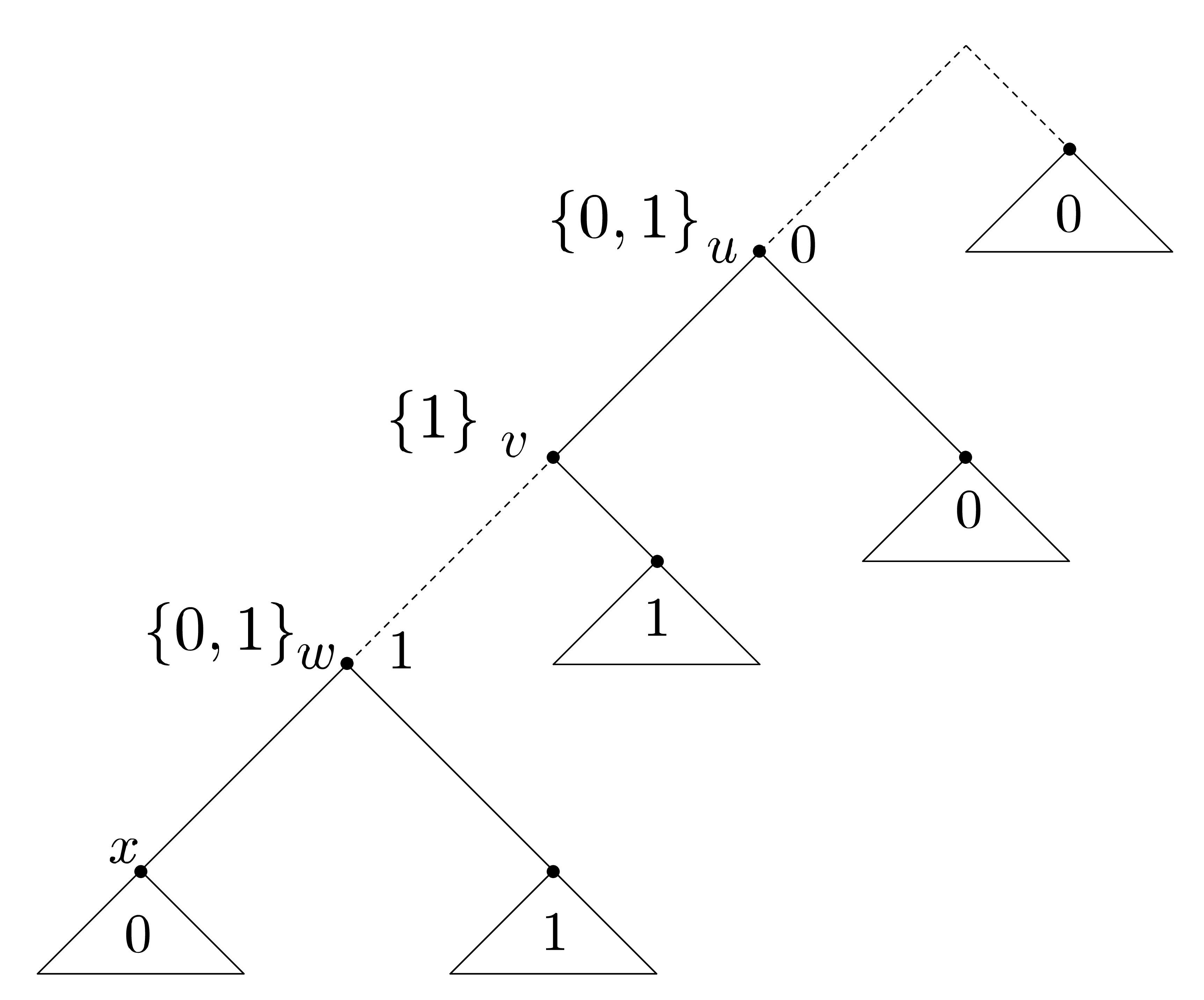}
	\caption{Conditions of Part 3 of Theorem \ref{characterize}. There are two nodes assigned state set $\{0,1\}$ by the first phase of the Fitch algorithm, namely $u$ and $w$, where $u$ is an ancestor of $w$, but $u$ and $w$ are not adjacent. Of the two maximal pending subtrees of $w$, one only has nodes that are assigned state set $\{0\}$ and the other only has nodes that are assigned state set $\{1\}$. Moreover, all nodes that are descendants of $v$ but not of $w$ are assigned set $\{1\}$, and all nodes that are not descendants of $v$ are assigned state set $\{0\}$.}
	\label{Fig_characterize}
\end{figure}

Note that if Condition (a) of Part 3 in Theorem \ref{characterize} holds and we know that $l(f,T)=2$, then the conditions in (b) imply that there is a most parsimonious extension $g$ of $f$ coinciding with the minimal persistent extension that we have constructed in the proof of Part 3 of Theorem \ref{characterize}. To be precise, we have the following corollary:
\begin{corollary} \label{FitchSetsMostPars}
If we have $l(f,T)=2$ and Conditions (a) and (b) of Part 3 in Theorem \ref{characterize} hold, then there is a most parsimonious extension $g$ of $f$ such that
	\begin{itemize}
	\item $u$ is assigned state 0 and $w$ is assigned state 1,
	\item of the two maximal pending subtrees of $w$, one only has nodes in state 0 and the other one only has nodes in state 1,
	\item all nodes that are descendants of $v$ but not of $w$ are assigned state 1 (in particular, all nodes on the path from $v$ to $w$ are assigned state 1),
	\item all nodes that are not descendants of $v$ are assigned state 0.
	\end{itemize}
\end{corollary}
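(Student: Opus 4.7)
The plan is to observe that the corollary is essentially a repackaging of the minimal persistent extension constructed in the ``if'' direction of the proof of Theorem \ref{characterize}, Part 3, together with the fact that under the given hypotheses this extension coincides with a most parsimonious one. So the proof will reduce to invoking the earlier machinery rather than building anything new.

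First I would argue that under the stated hypotheses $f$ is persistent on $T$. Indeed, conditions (a) and (b) of Part 3 of Theorem \ref{characterize} are exactly the sufficient conditions of that theorem (together with $l(f,T) = 2$), so Theorem \ref{characterize} immediately yields persistence of $f$ on $T$. Then by Lemma \ref{MPboundsPersistence}(1) we have $l_p(f,T)\geq l(f,T)=2$, while by definition $l_p(f,T)\leq 2$, so $l_p(f,T)=2=l(f,T)$. Consequently, every minimal persistent extension is automatically most parsimonious, and by Proposition \ref{unique} there is a unique such extension $g$.

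Next I would describe $g$ explicitly and check it satisfies the four bullets. By Lemma \ref{01nodes}, the two change edges of $g$ have source nodes equal to the two $\{0,1\}$ union nodes $u$ and $w$ of the first phase of the Fitch algorithm; by (a) and Remark \ref{atleastoneinbetween}, $u$ is a (strictly non-adjacent) ancestor of $w$. Since the $0\rightarrow 1$ change must precede the $1\rightarrow 0$ change along any root-to-leaf path, the former must start at $u$ and the latter at $w$, forcing $g(u)=0$ and $g(w)=1$. The only further changes being the two described, all non-descendants of $v$ stay in state $0$, all descendants of $v$ that are not descendants of $w$ are in state $1$, and the two maximal pending subtrees of $w$ must be monochromatic (one all-$0$, one all-$1$) because no further change is permitted. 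This is exactly the assignment listed in the corollary, so $g$ has the desired form.

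The main (mild) obstacle is making sure that this $g$ is indeed a valid extension of $f$, i.e. that $g(i)=f(i)$ for every leaf $i\in X$. This is where condition (b) does the real work: since the first Fitch phase assigns the singleton $\{0\}$ or $\{1\}$ to each leaf that is not one of $u$ or $w$, and these assignments agree with our choice of $g$ on the leaves by definition of the Fitch procedure, $g$ restricts to $f$ on $X$. Hence $g$ is an extension of $f$ with changing number $2=l(f,T)$, i.e. a most parsimonious extension with the four stated properties, completing the proof.
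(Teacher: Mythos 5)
Your proof is correct and follows essentially the same route as the paper, which simply cites the explicit extension constructed in the ``if'' direction of the proof of Theorem \ref{characterize}, Part 3; you merely spell out that construction and the observation that $l_p(f,T)=2=l(f,T)$ forces both changes into $T$, making the minimal persistent extension most parsimonious. The only cosmetic slip is the phrase about leaves ``that are not one of $u$ or $w$'' --- $u$ and $w$ are internal nodes, and every leaf is assigned the singleton $\{f(i)\}$ by the first Fitch phase, which is all you need there.
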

\begin{proof} The corollary is a direct consequence of the construction given in the end of the proof of Theorem \ref{characterize}.
\end{proof}

Thus, even though there might be other most parsimonious extensions, $g$ as described in Corollary \ref{FitchSetsMostPars} is one of them and coincides with the unique minimal persistent extension of $f$ constructed in the proof of Part 3 of Theorem \ref{characterize}. Thus, there might be several most parsimonious extensions of $f$, but only one of them (namely $g$) is also a minimal persistent one. \\

Summarizing the above, we have seen that we can calculate the persistence status of a character $f$ on tree $T$ solely based on the first phase of the Fitch algorithm, because we only require the ancestral state sets found by the first phase of the Fitch algorithm (which in turn give us the parsimony score of $f$ on $T$). Note that the crucial idea of deciding whether a character $f$ with parsimony score 2 is persistent or not used in Theorem \ref{characterize} was to consider the distribution of the ancestral state sets $\{0,1\}, \, \{0\}$ and $\{1\}$ across the tree. In particular one of the two  $\{0,1\}$ union nodes must be an ancestor of the other $\{0,1\}$ union node, while all nodes on the path between them (of which at least one must exist) are $\{1\}$ nodes. This leads to an algorithm which can be found in the appendix (Algorithm \ref{CheckPersistence}).

The distribution of the ancestral state sets $\{0,1\}, \, \{0\}$ and $\{1\}$ across the tree, will also help us in the following section, where we will be concerned with counting the number of characters that are persistent on a given tree $T$. The idea of using the set assignments by the first Fitch phase is particularly helpful when counting the number of persistent characters with parsimony score 2.

\subsection{On the impact of the tree shape on the number of persistent characters}\label{sec_shape}
We will now turn to the relationship between the shape of a tree and its persistent characters. 
The main aim of this section is to show that the more imbalanced a tree is, the more persistent characters it has.
In particular, we will show that there is a direct relationship between the so-called Sackin index of a phylogenetic tree (\citet{Sackin}) and its number of persistent characters. This is a surprising combinatorial result. It is particularly interesting and might be of practical impact as the Sackin index of a phylogenetic tree is very easy to calculate, while the number of persistent characters of a tree is not as straightforward to see per se.

In the following, we denote by $\mathcal{P}_i(T)$ the number of persistent characters of $T$ with parsimony score $i$ and by $\mathcal{P}(T)$ the number of all binary characters that are persistent on $T$. 

First note that, given a rooted binary phylogenetic tree $T$ with $n$ leaves, by Part 2 of Theorem \ref{characterize}, all characters with parsimony score at most 1 are persistent on $T$. So this gives us the two constant characters $0,\ldots,0$ and $1,\ldots,1$, which both have parsimony score 0, i.e. $\mathcal{P}_0(T)=2$. Moreover, considering parsimony score 1 gives us two times the number of characters corresponding to the $n-3$ non-trivial splits induced by \red{the inner edges of} $T$\footnote{\red{Note that $T$ has $n-2$ inner edges but the two edges incident to the root induce the same split; so in total, the inner edges of $T$ induce $n-2-1=n-3$ non-trivial splits.}} (because for each such split we have to consider $f$ and $\bar{f}$) and two times the number of characters corresponding to the $n$ trivial splits induced by $T$ (because, again, we have to consider $f$ and $\bar{f}$). So this implies $\mathcal{P}_1(T)= 2(n-3)+2n = 4n-6$. In total, we have $\mathcal{P}_0(T)+\mathcal{P}_1(T)=2+4n-6 = 4n-4$ persistent characters on $T$ -- regardless of any specific properties of $T$ like e.g. its tree shape. 

Moreover, by the second part of Lemma \ref{MPboundsPersistence}, we know that if $l(f,T)\geq 3$, $f$ cannot be persistent on $T$, which means $\mathcal{P}_i(T)=0$ for all $i\geq 3$. So in order to count all persistent characters on a given tree $T$, only the ones with parsimony score 2 are still missing. However, their number depends on the tree shape of $T$ as can be seen when considering the following example.

\begin{example}
Consider the two trees $T_1$ and $T_2$ depicted in Figure \ref{Fig_TreesP2}. Using the results of the previous section, in particular Proposition \ref{unique} and Theorem \ref{characterize}, we know that each persistent character with parsimony score 2 has a unique minimal persistent extension that has the property that the first phase of the Fitch algorithm will assign the two $\{0,1\}$ union sets such that one is the ancestor of the other one, but the two nodes may not be adjacent. For $T_1$ there is one such choice, leading to the two characters $f_1=1010$ and $f_2=0110$. Note that the two choices result from the fact that the roles of the two subtrees pending on the lowermost $\{0,1\}$ node can be interchanged. Now for $T_2$, there is no such choice, because there is no path from the root to any leaf employing at least three inner nodes. So $T_2$ has no persistent character with parsimony score 2. 
\end{example}

\begin{figure}[htbp]
	\centering
	\includegraphics[scale=0.1]{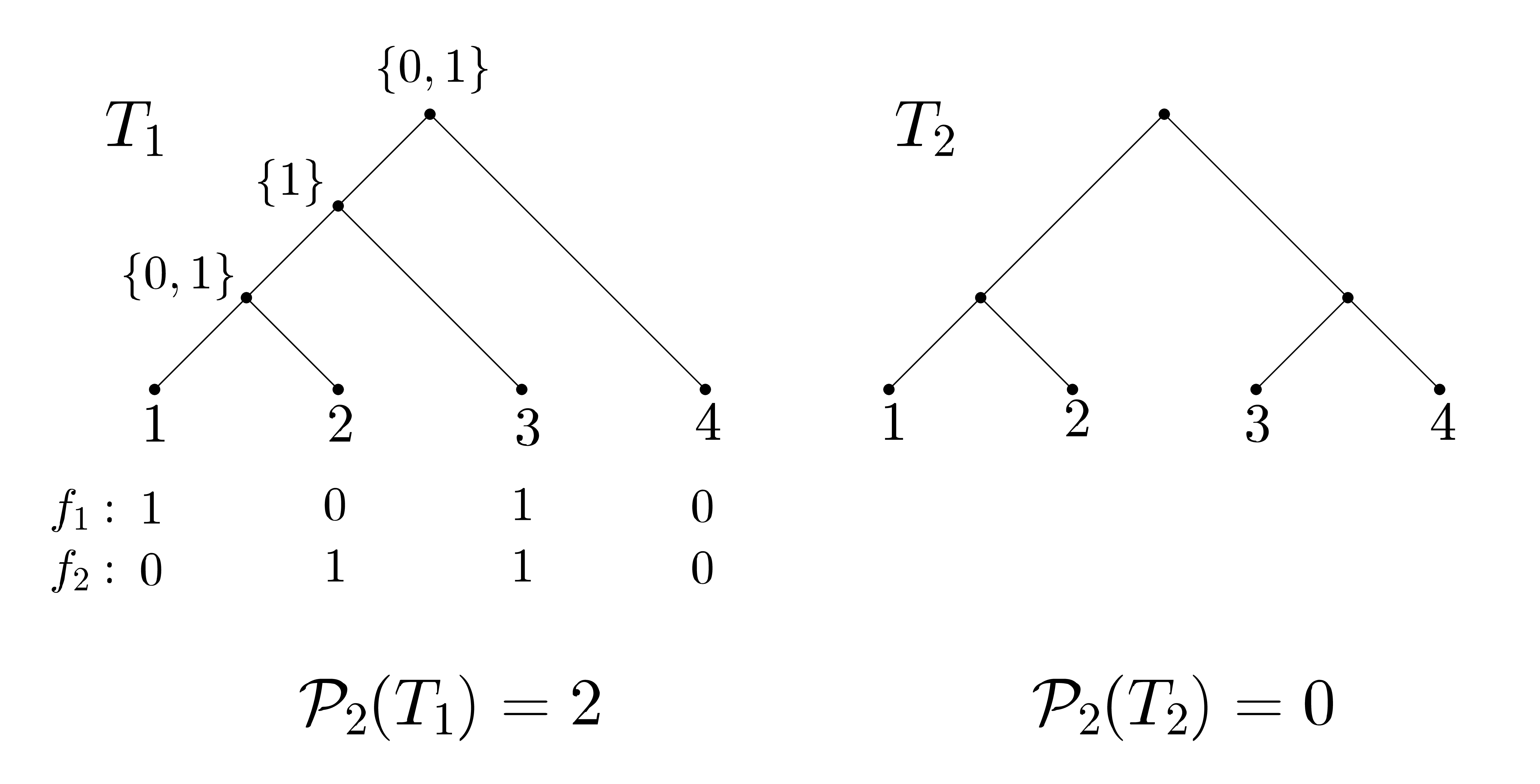}
	\caption{Trees $T_1$ and $T_2$ with $\mathcal{P}_2(T_1)=2$ and $\mathcal{P}_2(T_2) = 0$.}
	\label{Fig_TreesP2}
\end{figure}

Now, recall that $\mathcal{S}(T)$ denotes the so-called Sackin index of $T$, which can be used to evaluate the balance of a tree. 
In the previous example, tree $T_1$, for which we have $\mathcal{S}(T_1)=2+3+4=9$, has {\em more} persistent characters than $T_2$, for which we have $\mathcal{S}(T_2)=2+2+4=8$, so $T_1$ with more persistent characters also has a higher Sackin index. In the following, we will generalize this result and show that indeed no other tree shape has as many persistent characters as the caterpillar tree, which is the unique maximizer of the Sackin index (\citet{Fischer2018}). More importantly, we will show that, if the trees are ordered according to their Sackin index from balanced to very imbalanced (caterpillar), then the number of persistent characters per tree also increases. In other words, the more imbalanced a tree is, the more persistent characters it has. Proving the following theorem, which relates the number of persistent characters to the Sackin index, is thus the main aim of the present section.

\begin{theorem} \label{P2}
Let $T$ be a rooted binary phylogenetic tree with $n \geq 2$ leaves and root $\rho$. Then, the number $\mathcal{P}_2(T)$ of characters $f$ such that $l(f,T)=2$ and $\mathcalligra{p}(f,T)=p$ can be calculated as follows: 
$$\mathcal{P}_2(T)= 2\mathcal{S}(T)-6n+8.$$

Moreover, the total number $\mathcal{P}(T)$ of persistent characters for $T$ can be calculated as: 
$$ \mathcal{P}(T) = 2\mathcal{S}(T) - 2n +4.$$  
\end{theorem}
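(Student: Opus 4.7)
The approach is to count the persistent characters with $l(f,T)=2$ by setting up a bijection with certain configurations in $T$ (using Theorem \ref{characterize}(3)), and then to convert the resulting sum into $\mathcal{S}(T)$ via a standard double counting. The second formula will then follow at once by adding the contributions of the smaller parsimony scores that have already been computed in the paragraph preceding the theorem.

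By Theorem \ref{characterize}(3), Proposition \ref{unique}, and Remark \ref{atleastoneinbetween}, the persistent characters $f$ with $l(f,T)=2$ are in bijection with triples $(u,w,\epsilon)$, where $w\in\mathring{V}(T)$, $u$ is a strict ancestor of $w$ that is not its direct parent, and $\epsilon\in\{0,1\}$ records which of the two maximal pending subtrees of $w$ is the all-zero one (the other being all-one). Concretely, leaves in the subtree rooted at $v$ (the child of $u$ on the path to $w$) but outside the subtree at $w$ receive label $1$; the two subtrees of $w$ receive labels $\epsilon$ and $1-\epsilon$; all other leaves receive $0$. One verifies that the first Fitch phase applied to this $f$ returns exactly two $\{0,1\}$-union nodes, namely $u$ and $w$, in the configuration of Theorem \ref{characterize}(3), so $f$ is persistent with $l(f,T)=2$, and since the triple is readable off from $f$'s Fitch output, distinct triples yield distinct characters. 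For each $w\in\mathring{V}(T)$ of depth $\delta_w$, the valid choices of $u$ are exactly the $\delta_w-1$ strict ancestors of $w$ other than its direct parent, so counting yields
\[ \mathcal{P}_2(T) \;=\; 2\sum_{w\in\mathring{V}(T),\ \delta_w\geq 2}(\delta_w-1). \]

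Next I would express $\sum_{w}\delta_w$ in terms of $\mathcal{S}(T)$ by double counting, reading $\delta_w$ as the number of proper internal ancestors of $w$:
\[ \sum_{w\in\mathring{V}(T)}\delta_w \;=\; \sum_{a\in\mathring{V}(T)} |\{w\in\mathring{V}(T): a \text{ is a proper ancestor of } w\}| \;=\; \sum_{a\in\mathring{V}(T)}(n_a-2) \;=\; \mathcal{S}(T)-2(n-1), \]
using that the subtree rooted at an internal node $a$ has $n_a-1$ internal nodes (hence $n_a-2$ proper internal descendants) and that $T$ has $n-1$ internal nodes in total. Substituting this back and separately absorbing the trivial root contribution ($\delta_\rho=0$, contributing nothing to $\mathcal{P}_2(T)$) produces $\mathcal{P}_2(T)=2\mathcal{S}(T)-6n+8$ after a short algebraic simplification. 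The second formula is then immediate: combining with the counts recalled before the theorem, namely $\mathcal{P}_0(T)=2$, $\mathcal{P}_1(T)=4n-6$, and $\mathcal{P}_i(T)=0$ for $i\geq 3$ (by Lemma \ref{MPboundsPersistence}), yields $\mathcal{P}(T)=2\mathcal{S}(T)-2n+4$. The main obstacle, I expect, is the clean verification of the bijection in the first step---in particular, checking that varying $u$ with $w$ fixed genuinely changes $f$; this holds because a different $u$ gives a different $v$ and therefore a different set of leaves labeled $1$ in the ``region above $w$''. Once the bijection is secured, the rest is the double-counting identity and elementary algebra.
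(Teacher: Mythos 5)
Your proposal is correct, and the first half (the bijection between persistent score-$2$ characters and triples $(u,w,\epsilon)$) is exactly the content of the paper's Lemma \ref{01induces2} combined with Theorem \ref{characterize}(3) and Remark \ref{atleastoneinbetween}: each unordered, non-adjacent ancestor--descendant pair of internal nodes contributes precisely two persistent characters. Where you diverge is in how the pair count is converted into the Sackin index. The paper fixes the \emph{upper} node $u$ and counts admissible lower nodes as $n_u-2-d_u$, which forces it to establish the auxiliary identity $\sum_{u\in\mathring{V}(T)} d_u = n-2$ (Lemma \ref{d_u}, proved by induction on cherries in the appendix). You instead fix the \emph{lower} node $w$ and count its $\delta_w-1$ admissible ancestors, then convert $\sum_w \delta_w$ into $\sum_a (n_a-2)=\mathcal{S}(T)-2(n-1)$ by double counting ancestor--descendant pairs of internal nodes. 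Your arithmetic checks out: $2\bigl[(\mathcal{S}(T)-2(n-1))-(n-2)\bigr]=2\mathcal{S}(T)-6n+8$. The two routes are dual bookkeepings of the same count; yours has the advantage of dispensing with Lemma \ref{d_u} entirely and connecting to $\mathcal{S}(T)$ in one step, while the paper's version keeps the per-node contribution $n_u-2-d_u$ local to each $u$, which makes the appeal to the definition of the Sackin index more immediate. The only place where you are lighter than the paper is the phrase \enquote{one verifies that the first Fitch phase \ldots returns exactly two $\{0,1\}$-union nodes, namely $u$ and $w$}; this verification is the bulk of the paper's proof of Lemma \ref{01induces2} (ruling out the six non-persistent characters among the eight with union nodes at $u$ and $w$), so a complete write-up would need to carry that out, but it is a routine check and not a gap in the argument.
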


Before we can prove Theorem \ref{P2}, we require two lemmas. First, Lemma \ref{MinPersAlsoMostPars} shows that a minimal persistent extension is always also a most parsimonious extension when the root edge is ignored. 

\begin{lemma} \label{MinPersAlsoMostPars} Let $f$ be persistent on $T$. Let $g$ be a minimal persistent extension of $f$ on $T$ (including $\rho'$). Then, $g$ is also a most parsimonious extension on $T$ (excluding $\rho'$).
\end{lemma}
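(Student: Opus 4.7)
My plan is to verify by case analysis on $l(f,T)$ that $g$ induces exactly $l(f,T)$ state changes on the edges of $T$ (ignoring the root edge). Write $ch_T(g)$ for this number; then, since $g(\rho')=0$ by convention, the persistence-cost satisfies $ch(g) = ch_T(g) + g(\rho)$. Because $g$ is an extension of $f$, we already have $ch_T(g) \geq l(f,T)$, so the task reduces to showing $ch_T(g) \leq l(f,T)$. By Lemma \ref{MPboundsPersistence} both $l(f,T)$ and $l_p(f,T) = ch_T(g) + g(\rho)$ lie in $\{0,1,2\}$.

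The cases $l(f,T) \in \{0,2\}$ fall out quickly. If $l(f,T) = 2$, then $ch_T(g) \leq l_p(f,T) \leq 2 = l(f,T)$ immediately. If $l(f,T) = 0$, then $f$ is constant: for $f = 0\ldots 0$ the all-zero map is persistent with persistence-cost $0$, so by Proposition \ref{unique} it must coincide with $g$ and $ch_T(g) = 0$; for $f = 1\ldots 1$ the extension that assigns $1$ to every node of $T$ is persistent with persistence-cost $1$ (one change on the root edge), so again by Proposition \ref{unique} it coincides with $g$ and $ch_T(g) = 0$.

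The main obstacle is the case $l(f,T)=1$, which I plan to attack by contradiction. Assuming $ch_T(g) \geq 2$ and combining with $ch_T(g) \leq l_p(f,T) \leq 2$ forces $ch_T(g) = l_p(f,T) = 2$ and $g(\rho) = 0$; since $g$ is persistent, its two changes on $T$ then form a $0 \rightarrow 1$ transition followed by a $1 \rightarrow 0$ transition. Now I pick any most parsimonious extension $g'$, which has exactly one change on $T$, and split on $g'(\rho)$. If $g'(\rho) = 0$, then the single internal change of $g'$ must be a $0 \rightarrow 1$ transition, making $g'$ itself a persistent extension with persistence-cost $1$, contradicting $l_p(f,T) = 2$. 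If $g'(\rho) = 1$, then the single internal change must be $1 \rightarrow 0$, and together with the induced $0 \rightarrow 1$ change on the root edge this exhibits $g'$ as a persistent extension of cost $2 = l_p(f,T)$; hence $g'$ is another minimal persistent extension, but $g'(\rho) = 1 \neq 0 = g(\rho)$ contradicts the uniqueness asserted in Proposition \ref{unique}.

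The hard part, as above, is this final subcase: Lemma \ref{MPboundsPersistence} alone does not exclude the scenario in which $g$ implements two genuinely internal changes while a cheaper most parsimonious extension sits with $\rho = 1$. The crucial lever that closes the argument is Proposition \ref{unique}, combined with the elementary observation that a single-change extension of a non-constant character on $T$ is rigidly of one of two types, $0 \rightarrow 1$ or $1 \rightarrow 0$, depending on the root state it assigns.
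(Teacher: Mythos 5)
Your proof is correct, but it takes a genuinely different route from the paper's. The paper organizes the case analysis by the persistence score $l_p(f,T)\in\{0,1,2\}$ rather than by $l(f,T)$, and its hard case is the one where $l_p(f,T)=2$ and both change edges of $g$ lie inside $T$; there it argues \emph{structurally} that such an extension forces the leaves into three monochromatic blocks (state $1$ below $v$ but not below $w$, state $0$ below $x$, state $0$ outside the subtree of $v$), so $f$ cannot correspond to a single edge of $T$ and hence $l(f,T)\geq 2$, which together with Lemma \ref{MPboundsPersistence} gives $l(f,T)=2$. You attack the mirror image of the same scenario ($l(f,T)=1$ but $ch_T(g)=2$) and close it by contradiction via Proposition \ref{unique}: a one-change most parsimonious extension is rigidly either a $0\rightarrow 1$ or a $1\rightarrow 0$ type depending on its root state, so it is itself persistent of cost $1$ or $2$, which either undercuts $l_p(f,T)=2$ or produces a second minimal persistent extension with a different root state, violating uniqueness. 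Both arguments are sound and rest on results proved before this lemma, so there is no circularity; the paper's version is self-contained and does not need the uniqueness proposition, whereas yours trades the explicit structural description of the character for a shorter contradiction that leans on Proposition \ref{unique} and on the elementary dichotomy for single-change extensions. Your handling of the remaining cases ($l(f,T)=0$ via uniqueness, $l(f,T)=2$ via the inequality $ch_T(g)\leq l_p(f,T)\leq 2$) is also correct, if slightly more reliant on Proposition \ref{unique} than the paper's direct constructions.
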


\begin{proof} If $l_p(f,T)=0$, then by definition, $f=0,\ldots,0$, and thus $l(f,T)=0$. The only persistent extension requiring zero changes on $T$ is then the one assigning state 0 to all internal nodes of $T$. This is at the same time a most parsimonious extension, so we are done.

Now, if $l_p(f,T)=1$, by definition $f$ requires one $0 \rightarrow 1$ change, so there are two cases: either $f = 1,\ldots,1$, i.e. $f$ is constant, or $f$ is not constant. In the first case, as $f$ employs state 1, one change is unavoidable for persistence. But the only minimal persistent extension assigns a $0 \rightarrow 1$ change to the root edge, i.e. all inner nodes of $T$ (excluding $\rho'$) are assigned state 1. This assignment is also most parsimonious, because on $T$ (excluding $\rho'$) this would lead to no changes, which corresponds to the parsimony score of $f=1,\ldots,1$ on $T$, which is $l(f,T)=0$. 

On the other hand, if $l_p(f,T)=1$ but $f$ is not constant, we have $l(f,T)\geq 1$. By Lemma \ref{MPboundsPersistence}, we know that $l(f,T)\leq l_p(f,T)=1$. So, altogether, $l(f,T)=1$. So any extension realizing one $0 \rightarrow 1$ change on $T$ will also be most parsimonious, as there cannot be any extension with fewer changes.

Last, if $l_p(f,T)=2$, any persistent extension by definition requires a $0 \rightarrow 1$ change followed by a $1 \rightarrow 0$ change. In particular, this implies that $f$ cannot be constant (the two constant characters have persistence scores of 0 and 1, respectively), and thus $l(f,T)\geq 1$. Given a minimal persistent extension $g$ of $f$, we now distinguish two cases: either the extension requires the $0 \rightarrow 1$ change on the root edge or not. If the $0 \rightarrow 1$ change happens on the root edge, there is only one change in $T$ (excluding $\rho')$. So this extension must be most parsimonious as $l(f,T)\geq 1$.

On the other hand, if both change edges, say $(u,v)$ and $(w,x)$, are in $T$, we cannot have $l(f,T)=1$, because we know that $f$ on $T$ then looks as follows: $u$ has an incident edge not on the path to $w$, whose descending leaves are all in state 0, and both $v$ and $w$ have an incident edge whose descending leaves are all in state 1, but $x$ has only descending leaves in state 0 (Figure \ref{Fig_MinPersAlsoMostPars}). So this character $f$ does not correspond to an edge of $T$, and thus we have $l(f,T)\geq 2$. Then, as $l_p(f,T)=2$ and using Lemma \ref{MPboundsPersistence}, we conclude $l(f,T)=2$, which in turn implies that the given persistent extension $g$ is most parsimonious. This completes the proof.
\end{proof}

\begin{figure}[htbp]
	\centering
	\includegraphics[scale=0.1]{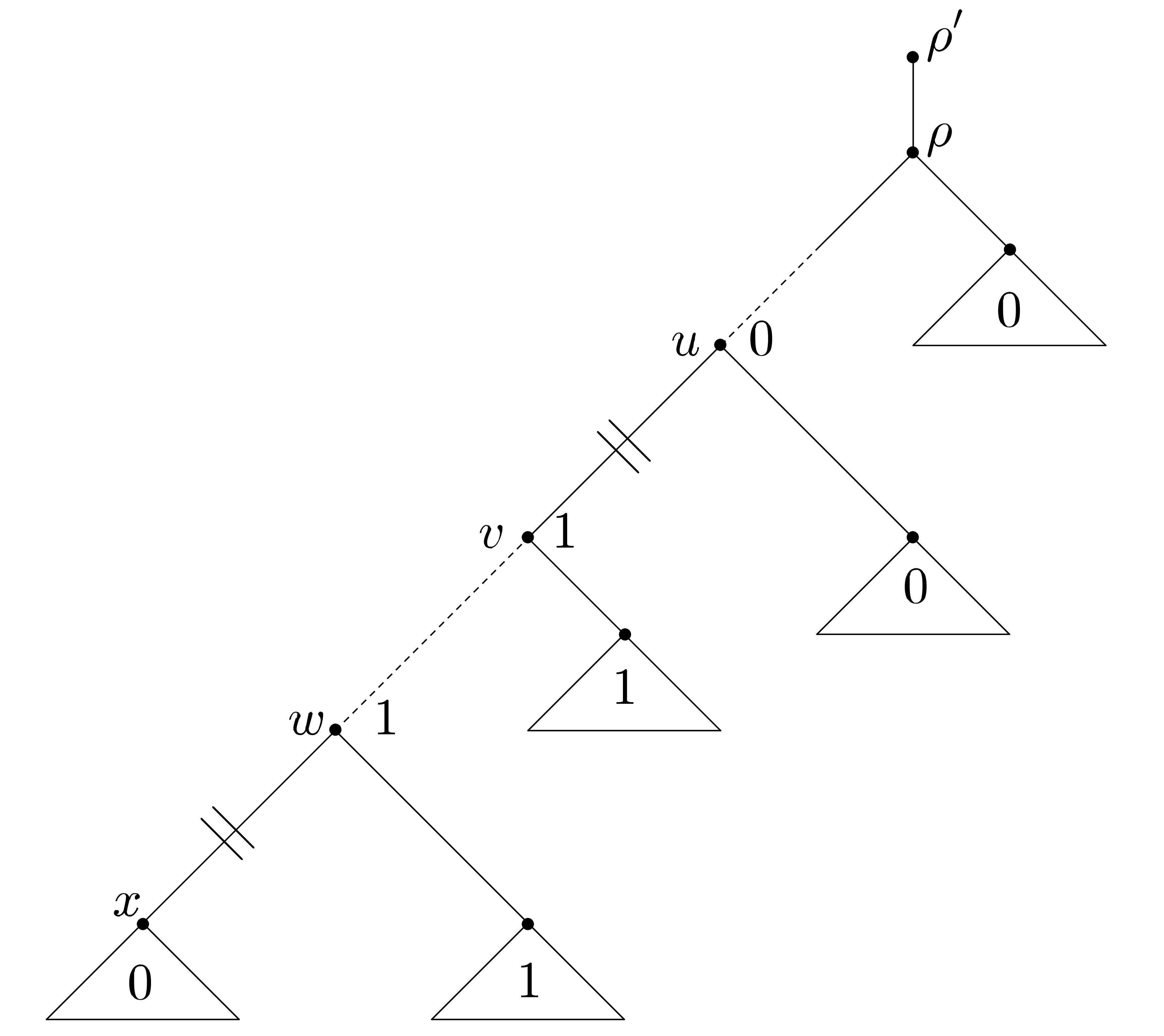}
	\caption{Situation in Lemma \ref{MinPersAlsoMostPars} in the case $l_p(f,T)=2$, where both change edges, namely $(u,v)$ and $(w,x)$ are in $T$ (i.e. there is no change on the edge $(\rho', \rho)$). This implies that all leaves descending from $v$ but not from $w$ are in state 1. Moreover, all leaves descending from $x$ are in state 0 and all leaves that are not descending from $v$ are also in state 0. Thus, $f$ does not correspond to an edge of $T$, and thus $l(f,T) \geq 2$.}
	\label{Fig_MinPersAlsoMostPars}
\end{figure}

Next, Lemma \ref{01induces2} shows that every choice of two $\{0,1\}$ union nodes for characters $f$ with $l(f,T)= 2$ leads to precisely 2 persistent characters.

\begin{lemma} \label{01induces2} Let $T$ be a rooted binary phylogenetic tree with $n\geq 4$ leaves. Let $u$ and $w$ be inner nodes of $T$ such that $u$ and $w$ are not adjacent and $u$ is an ancestor of $w$. Then, there are exactly two characters $f_1$ and $f_2$ that fulfill all of the following properties: 

\begin{enumerate} 
\item $f_1$ and $f_2$ are persistent on $T$, 
\item $l(f_1,T)=l(f_2,T)=2$,
\item the two $\{0,1\}$ union sets assigned to inner nodes by the first phase of the Fitch algorithm when evaluating $f_i$ on $T$ are $u$ and $w$ for $i=1,2$.
\end{enumerate}
\end{lemma}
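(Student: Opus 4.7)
The plan is to construct the two claimed characters explicitly and then verify both existence and uniqueness via Theorem \ref{characterize}. Let $v$ denote the direct descendant of $u$ on the unique $u$-to-$w$ path in $T$; since $u$ and $w$ are non-adjacent, $v \neq w$.

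First I would define a leaf labeling by assigning state $0$ to every leaf not descending from $v$, state $1$ to every leaf descending from $v$ but not from $w$, and then, for the two maximal pending subtrees of $w$, state $0$ to all leaves of one subtree and state $1$ to all leaves of the other. Swapping the roles of the two subtrees of $w$ yields exactly two distinct characters $f_1$ and $f_2$ (they are distinct because each has at least one state-$1$ leaf deep in $T$ by construction, and the subtrees of $w$ are non-empty as $w$ is an inner node).

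Next, to show each $f_i$ fulfills conditions (1)--(3), I would trace the first phase of the Fitch algorithm on $T$. Node $w$ receives $\{0\} \ast \{1\} = \{0,1\}$, a union. A short induction along the path from $w$ up to $v$ shows that every node strictly between $u$ and $w$ on that path, including $v$, receives state set $\{1\}$: its off-path child has only state-$1$ leaves (hence inherits $\{1\}$) and its on-path child is either $w$ (with set $\{0,1\}$) or another node already shown to have set $\{1\}$, so the Fitch operation is $\{1\} \ast \{0,1\} = \{1\}$ or $\{1\} \ast \{1\} = \{1\}$. Then $u$ receives $\{0\} \ast \{1\} = \{0,1\}$, a second union, with $\{0\}$ coming from $u$'s off-path child (whose descending leaves are all in state $0$). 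Every strict ancestor of $u$ then inherits $\{0\}$, since all relevant off-path subtrees contain only state-$0$ leaves. Thus $u$ and $w$ are the only $\{0,1\}$ union nodes, so $l(f_i,T)=2$, and the ancestral state sets satisfy conditions (a) and (b) of Theorem \ref{characterize} Part 3, so $f_i$ is persistent.

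For uniqueness, let $f$ be any character satisfying (1)--(3). By the forward direction of Theorem \ref{characterize} Part 3, the ancestral state sets produced by the first Fitch phase must match the pattern of condition (b) with respect to $u$, $v$, $w$; combined with Remark \ref{atleastoneinbetween}, which forces each maximal pending subtree of $w$ to be monochromatic, this pins down the leaf labels of $f$ up to the swap of the two subtrees of $w$, so $f \in \{f_1,f_2\}$. The main obstacle is the Fitch trace along the potentially long path from $u$ to $w$; the key is that the monochromatic labeling of all off-path subtrees lets a routine induction rule out any extra $\{0,1\}$ union node, so exactly two union nodes arise as required.
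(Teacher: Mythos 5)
Your proof is correct, and it reaches the conclusion by a somewhat more direct route than the paper. The paper first enumerates \emph{eight} candidate characters by combining three binary choices (which subtree of $w$ is in state $1$, the state of the ancestor $a$ of $w$, and the state of the leaves outside $T_u$), justifies that these are the only candidates by appealing to Lemma~\ref{MinPersAlsoMostPars} (every minimal persistent extension is most parsimonious), and only then invokes Theorem~\ref{characterize}, Part~3, to kill six of the eight. You instead construct the two surviving characters outright, verify persistence by tracing the first Fitch phase, and obtain uniqueness by applying the \emph{forward} direction of Theorem~\ref{characterize}, Part~3, to an arbitrary character satisfying (1)--(3): condition~(b) pins down the state sets (hence the leaf states) everywhere outside $T_w$, and the union at $w$ together with the absence of further $\{0,1\}$ nodes (Remark~\ref{atleastoneinbetween}) forces the two subtrees of $w$ to be monochromatic in opposite states, leaving only the swap. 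This bypasses both the eight-fold enumeration and Lemma~\ref{MinPersAlsoMostPars} entirely, which is a mild simplification; the paper's version buys an explicit picture of all eight score-$2$ characters anchored at $u$ and $w$ (Figure~\ref{Fig_01induces2}), which it reuses for intuition elsewhere, e.g.\ in the proof of Lemma~\ref{fbarnonpers}. The one step you should make fully explicit is the claim that each maximal pending subtree of $w$ is monochromatic: it follows because a node assigned a singleton $\{s\}$ with no $\{0,1\}$ node below it must have all its descendant leaves in state $s$, and property~(3) guarantees no third $\{0,1\}$ node exists; as sketched, your "routine induction" covers this.
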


\begin{proof} 
As $u$ and $w$ are inner nodes (i.e. they both have two direct descendants) which are not adjacent, one direct descendant of $u$, say $v$, must lie on the path between $u$ and $w$. Let $v'$ denote the other direct descendant of $u$ and let $x$ and $x'$ denote the direct descendants of $w$ (Figure \ref{Fig_01induces2}).
We first consider $w$ and its direct descendants $x$ and $x'$. Note that as $u$ and $w$ are the only nodes assigned $\{0,1\}$ by the first phase of the Fitch algorithm (as there are only two $\{0,1\}$ union nodes, where one is an ancestor of the other, but not a direct one, there cannot be a third $\{0,1\}$ node in $T$), it is clear that all leaves descending from $x$ must be in state 0 and all leaves descending from $x'$ must be in state 1 or vice versa. This gives two options which we will call the $w$ options. 

Moreover, the direct ancestor of $w$, say $a$ (which might equal $v$), is not assigned $\{0,1\}$, so it must be assigned either $\{0\}$ or $\{1\}$ by the first Fitch phase. This leads to two options which we will call the $a$ options. But note that all leaves descending from $v$ (and thus also the ones descending from $a$) which are not also descendants of $w$ must be in the same state that is assigned to $a$. This is due to the fact that there is no $\{0,1\}$ node other than $w$ in the subtree rooted at $v$. So in particular, $v$ will be assigned the same state as $a$ by the first phase of the Fitch algorithm.

So in order for $u$ to be assigned $\{0,1\}$, its other child, say $v'$, must be assigned the set consisting of the state that is not in the set assigned to $a$. In particular, if $a$ (and thus $v$) is assigned $\{0\}$, then $v'$ must be assigned $\{1\}$ or vice versa. Again, as there is no other $\{0,1\}$ set in the tree, all leaves descending from $v'$ must be in the state assigned to $v'$. 

Moreover, all leaves that are not descending from $u$ can either all be in state 0 or all be in state 1 -- but there cannot be both states present as otherwise there would be an additional $\{0,1\}$ node. So this, again, gives rise to two options, which we will refer to as the $\rho$ options (as these leaves basically fix the state that will be assigned to $\rho$ by any most parsimonious extension).

So in total, combining the $w$ options with the $a$ options and $\rho$ options, we derive $2 \cdot 2 \cdot 2 = 8$ characters which have parsimony score 2 on $T$ and whose $\{0,1\}$ sets are assigned precisely to $u$ and $w$. These characters are illustrated by Figure \ref{Fig_01induces2}. As all minimal persistent extensions are by Lemma \ref{MinPersAlsoMostPars} also most parsimonious, only these characters are possible candidates for the persistent characters fulfilling the required properties. However, by considering Part 3 of Theorem \ref{characterize}, we conclude that only two of these eight characters can be persistent, because only the $w$ options give possible choices; the other options are fixed. In particular, $a$ and thus $v$ have to be assigned $\{1\}$ and all nodes that are not descendants of $v$ must be assigned $\{0\}$, so the $a$ and $\rho$ options leave only one choice (this is due to the fact that in a persistent extension, the first change has to be a $0 \rightarrow 1$ change and the $1 \rightarrow 0$ change can only follow afterwards). This concludes the proof. 
\end{proof}

\begin{figure}[htbp]
	\centering
	\includegraphics[scale=0.1]{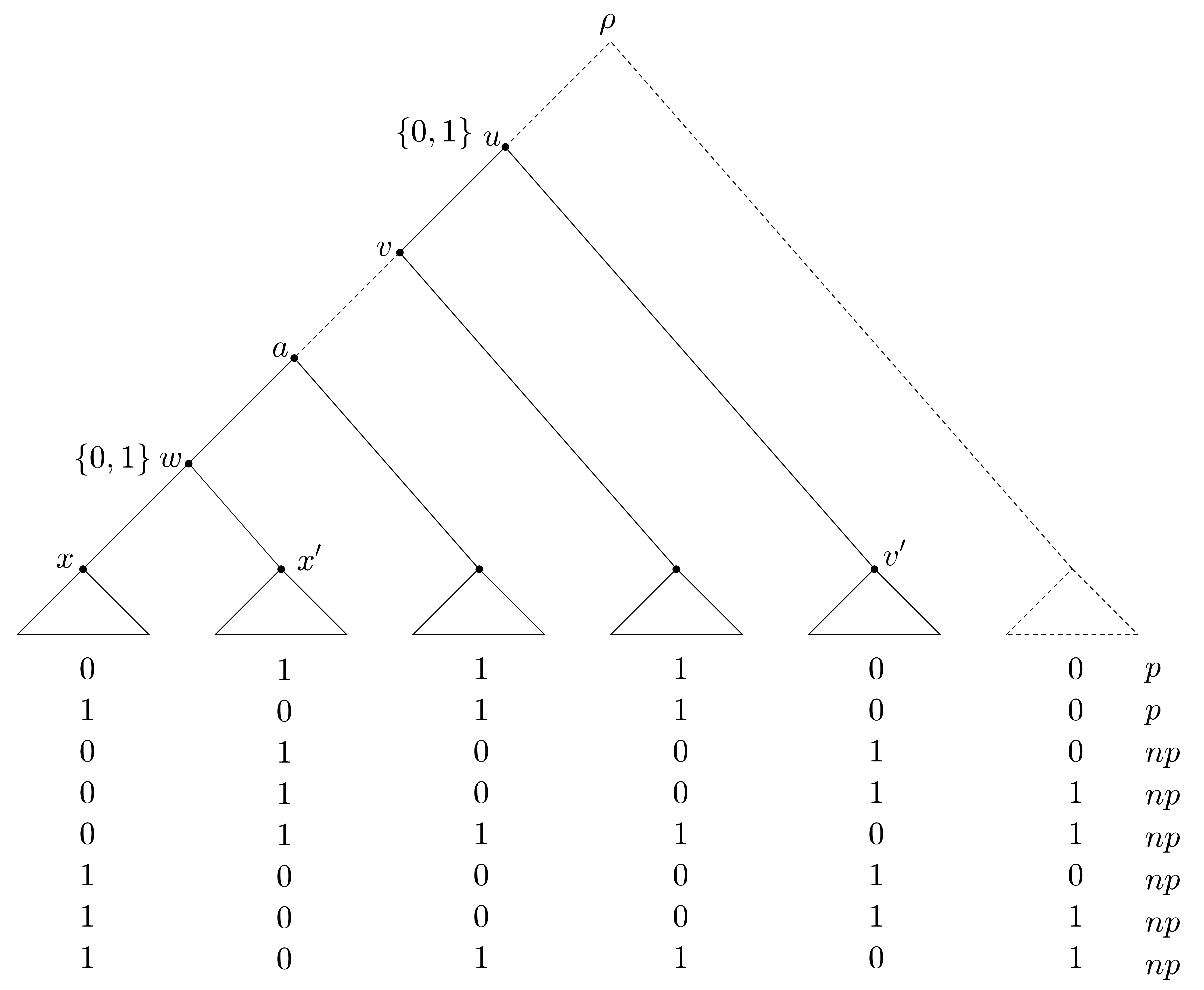}
	\caption{Characters used in the proof of Lemma \ref{01induces2}.}
	\label{Fig_01induces2}
\end{figure}

So in the light of Lemma \ref{01induces2}, in order to count persistent characters with parsimony score 2, we need to count the number of ways to pick $u$ and $w$ such that $u$ is an ancestor of $w$, but $w$ is not a child of $u$ (i.e. $u$ is not the direct ancestor of $w$). Every such choice will then immediately lead to two persistent characters with parsimony score 2, and there cannot be any more such characters.

Now as we want to count all pairs $\{u,w\}$ with the above properties, our main idea is to fix $u$ in order to count all possible choices of $w$ and to iterate this over all possible choices of $u$. However, before we can finally prove Theorem \ref{P2}, we need one more lemma, which already considers all choices of $u$ and considers the number of non-leaf children of $u$.

\begin{lemma} \label{d_u} Let $T$ be a rooted binary phylogenetic tree on $n \geq 2$ leaves. For each inner node $u$ of $T$, i.e. $u \in \mathring{V}(T)$, let $d_u$ denote the number of children of $u$ that are also inner nodes of $T$, i.e. $d_u$ can assume values 0, 1 or 2. Then, we have: $$\sum\limits_{u \in \mathring{V}(T)} d_u = n-2.$$
\end{lemma}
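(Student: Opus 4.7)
The plan is to interpret the sum $\sum_{u \in \mathring{V}(T)} d_u$ as an edge count. Each contribution $d_u$ counts children of $u$ that are themselves inner nodes, so summing over all $u \in \mathring{V}(T)$ produces the total number of directed edges $(u,v) \in E$ such that both $u$ and $v$ are inner nodes of $T$ (note that the root edge $(\rho', \rho)$ is excluded because $\rho'$ is not in $\mathring{V}(T)$).

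Now I would count these same edges differently, by their target $v$ instead of their source $u$. Every non-root node has a unique parent in $T$, and if $v$ is an inner node, then its parent is necessarily also an inner node (only inner nodes have children). So the edges $(u,v)$ with $u,v \in \mathring{V}(T)$ are in bijection with the inner nodes $v$ that are not the root of $T$.

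Finally, I would use the standard fact that a rooted binary phylogenetic tree with $n$ leaves has exactly $n-1$ inner nodes. Excluding the root, this leaves $n-2$ inner nodes that have an inner-node parent, which gives the desired identity $\sum_{u \in \mathring{V}(T)} d_u = n-2$. There is no real obstacle here; the only thing to be careful about is the bookkeeping around the additional root edge $(\rho',\rho)$, making sure that $\rho'$ is not treated as an inner node of $T$, so that the root $\rho$ itself does not contribute as a target of any counted edge.
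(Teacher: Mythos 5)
Your proof is correct, but it takes a different route from the paper. You argue by double counting: the sum $\sum_{u \in \mathring{V}(T)} d_u$ counts the edges of $T$ whose endpoints are both inner nodes, first by source and then by target, and the target-side count is exactly the set of non-root inner nodes, of which there are $(n-1)-1 = n-2$ by the standard fact (which the paper itself invokes elsewhere, in the proof of Theorem \ref{P2}) that a rooted binary tree with $n$ leaves has $n-1$ inner nodes. The paper instead proceeds by induction on $n$, pruning a cherry $[x,y]$ with parent $a$: this turns $a$ into a leaf, leaves $d_a = 0$ uncounted, and increases $d_b$ by one for the grandparent $b$, so the sum changes by exactly one while $n$ decreases by one. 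Your argument is shorter and arguably more illuminating, since it explains \emph{why} the sum is shape-independent (it is just an edge count determined by the number of inner nodes), whereas the induction verifies the identity without exposing that structural reason; the paper's induction, on the other hand, is self-contained and does not need the $n-1$ inner-node count as an external input. Your handling of the root edge $(\rho',\rho)$ is also the right point to flag: since $\rho' \notin \mathring{V}(T)$, the root contributes as a source but never as a target, which is precisely where the $-1$ beyond the inner-node count comes from.
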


The proof of this lemma can be found in the appendix. Note that this lemma is surprising as it shows that  $\sum\limits_{u \in \mathring{V}(T)} d_u $ does not depend on the tree shape of $T$ but only on the number of leaves.

We now use Lemma \ref{d_u} to prove Theorem \ref{P2} and thus relate the Sackin index $\mathcal{S}(T)$ of a phylogenetic tree to its number of persistent characters, which is the main result of this section.

In the following, let $T$ be a rooted binary phylogenetic tree with $n$ leaves and root $\rho$, and let $v$ be an inner node of $T$. Then we denote by $T_v$ the subtree of $T$ rooted at $v$ and by $n_v$ the number of leaves of $T_v$. Note that then, $T_\rho = T$ and $n_\rho = n$. 

\begin{proof}[Proof of Theorem \ref{P2}]
\red{Let $T$ be a rooted binary phylogenetic tree with $n \geq 2$ leaves. We first consider the cases $n=2$ and $n=3$. For $n=2$, there is only one rooted binary tree shape that $T$ can have, and all $2^2=4$ binary characters are persistent on $T$. Furthermore, $\mathcal{S}(T)=2$, and thus we have $4=\mathcal{P}(T) = 2 \mathcal{S}(T)-2n+4=4-4+4$ as claimed. Similarly, for $n=3$, there is also only one rooted binary tree shape that $T$ can have, and all $2^3=8$ binary characters are persistent. Moreover, $\mathcal{S}(T)=5$, and we have $8 = \mathcal{P}(T) = 2 \mathcal{S}(T)-2n+4=10-6+4$ as claimed. Note that for $n=2, 3$, there are no binary characters with parsimony score 2. In particular, $\mathcal{P}_2(T)=0$ in this case.}

Now, let $n \geq 4$. We first consider $\mathcal{P}_2(T)$. By Lemma \ref{01induces2} we need to count all pairs $\{u,w\}$ such that $u$ is the ancestor of $w$ but $w$ is not a child of $u$. For a fixed $u$, this implies that we need to count all of its descendants that are inner nodes of $T$ except for its direct descendants. This can be done by considering $T_u$ and counting its number of inner nodes that are not adjacent to the root $u$ (because all inner nodes $w$ that are contained in $T_u$ are descendants of $u$, and if they are not adjacent to $u$ they are not direct descendants). 

Recall that every rooted binary phylogenetic tree $T_u$ with $n_u$ leaves has $n_u-1$ inner nodes including its root $\rho$, so it has $n_u-2$ inner nodes excluding its root. Moreover, it has $d_u$ inner nodes that are direct descendants of $u$. So the number of choices for $w$ for a given node $u$ is therefore $n_u-2-d_u$. 

This leads to \begin{equation} \label{eqP2}\mathcal{P}_2(T)= 2\left(\sum\limits_{u \in \mathring{V}(T)} (n_u-2-d_u)\right).\end{equation} (Note that here, the factor 2 is due to Lemma \ref{01induces2} \red{because every pair $\{u,w\}$ induces precisely two characters contributing to $\mathcal{P}_2(T)$}.) 

\noindent Moreover, we know that the number of summands is $n-1$, as $T$ has $n-1$ inner nodes (including $\rho$), so this leads to $\mathcal{P}_2(T)= 2\left(\sum\limits_{u \in \mathring{V}(T)} (n_u-2-d_u)\right) =  2\left(\sum\limits_{u \in \mathring{V}(T)} (n_u-d_u)-\sum\limits_{u \in \mathring{V}(T)} 2 \right) = 2\left(\sum\limits_{u \in \mathring{V}(T)} n_u-\sum\limits_{u \in \mathring{V}(T)}d_u-2(n-1) \right).$ \red{Now, by Lemma 6, $\sum\limits_{u \in \mathring{V}(T)} d_u=n-2$, and by definition of the Sackin index, $\sum\limits_{u \in \mathring{V}(T)} n_u = \mathcal{S}(T)$. Thus, in summary,} $\mathcal{P}_2(T)= 2 \left( \mathcal{S}(T) -(n-2)-2(n-1)\right) $. Expanding this term yields $\mathcal{P}_2(T)=2 \mathcal{S}(T)-6n+8$, which completes the proof for the formula for $\mathcal{P}_2(T)$.

Now for $\mathcal{P}(T)$, remember that by Lemma \ref{MPboundsPersistence}, Part 2, we know that $\mathcal{P}(T)=\mathcal{P}_0(T)+\mathcal{P}_1(T)+\mathcal{P}_2(T)$, and we have already seen that $\mathcal{P}_0(T)=2$ (given by the two constant characters) and $\mathcal{P}_1(T)=4n-6$ (given \red{by the characters corresponding to splits induced} by the inner edges of $T$ and the edges leading to the $n$ leaves of $T$). Thus, using the first part of the theorem, we derive $\mathcal{P}(T)=2+(4n-6)+(2\mathcal{S}(T)-6n+8).$ Expanding this term yields $\mathcal{P}(T) = 2 \mathcal{S}(T)-2n+4$, which completes the proof.
\end{proof}

Theorem \ref{P2} immediately leads to the following corollary.

\begin{corollary} Let $T_1$ and $T_2$ be two rooted binary phylogenetic trees with $n$ leaves. Then, we have:
\begin{enumerate}
\item $ \mathcal{S}(T_1) < \mathcal{S}(T_2) \Leftrightarrow  \mathcal{P}_2(T_1) < \mathcal{P}_2(T_2)\Leftrightarrow  \mathcal{P}(T_1) < \mathcal{P}(T_2).$
\item $ \mathcal{S}(T_1) = \mathcal{S}(T_2) \Leftrightarrow  \mathcal{P}_2(T_1) = \mathcal{P}_2(T_2)\Leftrightarrow  \mathcal{P}(T_1) = \mathcal{P}(T_2).$
\end{enumerate}
\end{corollary}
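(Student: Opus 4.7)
The plan is to observe that this corollary is an immediate consequence of Theorem \ref{P2}, exploiting the fact that both $\mathcal{P}_2(T)$ and $\mathcal{P}(T)$ are strictly increasing affine functions of the Sackin index $\mathcal{S}(T)$ once the number of leaves $n$ is fixed. Since $T_1$ and $T_2$ are assumed to have the same number of leaves $n$, the additive constants $-6n+8$ (for $\mathcal{P}_2$) and $-2n+4$ (for $\mathcal{P}$) are identical for both trees, so comparing $\mathcal{P}_2(T_1)$ with $\mathcal{P}_2(T_2)$ (or $\mathcal{P}(T_1)$ with $\mathcal{P}(T_2)$) reduces to comparing $2\mathcal{S}(T_1)$ with $2\mathcal{S}(T_2)$, which in turn is equivalent to comparing $\mathcal{S}(T_1)$ with $\mathcal{S}(T_2)$.

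Concretely, for part 1, I would start from $\mathcal{P}_2(T_i) = 2\mathcal{S}(T_i) - 6n + 8$ (by Theorem \ref{P2}) and note that $\mathcal{P}_2(T_1) < \mathcal{P}_2(T_2)$ if and only if $2\mathcal{S}(T_1) - 6n + 8 < 2\mathcal{S}(T_2) - 6n + 8$, which simplifies to $\mathcal{S}(T_1) < \mathcal{S}(T_2)$. The same argument works for $\mathcal{P}(T_i) = 2\mathcal{S}(T_i) - 2n + 4$. For part 2, I would replace every strict inequality by an equality and repeat the same line of reasoning.

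There is essentially no obstacle here: the corollary is a direct algebraic rewriting of the two formulas proven in Theorem \ref{P2}, using only the fact that both trees share the same value of $n$ and that $x \mapsto 2x + c$ is strictly monotone. The only subtle point worth mentioning explicitly is that the comparison is valid only when $n$ is the same for both trees, which is precisely the hypothesis of the corollary.
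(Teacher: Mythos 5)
Your proposal is correct and matches the paper's reasoning: the paper states this corollary as an immediate consequence of Theorem \ref{P2}, and your argument simply spells out the strict monotonicity of the affine maps $\mathcal{S}(T) \mapsto 2\mathcal{S}(T)-6n+8$ and $\mathcal{S}(T) \mapsto 2\mathcal{S}(T)-2n+4$ for fixed $n$. Nothing is missing.
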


In other words, $T_1$ is more balanced than $T_2$ if and only if for $T_1$ there exist fewer persistent characters than for $T_2$. On the other hand, $T_1$ and $T_2$ are equally balanced if and only if they have the same number of persistent characters. We illustrate the first of these settings in the following example.

\begin{example}
As an example, we consider the caterpillar tree $T^{cat}_n$. 
The caterpillar tree maximizes the Sackin index and we have $\mathcal{S}(T_n^{cat})= \frac{n(n+1)}{2}-1$ (\citet{Fischer2018}). By Theorem \ref{P2} this leads to 
$$\mathcal{P}_2( T^{cat})= 2 S(T^{cat})-6n+8= 2\left(\frac{n(n+1)}{2}-1\right)-6n+8 = n^2-5n+6.$$
Now, suppose that $n$ is a power of $2$, i.e. $n=2^k$, for some ${k \geq 1},\, k \in \mathbb{N}$. 
Then we can compare the number of persistent characters with parsimony score 2 of the caterpillar tree on $n$ leaves with the number of these characters on the so-called fully balanced tree of height $k$. Recall that the Sackin index of the fully balanced tree is $S(T_k^{bal}) = k \cdot 2^k$ (\citet{Fischer2018}). By Theorem \ref{P2} this leads to $ \mathcal{P}_2(T^{bal}_k) = 2 \cdot (k\cdot 2^k)-6n+8.$ Using $n=2^k$, this leads to $$ \mathcal{P}_2(T^{bal}_k) = 2 \cdot(k\cdot 2^k)-6\cdot 2^k+8 = (k-3) \cdot 2^{k+1}+8.$$

When comparing the two examples $T_k^{bal}$ and $T^{cat}_n$ for $n=2^k$ and for $k\geq 1$, we observe the following: 

$$\mathcal{P}_2( T^{cat}_n)= n^2-5n+6= \left(2^{k}\right)^2-5\cdot 2^{k} +6 = 2^{2k} -5\cdot 2^k+6,$$ 
$$ \mathcal{P}_2(T^{bal}_k) = (k-3) \cdot 2^{k+1}+8. $$

It can easily be shown that for $k>1$, the first term is always strictly larger than the second. This implies that the number of persistent characters on the caterpillar tree, which equals  $\mathcal{P}_0(T^{cat}_n)+\mathcal{P}_1(T^{cat}_n)+\mathcal{P}_2(T^{cat}_n)$, is strictly larger than the number of persistent characters on the fully balanced tree with the same number of leaves, which equals $\mathcal{P}_0(T^{bal}_k)+\mathcal{P}_1(T^{bal}_k)+\mathcal{P}_2(T^{bal}_k)$. 
\end{example}

\begin{remark}
Note that as the caterpillar tree maximizes the Sackin index for all $n$ (and this maximum is unique; cf. \citet{Fischer2018}), there is no tree with more persistent characters. Moreover, for $n=2^k$, the Sackin index is minimized by the fully balanced tree of height $k$ (and again, this minimum is unique; cf. \citet{Fischer2018}), and thus, for $n=2^k$ there is no tree with fewer persistent characters than the fully balanced tree. For $n \neq 2^k$, there might be more than one tree minimizing the Sackin index, and thus, there might be more than one tree with a minimal number of persistent characters (the maximal number of persistent characters is always uniquely obtained on the caterpillar tree). We refer the reader to \citet{Fischer2018} for more details on the extremal values of the Sackin index if $n \neq 2^k$. 
However, for all $n \geq 2$ we can provide an upper and lower bound on the number of persistent characters, using the explicit bounds of the Sackin index stated in \citet{Fischer2018}.
\end{remark}

\begin{proposition} \label{PersBounds}
Let $T$ be a rooted binary phylogenetic tree with $n \geq 2$ leaves. Then we have
$$ -2^{\lceil \log_2 (n) \rceil+1} + 2n \lceil \log_2 (n) \rceil +4  \leq \mathcal{P}(T) \leq n^2-n+2.$$
\end{proposition}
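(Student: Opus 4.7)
The plan is to derive this proposition directly from Theorem~\ref{P2}, which supplies the closed-form identity $\mathcal{P}(T) = 2\mathcal{S}(T) - 2n + 4$. Because this identity is affine and order-preserving in $\mathcal{S}(T)$, any uniform upper or lower bound on the Sackin index over rooted binary trees on $n$ leaves transfers verbatim into the corresponding bound on $\mathcal{P}(T)$. Consequently, no new combinatorial argument is required beyond the results of Section~\ref{sec_shape}; the whole proposition will reduce to plugging the extremal bounds on $\mathcal{S}(T)$ cited from \citet{Fischer2018} into Theorem~\ref{P2}.

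For the upper bound I will use the fact (already recalled in the paper) that the caterpillar $T^{cat}_n$ is the unique maximizer of the Sackin index on $n$ leaves, with $\mathcal{S}(T^{cat}_n) = \frac{n(n+1)}{2} - 1$. Substituting into the identity of Theorem~\ref{P2} and simplifying yields $\mathcal{P}(T) \leq n^2 - n + 2$, with equality attained on $T^{cat}_n$.

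For the lower bound I will set $k = \lceil \log_2 n \rceil$ and invoke the explicit lower bound on $\mathcal{S}(T)$ given in \citet{Fischer2018}. Once this bound is rewritten in the form $\mathcal{S}(T) \geq -2^{2k-1} + n(k+1)$, substituting into Theorem~\ref{P2} and expanding gives $\mathcal{P}(T) \geq -4^{k} + 2nk + 4$, matching the statement. The only non-automatic step I anticipate is a bit of bookkeeping to reconcile the form in which \citet{Fischer2018} states its Sackin lower bound with the expression used here; this amounts to verifying that $-2 \cdot 2^{2k-1} = -4^{k}$ and that $2n(k+1) - 2n = 2nk$, both of which are routine. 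Beyond this arithmetic I do not foresee any genuine obstacle, since the entire argument is a substitution into the identity from Theorem~\ref{P2}.
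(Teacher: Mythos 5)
Your overall strategy is exactly the paper's: both arguments substitute the extremal Sackin bounds from \citet{Fischer2018} into the identity $\mathcal{P}(T)=2\mathcal{S}(T)-2n+4$ of Theorem~\ref{P2}, and your upper-bound computation via the caterpillar value $\mathcal{S}(T^{cat}_n)=\tfrac{n(n+1)}{2}-1$ is identical to the paper's and correct.

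The lower bound, however, contains a genuine gap. You assume the Sackin lower bound can be ``rewritten'' as $\mathcal{S}(T)\geq -2^{2k-1}+n(k+1)$ with $k=\lceil\log_2 n\rceil$, a form chosen precisely so that doubling produces the $-4^{k}$ appearing in the statement. But the bound actually established in \citet{Fischer2018} (and quoted in the paper's proof) is $\mathcal{S}(T)\geq -2^{k}+n(k+1)$; this is not equivalent to your version, and for $k\geq 2$ yours is strictly weaker (for $n=5$, $k=3$, the true bound gives $\mathcal{S}(T)\geq 12$, which is tight, whereas yours gives $\mathcal{S}(T)\geq -12$). Substituting the correct bound into Theorem~\ref{P2} yields $\mathcal{P}(T)\geq -2^{k+1}+2nk+4$, not $-4^{k}+2nk+4$, so the ``routine bookkeeping'' you defer does not close as described: $-2\cdot 2^{k}\neq -4^{k}$. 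To recover the stated inequality you need one additional (easy) observation, namely that $2^{k+1}\leq 4^{k}$ for all $k\geq 1$ (and $n\geq 2$ guarantees $k\geq 1$), so the bound you actually obtain implies the weaker one in the proposition. Incidentally, this mismatch also shows that the proposition as printed is not the sharpest consequence of the computation: expanding the paper's own displayed inequality gives the stronger lower bound $-2^{\lceil\log_2 n\rceil+1}+2n\lceil\log_2 n\rceil+4$.
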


\begin{proof}
By Theorem 1 and Theorem 3 in \citet{Fischer2018}, we have
$$ -2^{\lceil \log_2(n) \rceil} + n(\lceil \log_2(n) \rceil +1) \leq \mathcal{S}(T) \leq \frac{n(n+1)}{2}-1.$$
By Theorem \ref{P2} we know that $\mathcal{P}(T) = 2 \mathcal{S}(T) - 2n + 4$, and thus 
\begin{align*}
2 \big( -2^{\lceil \log_2(n) \rceil)} + n(\lceil \log_2(n) \rceil +1) \big) -2n + 4 &\leq \mathcal{P}(T) &\leq 2 \Big(\frac{n(n+1)}{2}-1 \Big) -2n + 4.
\end{align*}
Expanding all terms yields the desired result.
\end{proof}

In the previous two sections we have seen that persistent characters can be fully characterized by the first phase of the Fitch algorithm and that the number of characters that are persistent on a given tree $T$ depends on the shape of $T$. In the following section we will now consider the question of how characters together with their persistence status can be used to uniquely determine a tree. In particular, we consider the question of how many (carefully chosen) characters are needed to uniquely determine a tree. This question was posed as part of the \enquote{Kaikoura 2014 challenges} at the Kaikoura 2014 workshop (\citet{Kaikoura2014}), and we will provide an upper bound for this number.

\subsection{An upper bound on the minimum number of persistent characters that uniquely determine a tree} \label{sec_bound}

One of the earliest and most fundamental results in mathematical phylogenetics is the Buneman theorem (\citet{Bunemann1971}; see also \citet[p. 44]{Semple2003}), which basically states that an unrooted phylogenetic $X$-tree on leaf set $X$ with $|X|=n$ is uniquely defined by the set $\Sigma^*(T)$ of its induced non-trivial $X$-splits -- of which an unrooted binary tree has $n-3$. In particular, if such a set of compatible $X$-splits is given, the corresponding tree can be reconstructed in polynomial time using the so-called tree popping algorithm (\citet{Meacham1981, Meacham1983}).

Now recall that each $X$-split $\sigma$ can be translated into a binary character $f_\sigma$ (note that this character is unique if we assume that $f(1)=1$; otherwise we would also have to consider $\bar{f}_\sigma$). So if we translate the Buneman theorem into the setting of characters, it is obvious that an unrooted binary phylogenetic tree is uniquely determined by the set of $n-3$ binary characters that correspond to its $n-3$ inner edges. Considering parsimony, this can be interpreted in the following two (related) ways. Consider the alignment (set) $\mathcal{A}_T$ of the $n-3$ characters induced by some unknown unrooted binary phylogenetic \blue{$X$}-tree $T$.

\begin{itemize}
\item Assume you are given the information that for every $f \in \mathcal{A}_T$ we have $l(f,T)=1$, then you can reconstruct $\Sigma^*(T)$ and therefore also $T$ (via tree popping).
\item If you do not know $l(f,T)$ but are only given $\mathcal{A}_T$, theoretically you could consider all possible \blue{phylogenetic $X$-}trees $\widetilde{T}$ and calculate $l(\mathcal{A}_T,\widetilde{T})$. $T$ would then be the unique tree minimizing this score, i.e. $T= \argmin\limits_{\widetilde{T}} l(\mathcal{A}_T,\widetilde{T})$.
\end{itemize}

Note that the latter is due to the fact that all characters that employ two character states require at least one change, so it is clear that $l(f,\widetilde{T}) \geq 1$ for all $f\in \mathcal{A}_T$ and for all $\widetilde{T}$. Therefore, no other \blue{phylogenetic $X$-}tree can have a lower score than $T$. Moreover, as for different \blue{phylogenetic $X$-}trees $T$ and $T'$ we have $\Sigma^*(T)\neq \Sigma^*(T')$ and thus $\mathcal{A}_T \neq \mathcal{A}_{T'}$, but at the same time $|\Sigma^*(T)|=| \Sigma^*(T')|=n-3$, we know that at least one character $f \in \mathcal{A}_T $ is not contained in $\mathcal{A}_{T'}$ and thus has $l(f,T')\geq 2$. Thus, in total, for any $T' \neq T$, we have $l(\mathcal{A}_T,T')>l(\mathcal{A}_T,T)=n-3$. So $T$ is the unique Maximum Parsimony tree. 

In some sense, regarding the above two interpretations, the second one is stronger, because it leads to a reconstruction of $T$ based on $\mathcal{A}_T$ without any further information. For this procedure, i.e. reconstructing a tree according to the parsimony criterion based on a given alignment, there are many software tools available. However, note that finding a Maximum Parsimony tree is an NP-complete problem (\citet{Foulds1982}; see also \citet[p. 107]{Book_Steel}), and for large values of $n$, an exhaustive search through treespace is not applicable. (However, note that for our very specific alignment $\mathcal{A}_T$, which consists only of $n-3$ compatible binary characters, most software packages will still succeed in reconstructing $T$.)

The first interpretation, though, is in another sense more powerful, because it can make direct use of the additional information that $l(f,T)=1$ for all $f \in \mathcal{A}_T$ with the help of the tree popping algorithm. 

Now, coming back to persistence, it is natural to ask if similar characterizations exist for persistent characters: How many persistent characters do we need to uniquely determine a particular tree? 

Of course, this question might seem a bit powerless compared to parsimony at first if we consider only the second interpretation above. In particular, a character can only be persistent or not, whereas the parsimony score can assume various different values to indicate how good or bad the fit of the character to a given tree really is. In fact we will show subsequently that it is not sufficient to consider persistent characters in order to uniquely determine a rooted tree -- so compared to the second interpretation of the Buneman setting in terms of parsimony, persistent characters might seem weaker in reconstructing trees than most parsimonious ones. 

However, on the other hand, in the light of the first interpretation above, if we are allowed to list characters $f$ together with $\mathcalligra{p}(f,T)$, i.e. together with the information whether they are persistent or not, then we can indeed succeed in uniquely determining the tree -- and this is more powerful than parsimony in the sense that this even applies to rooted trees, whereas Maximum Parsimony can never distinguish between different root positions as it can only reconstruct unrooted trees. 
So in this sense, persistent characters are stronger than most parsimonious characters, but we will see that we need more of them to achieve this. 

However, we first consider the number of persistent characters needed to reconstruct unrooted trees, before we can turn our attention to rooted ones. Thus, in the following let $T^u$ denote the unrooted version of a rooted tree $T$, where $T^u$ is obtained from $T$ by suppressing the root node $\rho$ (i.e. by deleting $\rho$ and the two edges incident to it and re-connecting the two resulting degree-2 nodes with a new edge). 

We start this section with the first main theorem of this section.

\begin{theorem}[Buneman-type theorem for persistent characters] \label{bunemanpers}Let $T$ be a rooted binary phylogenetic $X$-tree with $|X|=n$. Let $T^u$ be its unrooted version with non-trivial split set $\Sigma^*(T^u)$. For each $\sigma \in \Sigma^*(T^u)$, let $f_\sigma$, $\bar{f}_\sigma$ denote the unique two binary characters induced by $\sigma$. Let $\mathcal{A}_T^p$ denote the alignment induced by all such characters $f_\sigma, \bar{f}_\sigma$ with $\sigma \in \Sigma^*(T^u)$. 

Then, if a rooted binary phylogenetic \blue{$X$-}tree $\widetilde{T}$ has the property that all $ f \in \mathcal{A}_T^p$ are persistent on $\widetilde{T}$, we have $\widetilde{T}^u=T^u$. 

In other words, the unrooted version $T^u$ of $T$ is uniquely determined by $\mathcal{A}_T^p$.

\end{theorem}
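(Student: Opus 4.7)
The plan is to reduce the statement to Buneman's theorem. Since both $T^u$ and $\widetilde{T}^u$ are unrooted binary phylogenetic $X$-trees, each has exactly $n-3$ non-trivial splits, so it suffices to show $\Sigma^{\ast}(T^u)\subseteq \Sigma^{\ast}(\widetilde{T}^u)$; matching cardinalities together with Buneman then force $\widetilde{T}^u = T^u$. I would therefore fix $\sigma = A\mid B \in \Sigma^{\ast}(T^u)$ and try to show that $\sigma$ is induced by an edge of $\widetilde{T}^u$, which is equivalent to saying that $A$ or $B$ is the leaf set of some subtree of the rooted tree $\widetilde{T}$; I will write this leaf set as $L_{\widetilde{T}}(v)$ for a node $v$ of $\widetilde{T}$.

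By hypothesis, both $f_\sigma$ and $\bar{f}_\sigma$ are persistent on $\widetilde{T}$. The key structural input, which follows from unpacking the definition of persistence (with Theorem \ref{characterize} pinpointing the shape of a minimal persistent extension in the score-2 case), is that the 1-leaves of any persistent character on $\widetilde{T}$ always form a set of the shape $L_{\widetilde{T}}(V)\setminus L_{\widetilde{T}}(W')$, where $V$ is an internal node of $\widetilde{T}$ (with $V = \rho$ and $L_{\widetilde{T}}(V)=X$ allowed to encode a $0\to 1$ change on the root edge), and $W'$ is either a descendant of $V$ in $\widetilde{T}$ or the empty node (the latter when no $1\to 0$ change is used). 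Applied to $f_\sigma$ and $\bar{f}_\sigma$ this yields representations $A = L_{\widetilde{T}}(V_A)\setminus L_{\widetilde{T}}(W_A)$ and $B = L_{\widetilde{T}}(V_B)\setminus L_{\widetilde{T}}(W_B)$.

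The easy cases are then direct. If $W_A$ (resp.\ $W_B$) is the empty node, then $A$ (resp.\ $B$) equals $L_{\widetilde{T}}(V_A)$ (resp.\ $L_{\widetilde{T}}(V_B)$), which is a non-trivial cluster of $\widetilde{T}$ because $|A|,|B|\geq 2$ rules out $V = \rho$. If $V_A = \rho$ (resp.\ $V_B = \rho$), then $A = X\setminus L_{\widetilde{T}}(W_A)$, hence $B = L_{\widetilde{T}}(W_A)$ is a non-trivial cluster (and symmetrically for $B$); here $W_A\neq \rho$ is forced by $A\neq\emptyset$, and $W_A$ is internal rather than a leaf because $|B|\geq 2$. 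In each of these scenarios $\sigma$ lands in $\Sigma^{\ast}(\widetilde{T}^u)$.

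The main obstacle is the residual case in which both $V_A,V_B$ are non-root internal nodes of $\widetilde{T}$ and both $W_A,W_B$ are non-empty proper descendants. Here I would derive a contradiction. From $A\cup B = X$ one reads off $L_{\widetilde{T}}(V_A)\cup L_{\widetilde{T}}(V_B) = X$. Letting $c_1,c_2$ denote the two children of the root of $\widetilde{T}$, a short ancestor-chase shows this can only happen when $\{V_A,V_B\} = \{c_1,c_2\}$: if $V_A$ lay strictly inside the subtree of some $c_i$, then $L_{\widetilde{T}}(V_B)$ would have to contain both the entire $L_{\widetilde{T}}(c_{3-i})$ and a non-empty part of $L_{\widetilde{T}}(c_i)$, forcing $V_B$ to be a common ancestor of nodes in both root-subtrees, hence $V_B = \rho$, which is excluded. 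In the forced configuration $L_{\widetilde{T}}(V_A)\cap L_{\widetilde{T}}(V_B) = \emptyset$, so any leaf $x \in L_{\widetilde{T}}(W_A)\subseteq L_{\widetilde{T}}(V_A)$ is removed from $A$ by the subtraction and cannot lie in $B$ either, contradicting $A\cup B = X$. This rules out the residual case and completes the proof.
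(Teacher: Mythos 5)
Your proof is correct, but it reaches the key step by a genuinely different route than the paper. Both arguments share the same outer skeleton (show $\Sigma^{\ast}(T^u)\subseteq\Sigma^{\ast}(\widetilde{T}^u)$, match cardinalities at $n-3$, invoke Buneman), which is essentially forced by the statement. The difference lies in how one shows that a non-trivial split $\sigma=A\vert B$ with both induced characters persistent on $\widetilde{T}$ must be a split of $\widetilde{T}^u$. The paper does this through parsimony scores: it invokes Lemma \ref{score1characterized}, whose non-trivial direction rests on Lemma \ref{fbarnonpers} (a score-2 persistent character never has a persistent complement), which in turn is proved via the Fitch-based characterization of Theorem \ref{characterize} and Lemma \ref{01induces2}. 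You bypass the Fitch machinery and parsimony scores entirely: you read off from the bare definition of persistence that the 1-set of a persistent character is a difference of clusters $L_{\widetilde{T}}(V)\setminus L_{\widetilde{T}}(W')$, and then show by an elementary ancestor-chasing case analysis that if both blocks of a bipartition of $X$ have this form, one block must itself be a cluster. Your residual case is handled correctly: the conditions force $\{V_A,V_B\}$ to be the two children of the root, whence disjointness of their clusters contradicts $A\cup B=X$ once a leaf of the nonempty $L_{\widetilde{T}}(W_A)$ is exhibited. What the paper's route buys is economy given its earlier results, plus the sharper by-product that every $f\in\mathcal{A}_T^p$ has parsimony score exactly $1$ on $\widetilde{T}$ (Lemma \ref{score1characterized}, of independent interest); what your route buys is self-containedness --- in effect you re-prove the needed direction of that lemma from first principles, at the cost of a somewhat longer case analysis. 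One presentational nit: you should note explicitly (as you implicitly use) that informativeness of $f_\sigma$ forces $V$ to be an internal node and, in your second easy case, forces $W_A$ to be internal; you do state both, so nothing is missing.
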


In order to prove this theorem, we need two more lemmas. The first one shows that if $f$ is persistent on a tree $T$ and has parsimony score 2, then its inverted counterpart $\bar{f}$ cannot be persistent.

\begin{lemma} \label{fbarnonpers} Let $T$ be a rooted binary phylogenetic tree, let $f$ be a binary character with $l(f,T)=2$. Then, we have:    $$\mathcalligra{p}(f,T)=p \Rightarrow \mathcalligra{p}(\bar{f},T)=np.$$
\end{lemma}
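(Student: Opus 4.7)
The plan is to apply the characterization of persistent characters (Theorem \ref{characterize}, Part 3) to both $f$ and $\bar{f}$, exploiting the fact that the first phase of the Fitch algorithm behaves symmetrically under exchanging the states $0$ and $1$, while the persistence condition (b) is \emph{asymmetric} in those states (since by convention $\rho'$ is always assigned $0$).

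First I would verify that $l(\bar{f},T)=l(f,T)=2$: for any extension $g$ of $f$, its pointwise complement $\bar{g}$ is an extension of $\bar{f}$ and clearly has the same changing number, so the two parsimony scores agree. Next, I would observe that the Fitch operation $\ast$ commutes with the involution that exchanges $\{0\}$ and $\{1\}$ (and fixes $\emptyset$ and $\{0,1\}$); consequently, running the first phase of the Fitch algorithm on $\bar{f}$ produces exactly the same set of $\{0,1\}$ union nodes as on $f$, while every other ancestral set $\{0\}$ turns into $\{1\}$ and vice versa. Denote the two union nodes by $u$ and $w$.

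Now, since $f$ is persistent and $l(f,T)=2$, Theorem \ref{characterize}, Part 3 tells us that (after relabeling) $u$ is an ancestor of $w$, and that, letting $v$ be the direct descendant of $u$ on the path to $w$, every descendant of $v$ which is not a descendant of $w$ carries ancestral set $\{1\}$, while every node outside the subtree rooted at $v$ carries ancestral set $\{0\}$. Applying the state swap, for $\bar{f}$ the very same nodes $u$ and $w$ satisfy condition (a), but now the descendants of $v$ which are not descendants of $w$ carry $\{0\}$ and the non-descendants of $v$ (including the root $\rho$) carry $\{1\}$---the exact inverse of the pattern demanded by condition (b). Hence condition (b) fails for $\bar{f}$, and Theorem \ref{characterize}, Part 3 forces $\bar{f}$ to be non-persistent on $T$.

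The only potential snag is making the symmetry of the Fitch first phase precise, but this follows directly from the definition of $\ast$; once that observation is in hand, the result is essentially a one-line application of Theorem \ref{characterize}, with the asymmetric bookkeeping between the roles of $\{0\}$ and $\{1\}$ doing all of the work.
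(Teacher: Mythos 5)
Your proof is correct. It is also very close in spirit to the paper's own argument: both hinge on the observation that the first Fitch phase assigns the two $\{0,1\}$ union sets to the \emph{same} pair of nodes $u,w$ for $f$ and for $\bar{f}$, and that persistence then forces the region between them (in particular the node $v$, which exists and differs from $w$ by Remark \ref{atleastoneinbetween}) into state $1$ --- a requirement $f$ and $\bar{f}$ cannot both meet. The packaging differs slightly: the paper routes the argument through Lemma \ref{01induces2} and the uniqueness of the minimal persistent extension, concluding that certain \emph{leaves} would have to be in state $1$ for both $f$ and $\bar{f}$, which is absurd; you instead make the $0\leftrightarrow 1$ equivariance of the Fitch operation $\ast$ explicit and apply the forward direction of Theorem \ref{characterize}, Part 3 contrapositively to $\bar{f}$, showing that condition (b) is violated because all the singleton ancestral sets flip. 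Your version has the minor advantage of isolating the symmetry of the Fitch recursion as a clean, reusable observation, and of not needing Lemma \ref{01induces2} at all; the paper's version avoids having to argue about the Fitch sets of $\bar{f}$ by working directly with leaf states. One small point worth making explicit in your write-up: condition (b) fails for $\bar{f}$ non-vacuously because $v$ itself (which is guaranteed to exist with $v\neq w$) is a descendant of $v$ but not of $w$ and is assigned $\{0\}$ rather than the required $\{1\}$.
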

\begin{proof} As parsimony does not distinguish between $f$ and $\bar{f}$, we have $l(\bar{f},T)=l(f,T)=2$. Thus, both characters will have two $\{0,1\}$ union sets assigned to some nodes $u$ and $w$ during the first phase of the Fitch algorithm, and these nodes are identical for $f$ and $\bar{f}$. As $f$ is persistent, $f$ is by Lemma \ref{01induces2} one of only two characters that is persistent on $T$ and employs these particular nodes $u$ and $w$ for the $\{0,1\}$ sets. Moreover, by Remark \ref{atleastoneinbetween}, there is at least one node between $u$ and $w$, and as in the proof of Theorem \ref{characterize}, Part 3, we can conclude that $f$ has a unique minimal persistent extension that assigns this node state $1$. As there cannot be any other $\{0,1\}$ set in $T$, this implies that all leaves descending from this node must be in state $1$. Note that the same would have to apply to $\bar{f}$ if $\bar{f}$ was persistent. But this cannot be the case as $\bar{f}$ assigns 0 to precisely those leaves to which $f$ assigns 1. So $\bar{f}$ cannot be persistent. This completes the proof.
\end{proof}

Next, we consider again $f$ and $\bar{f}$, but for the case $l(f,T)=l(\bar{f},T)=1$.

\begin{lemma}\label{score1characterized}
Let $T$ be a rooted binary phylogenetic $X$-tree and let $f$ be a non-constant binary character on $X$. Then, $$l(f,T)=1 \Longleftrightarrow \mathcalligra{p}(f,T)=\mathcalligra{p}(\bar{f},T)=p.$$
\end{lemma}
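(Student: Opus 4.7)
The plan is to prove both directions separately, using previously established results so that essentially no new argument is needed beyond combining them.

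For the forward direction ($\Rightarrow$), assume $l(f,T)=1$. Since parsimony does not distinguish between a character and its inverse, we also have $l(\bar f,T)=1$. Then Part~2 of Theorem~\ref{characterize}, which states that every character with parsimony score at most $1$ is persistent, immediately yields $\mathcalligra{p}(f,T)=\mathcalligra{p}(\bar f,T)=p$. This direction is essentially a one-line consequence.

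For the backward direction ($\Leftarrow$), assume $\mathcalligra{p}(f,T)=\mathcalligra{p}(\bar f,T)=p$. Since $f$ is non-constant, any extension must exhibit at least one change, so $l(f,T)\geq 1$. On the other hand, Part~2 of Lemma~\ref{MPboundsPersistence} (applied to the persistent character $f$) gives $l(f,T)\leq 2$. Hence $l(f,T)\in\{1,2\}$, and it suffices to rule out the value~$2$.

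Here is where Lemma~\ref{fbarnonpers} does all the work: if we had $l(f,T)=2$, then since $f$ is persistent, that lemma would force $\mathcalligra{p}(\bar f,T)=np$, contradicting our assumption that $\bar f$ is persistent. Therefore $l(f,T)=1$, completing the proof. I do not expect any real obstacle: the statement is designed precisely to make Lemma~\ref{fbarnonpers} (together with Theorem~\ref{characterize}\,(2) and Lemma~\ref{MPboundsPersistence}\,(2)) dovetail into an equivalence, and the only subtlety is remembering that the hypothesis of non-constancy is what guarantees $l(f,T)\geq 1$ in the backward direction.
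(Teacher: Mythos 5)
Your proposal is correct and follows essentially the same route as the paper: the forward direction via Part~2 of Theorem~\ref{characterize}, and the backward direction by combining non-constancy (giving $l(f,T)\geq 1$), Part~2 of Lemma~\ref{MPboundsPersistence} (giving $l(f,T)\leq 2$), and Lemma~\ref{fbarnonpers} to exclude the case $l(f,T)=2$. No gaps.
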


\begin{proof}
Let $l(f,T)=1$. Then, $l(\bar{f},T)=1$ and thus, by Theorem \ref{characterize}, Part 2, both $f$ and $\bar{f}$ are persistent on $T$, i.e. we have $ \mathcalligra{p}(f,T)=\mathcalligra{p}(\bar{f},T)=p. $

Conversely, if $ \mathcalligra{p}(f,T)=\mathcalligra{p}(\bar{f},T)=p$, i.e. if both $f$ and $\bar{f}$ are persistent on $T$, then by the second part of Lemma \ref{MPboundsPersistence}, we have $l(f,T)=l(\bar{f},T)\leq 2$. As $f$ (and thus $\bar{f}$) is not constant by assumption, we also have $l(f,T)=l(\bar{f},T)\geq 1$. Now if we had $l(f,T)=l(\bar{f},T)=2$, by Lemma \ref{fbarnonpers}, one of the two characters $f$, $\bar{f}$ could not be persistent. But as both are persistent by assumption, we conclude $l(f,T)=l(\bar{f},T)= 1$, which completes the proof.
\end{proof}

Thus, we now know that all characters which have parsimony score 1 (and thus correspond to splits in the Bunemann setting) are exactly those characters $f$, where both $f$ and its inverted version $\bar{f}$ are persistent on $T$. This allows us to prove Theorem \ref{bunemanpers}.

\begin{proof} [Proof of Theorem \ref{bunemanpers}] First note that $\mathcal{A}_T^p$ does not contain any constant characters as it is based on the splits of $\Sigma^*(T^u)$. So for each $f \in \mathcal{A}_T^p$, we have $l(f,\hat{T})\geq 1$ for each binary phylogenetic $X$-tree $\hat{T}$. Moreover, as we consider only splits from $\Sigma^*(T^u)$, i.e. non-trivial splits, we know that all $f \in \mathcal{A}_T^p$ are informative, \red{i.e. they contain at least two zeros and two ones.}

Now let $\widetilde{T}$ be such that all $ f \in \mathcal{A}_T^p$ are persistent on $\widetilde{T}$. Note that by construction,  for each character $f \in \mathcal{A}_T^p$,  $\mathcal{A}_T^p$ contains also its inverted character $\bar{f}$. Now, as for each such pair $f$, $\bar{f}$ we have persistence, by Lemma \ref{score1characterized}, we conclude that $l(f,\widetilde{T})=l(\bar{f},\widetilde{T})=1$. Again, this implies that there is a $\sigma \in \Sigma^*(\widetilde{T}^u)$ such that $f$ and $\bar{f}$ correspond to $\sigma$ (note that $\sigma$ cannot be in $\Sigma(\widetilde{T}^u)\setminus \Sigma^*(\widetilde{T}^u)$, because $f$ is informative). As this by assumption holds for all $ f \in \mathcal{A}_T^p$, and as $|\Sigma^*(\widetilde{T}^u)|=|\Sigma^*(T^u)|=n-3$, we conclude that $\Sigma^*(\widetilde{T}^u)=\Sigma^*(T^u)$ which, by the Buneman theorem (\citet{Bunemann1971}), implies that $\widetilde{T}^u = T^u$. This completes the proof.
\end{proof}

Theorem \ref{bunemanpers} immediately leads to the following corollary.

\begin{corollary} \label{2n-3suffice}Let $T$ be a rooted binary phylogenetic tree with $n$ leaves. Then, listing all $2(n-3)$ persistent characters $f$ of $T$ that correspond to $\Sigma^*(T)$ together with their persistence status $\mathcalligra{p}(f,T)=p$ suffices to uniquely determine $T^u$. 
\end{corollary}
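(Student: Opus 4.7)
The plan is to observe that this corollary is essentially a reformulation of Theorem \ref{bunemanpers} together with the standard count of non-trivial splits. First I would verify the cardinality: an unrooted binary phylogenetic tree on $n$ leaves has exactly $n-3$ internal edges and hence $|\Sigma^*(T^u)| = n-3$. Since for each $\sigma \in \Sigma^*(T^u)$ we obtain two binary characters $f_\sigma$ and $\bar{f}_\sigma$, the alignment $\mathcal{A}_T^p$ from Theorem \ref{bunemanpers} has exactly $2(n-3)$ entries, which matches the count in the corollary.

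Next I would confirm that all $2(n-3)$ of these characters are indeed persistent on $T$. Each $\sigma \in \Sigma^*(T^u)$ corresponds to an inner edge of $T^u$, so $l(f_\sigma, T) = l(\bar{f}_\sigma, T) = 1$; by Part 2 of Theorem \ref{characterize} (or equivalently the forward direction of Lemma \ref{score1characterized}), both $f_\sigma$ and $\bar{f}_\sigma$ are persistent on $T$. So listing these characters with persistence status $\mathcalligra{p}(f, T) = p$ is consistent with the actual tree $T$.

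Finally, I would invoke Theorem \ref{bunemanpers} directly: suppose $\widetilde{T}$ is any rooted binary phylogenetic tree on which all $f \in \mathcal{A}_T^p$ are persistent. Then the theorem yields $\widetilde{T}^u = T^u$, so $T^u$ is uniquely determined by the list of characters together with their persistence status. There is no real obstacle here, as the corollary is a direct bookkeeping consequence of the theorem; the only thing worth flagging explicitly is that since $\mathcal{A}_T^p$ contains both $f_\sigma$ and $\bar{f}_\sigma$ for each non-trivial split, we cannot reduce the count below $2(n-3)$ using the argument of Theorem \ref{bunemanpers} itself, which is precisely why the corollary states $2(n-3)$ rather than $n-3$.
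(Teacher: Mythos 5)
Your proposal is correct and follows exactly the route the paper takes: the paper's proof is the one-line observation that the corollary is a direct consequence of Theorem \ref{bunemanpers}, and your additional bookkeeping (the count $|\Sigma^*(T^u)|=n-3$ giving $2(n-3)$ characters, and their persistence on $T$ via Part 2 of Theorem \ref{characterize}) just makes that dependence explicit.
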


\begin{proof} The statement is a direct consequence of Theorem \ref{bunemanpers}. 
\end{proof}

So while we now know that $2(n-3)$ characters together with their persistence status suffice to fix the unrooted version of $T$, it can easily be seen that these characters do not suffice to fix $T$. We illustrate this with a simple example. 

\begin{example}
Consider again the two \blue{phylogenetic $X$-}trees $T_1$ and $T_2$ with $n=4$ leaves as depicted in Figure \ref{Fig_TreesUnrooted}. Note that $T_1^u = T_2^u=T$ as depicted in Figure \ref{Fig_TreesUnrooted}. 

\begin{figure}[htbp]
	\centering
	\includegraphics[scale=0.1]{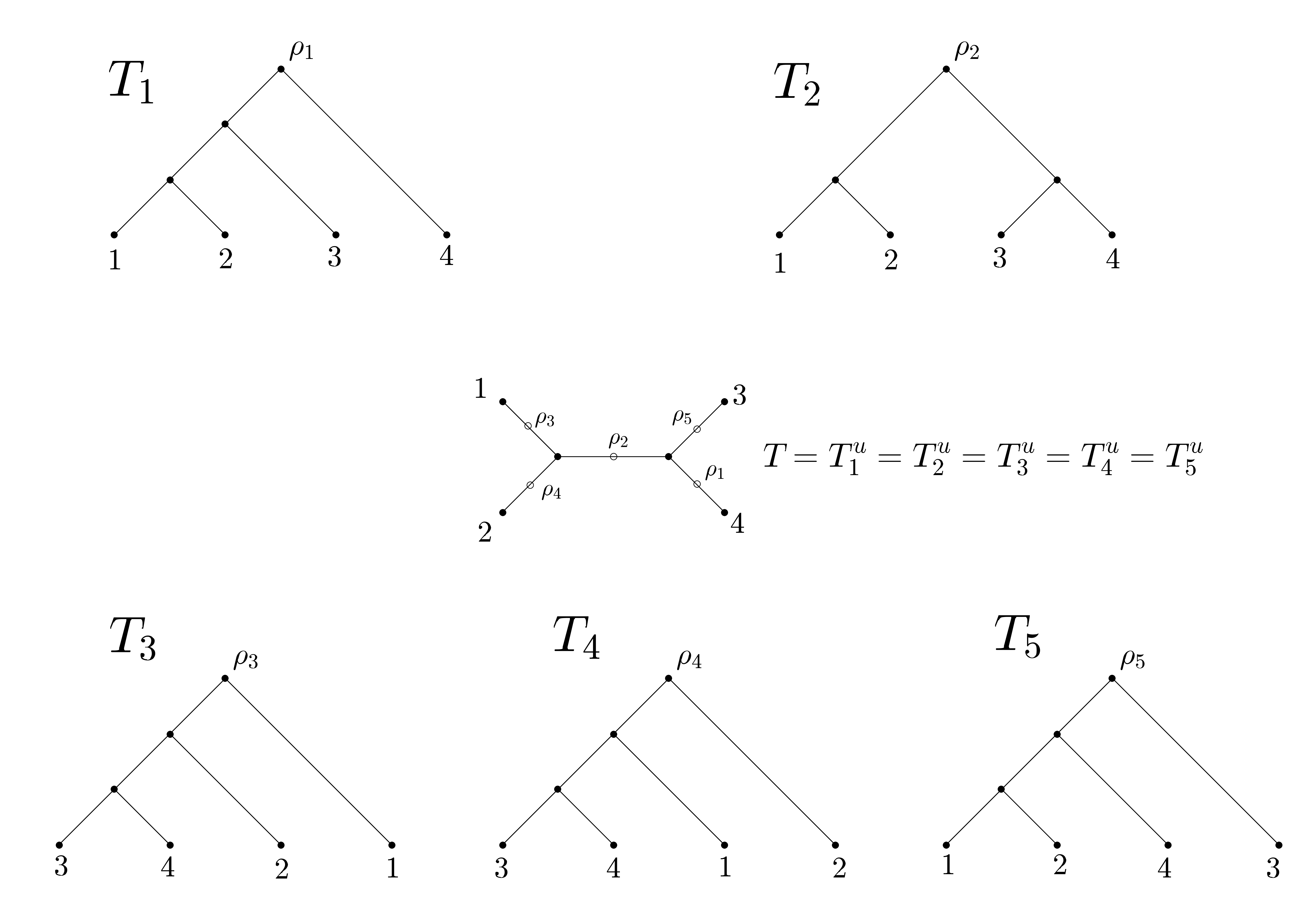}
	\caption{Tree $T$: Unrooted version of trees $T_1$ and $T_2$ as depicted in Figure \ref{Fig_TreesP2} as well as of trees $T_3, \, T_4$ and $T_5$. Note that on all rootings of $T$ the characters $f=1100$ and $\bar{f}=0011$ are persistent.}
	\label{Fig_TreesUnrooted}
\end{figure}

Thus, we have $\Sigma^*(T_1^u) = \Sigma^*(T_2^u) =\Sigma^*(T) =\{ 12|34\}$. Therefore, the $2(n-3)=2(4-3)=2$ characters that correspond to $\Sigma^*(T_1^u)$ and $ \Sigma^*(T_2^u) $ are $f=1100$ and $\bar{f}=0011$. Note that $f$ and $\bar{f}$ are persistent on $T_1$ and $T_2$. So if we were given these two characters together with their persistence status, i.e. with the information that they are persistent on the tree that we seek, we still could not distinguish between $T_1$ and $T_2$ (in fact, there are three more \blue{phylogenetic $X$-}trees -- namely the other three of the five possible rootings of $T$ -- where both of these characters are persistent, see Figure \ref{Fig_TreesUnrooted}). But there is no rooted binary phylogenetic \blue{$X$-}tree $\widehat{T}$ with $ \widehat{T}^u \neq T^u$ on which both $f$ and $\bar{f}$ are persistent. As an example, consider tree $\widehat{T}$ as depicted in Figure \ref{Fig_That}, whose unrooted version $ \widehat{T}^u$ is also depicted in Figure \ref{Fig_That} and does not equal $T^u$. On $\widehat{T}$, only $f$ is persistent (the unique persistent extension is depicted in Figure \ref{Fig_That}), but $\bar{f}$ is not. This is due to the fact that $l(f,\widehat{T})=l(\bar{f},\widehat{T})=2$ and thus, according to Lemma \ref{fbarnonpers}, if $f$ is persistent, $\bar{f}$ cannot be persistent.

\begin{figure}[htbp]
	\centering
	\includegraphics[scale=0.1]{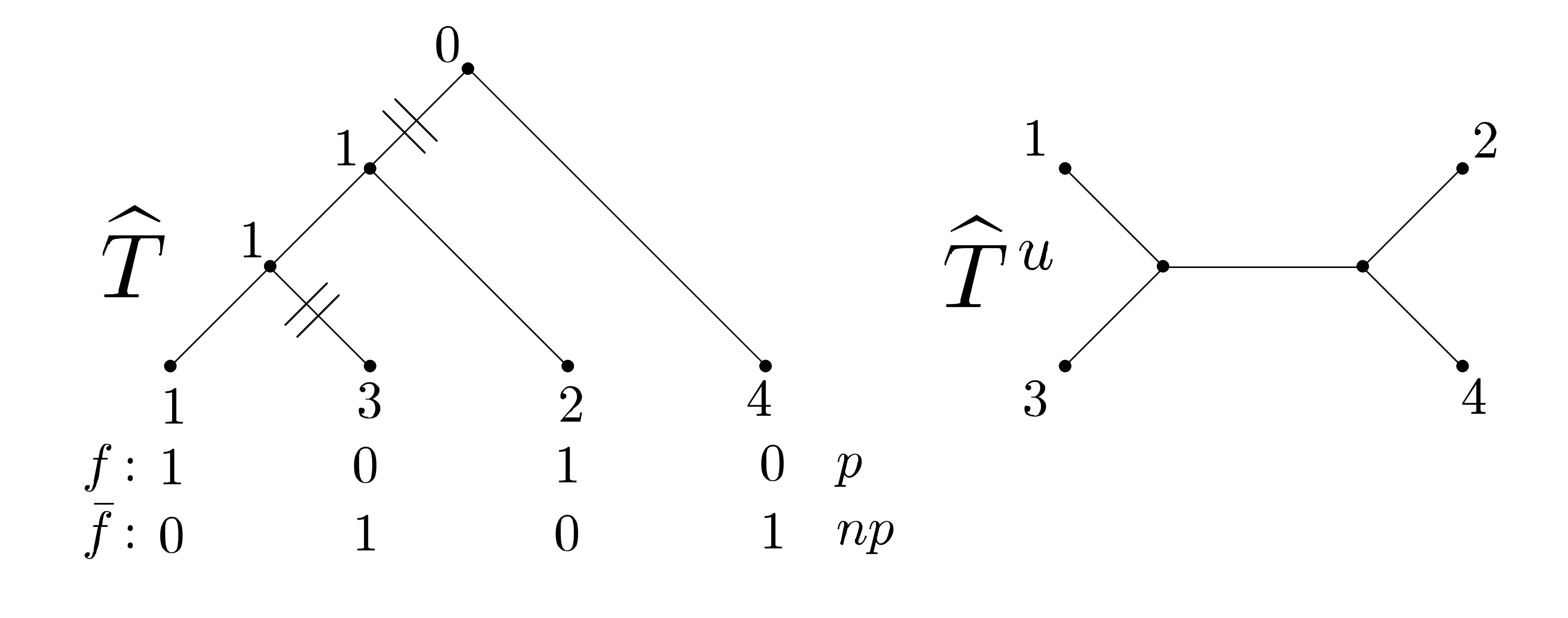}
	\caption{Tree $\widehat{T}$ and its unrooted version $\widehat{T}^u$, on which only $f=1100$ is persistent, while $\bar{f}=0011$ is not. A persistent extension for $f$ is obtained by assigning state $0$ to the root and state $1$ to all other internal nodes of $\widehat{T}$.}
	\label{Fig_That}
\end{figure}

\end{example}

Moreover, this example shows that not even {\em all} persistent characters of a rooted binary phylogenetic tree $T$ together with their persistence status necessarily suffice to uniquely determine $T$. Recall that by Section \ref{sec_shape} we know that $\mathcal{P}_0(T_1)=\mathcal{P}_0(T_2)=2$ and $\mathcal{P}_1(T_1)=\mathcal{P}_1(T_2)=4n-6 = 10$. Moreover, by Theorem \ref{P2} we know that $\mathcal{P}_2(T_2)= 2 S(T_2)-6n+8 = 2(2+2+4)-6 \cdot 4 +8 =0 $, i.e. $T_2$ has no persistent character of parsimony score 2. So {\em all} characters that are persistent on $T_2$ are also persistent on $T_1$, which is why listing all persistent characters of $T_2$ cannot suffice to uniquely determine $T_2$. (Note, however, that $T_1$, on the other hand \emph{is} uniquely determined by its $2+10+2=14$ persistent characters, which can be easily checked).

So in order to fix not only the unrooted but even the rooted version of a binary phylogenetic tree $T$, we need some information provided by the non-persistent characters, too. The main aim of the remainder of this note will therefore be to show that $2n-3$ (carefully chosen) characters suffice to uniquely determine a rooted phylogenetic tree. This is summarized by the following theorem.

\begin{theorem}\label{bound} Let $T$ be a rooted binary phylogenetic \blue{$X$-}tree with $n$ leaves. Then, it is possible to list $2(n-3)+3 = 2n-3$ characters together with their respective persistence status in order to distinguish $T$ from any other rooted binary phylogenetic \blue{$X$-}tree $\widetilde{T}$; i.e. the characters of this list will not all have the same persistence status as on $T$ on any other \blue{phylogenetic $X$-}tree $\widetilde{T}$. Thus, they uniquely determine $T$.
\end{theorem}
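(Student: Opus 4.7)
The plan is to split the argument into two stages. First, I would list the $2(n-3)$ persistent characters $\{f_\sigma,\bar{f}_\sigma : \sigma\in\Sigma^{\ast}(T^u)\}$, each paired with persistence status $p$; by Theorem~\ref{bunemanpers}, any tree $\widetilde{T}$ matching all these statuses must satisfy $\widetilde{T}^u = T^u$. What remains is to separate $T$ from the at most $2n-4$ other rootings of $T^u$ using just three additional characters. Note that any character useful for this task must have parsimony score exactly $2$: by parts 1 and 2 of Theorem~\ref{characterize}, a character with $l(f,T^u)\le 1$ is persistent on every rooting and a character with $l(f,T^u)\ge 3$ is non-persistent on every rooting, so their persistence statuses carry no information about the position of the root.

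For the second stage I would rely on the following consequence of Lemma~\ref{01induces2} and part 3 of Theorem~\ref{characterize}: for a character $f$ with $l(f,T^u)=2$, the two change-edges of an optimal extension partition $T^u$ into three connected components, and the middle component $M(f)$ (the unique one adjacent to both removed edges, carrying the minority state on its leaves) has the property that $f$ is non-persistent on the rooting of $T^u$ at an edge $e$ if and only if $e$ lies inside $M(f)$. Writing $e_T$ for the edge of $T^u$ at which $T$ is rooted, a score-$2$ character $f$ therefore distinguishes $T$ from the rooting at $e$ precisely when exactly one of $e, e_T$ lies in $M(f)$.

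The three additional characters $g_1,g_2,g_3$ are then chosen so as to separate $e_T$ from each of the at most four edges of $T^u$ sharing an endpoint with $e_T$. The point is that whenever some $g_i$ distinguishes $e_T$ from a neighbour $e'$, it automatically also distinguishes $e_T$ from every edge of $T^u$ lying beyond $e'$, because all such edges lie with $e'$ in an outer component of $M(g_i)$ or coincide with the removed edges of $g_i$; so handling the neighbours of $e_T$ handles all of $T^u$. For each neighbour I would construct a suitable $g_i$ by choosing two edges of $T^u$ to remove such that the induced three-block partition of $X$ is not itself a split of $T^u$ (guaranteeing $l(g_i,T^u)=2$) and such that $M(g_i)$ contains $e_T$ but not the chosen neighbour, or vice versa.

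The main obstacle, I expect, is the pendant case: if $e_T$ is a pendant edge of $T^u$ incident to some leaf $\ell$, then any score-$2$ middle containing $e_T$ is forced to extend past the unique internal neighbour of $e_T$, because such a middle must carry at least two minority-state leaves and a middle consisting only of $\ell$ would degenerate to a trivial split with $l=1$. Hence one cannot separate $e_T$ from this internal neighbour using a character that is non-persistent on $T$; instead one must use a character persistent on $T$ whose middle contains the neighbour but not $e_T$. Confirming that three characters combined in this ``mixed'' fashion always suffice for the pendant case, and carrying out the analogous but simpler case analysis when $e_T$ is internal in $T^u$, is the combinatorial heart of the proof.
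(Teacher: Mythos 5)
Your first stage coincides exactly with the paper's: the $2(n-3)$ characters $f_\sigma,\bar f_\sigma$ with status $p$ pin down $T^u$ via Theorem~\ref{bunemanpers} (this is the paper's Corollary~\ref{2n-3suffice}), and the observation that only score-$2$ characters can carry information about the root position is correct. The gap is in your second stage, whose key structural claim is false. You assert that for a score-$2$ character $f$ the rooting at an edge $e$ of $T^u$ makes $f$ non-persistent if and only if $e$ lies in the ``middle'' component $M(f)$ of the two-edge cut carrying the minority state. Counterexample: let $T^u$ be the unrooted caterpillar on six leaves $1,\dots,6$ in caterpillar order (internal path $a\text{--}b\text{--}c\text{--}d$ with cherries $\{1,2\}$ at $a$, $\{5,6\}$ at $d$, and leaves $3,4$ pendant at $b,c$), and let $f=110011$. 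Then $l(f,T^u)=2$, the cut edges are $\{a,b\}$ and $\{c,d\}$, and the middle component carries the minority state $0$ and contains the pendant edges to $3$ and $4$ as well as the internal edge $\{b,c\}$. Rooting on the pendant edge to leaf $4$ -- an edge inside $M(f)$ -- nevertheless makes $f$ persistent: assign the root state $0$, make one $0\rightarrow 1$ change on the edge into the subtree with leaf set $\{1,2,3,5,6\}$, then one $1\rightarrow 0$ change on the pendant edge to leaf $3$ (the Fitch union nodes are $\rho$ and $b$, and the conditions of Theorem~\ref{characterize}, Part~3 are met). Worse, the set of rootings on which this $f$ is persistent is exactly $\{\text{pendant-}3,\ \text{pendant-}4\}$, two non-adjacent edges, so the persistence status is not constant on the components cut out by the change edges; this also invalidates your reduction that separating $e_T$ from its at most four neighbouring edges automatically separates it from everything beyond them.

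Beyond the false lemma, you explicitly defer the actual construction of $g_1,g_2,g_3$ and the case analysis (``the combinatorial heart of the proof''), so even granting the structural picture the argument is not complete. The paper avoids all of this by not attempting a general ``which rootings make $f$ persistent'' description: Lemma~\ref{4suffice} constructs three concrete characters from the tree itself -- two persistent ones obtained from a deepest cherry $[x_1,x_2]$ by flipping $x_1$ (resp.\ $x_2$) to $0$ inside an otherwise all-$1$ subtree $T_a$ against an all-$0$ subtree $T_b$, and, when $T_b$ has at least two leaves, two non-persistent ones built from the four maximal pending subtrees of the root's grandchildren -- and then verifies edge by edge which alternative rootings each character excludes. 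Your instinct about the pendant case (that one must mix persistent and non-persistent witnesses) is sound and is reflected in the paper's case split on whether $T_b$ is a single leaf, but as it stands your proposal rests on a characterization that does not hold.
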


In this context, we first consider the following lemma, which states that three characters together with their persistence status suffice to uniquely determine the root position for a given unrooted phylogenetic tree.

\begin{lemma} \label{4suffice} Let $T$ be a rooted binary phylogenetic \blue{$X$-}tree with root $\rho$ and unrooted version $T^u$. 
\begin{enumerate}
\item If $T$ has four leaves, i.e. $n=4$, then two (carefully chosen) characters together with their persistence status suffice in order to distinguish $T$ from any other \blue{phylogenetic $X$-}tree $\widetilde{T}$ with $\widetilde{T}^u=T^u$.
\item If the number $n$ of leaves of $T$ is at least 5, i.e. $n\geq 5$, then three (carefully chosen) characters together with their persistence status suffice in order to distinguish $T$ from any other \blue{phylogenetic $X$-}tree $\widetilde{T}$ with $\widetilde{T}^u=T^u$.
\end{enumerate}
\end{lemma}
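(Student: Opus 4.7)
The strategy is to exploit the following: characters $f$ with $l(f,T^u)\le 1$ are persistent on \emph{every} rooting of $T^u$ --- the parsimony score is rooting-invariant, and Lemma \ref{score1characterized} then guarantees persistence of both $f$ and $\bar{f}$ in that case --- so such characters cannot discriminate among rootings at all. Our candidate characters must therefore satisfy $l(f,T^u)=2$. For these, Theorem \ref{characterize} Part~3 gives a clean geometric test: $f$ is persistent on a rooting $\widetilde{T}$ if and only if the set of state-$1$ leaves of $f$ can be written as $L(\widetilde{T}_v)\setminus L(\widetilde{T}_x)$ for some node $v$ and some non-child descendant $x$ of $v$ in $\widetilde{T}$ (a ``clade minus a proper subclade''). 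The proof then reduces to picking a small collection of such characters whose joint persistence pattern on $T$ differs from the joint pattern on every other rooting of $T^u$, with all verifications done via this clade-minus-subclade criterion.

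For Part~1 ($n=4$), $T^u$ is the unique four-leaf double cherry, with cherries $\{a,b\}$ and $\{c,d\}$, giving five rootings: the balanced rooting on the internal edge and four caterpillar rootings on the pendant edges. If $T=((a,b),(c,d))$ is the balanced rooting, I would pick the cross-cherry characters $f_1,f_2$ with state-$1$ sets $\{b,c\}$ and $\{a,d\}$; neither set is a clade-minus-subclade in $T$ (the only non-singleton proper clades are the cherries themselves), so both characters are non-persistent on $T$, while on each of the four caterpillar rootings at least one of them becomes a clade-minus-subclade (for instance $\{b,c\}=\{b,c,d\}\setminus\{d\}$ on $(a,(b,(c,d)))$), making the pattern $(np,np)$ unique to $T$. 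If $T$ is instead a caterpillar rooting, then by Theorem \ref{P2} we have $\mathcal{P}_2(T)=2$, and Lemma \ref{01induces2} produces exactly two persistent parsimony-$2$ characters, arising from the unique admissible pair of Fitch union nodes $(\rho,w)$; I would take these as $f_1,f_2$ and check directly, over the four other rootings, that no other rooting makes both of them persistent, so that the pattern $(p,p)$ is unique to $T$.

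For Part~2 ($n\ge 5$), two characters no longer suffice in general (since $T^u$ has $2n-3\ge 7$ edges but only $2^2=4$ status patterns are available), but three do. My plan mirrors Part~1: I would fix a cherry $\{a,b\}$ of $T^u$ (existence is guaranteed since $n\ge 4$) with parent $p$, choose a leaf $y$ on the opposite side of $p$ so that the cross-cherry characters $f_1,f_2$ with state-$1$ sets $\{b,y\}$ and $\{a,y\}$ both have parsimony score $2$, and take these as the first two characters. Their joint persistence pattern localizes the root of $T$ to a small region of $T^u$ (the $\{a,b\}$-side of $p$, the $y$-side, or the segment in between), and a third character of clade-minus-subclade form, tailored to the region identified by the first two, then separates $T$ from the few remaining candidate rootings; the additional internal structure available for $n\ge 5$ is what guarantees a suitable third character always exists.

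The main obstacle is the case analysis in Part~2: one has to enumerate the possible positions of the root of $T$ relative to $\{a,b\}$ and $y$, and in each case exhibit a third character that excludes every alternative rooting via the clade-minus-subclade criterion of Theorem \ref{characterize} Part~3. This bookkeeping is routine but lengthy, and the precise choice of $y$ and of the third character must be adapted to the topology of $T$ near its root; once these choices are in place, the verifications themselves reduce to mechanical checks against the persistence test.
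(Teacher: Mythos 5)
Your reduction of the persistence test for score-$2$ characters to the ``state-$1$ set equals a clade minus a non-child subclade'' criterion is a correct reading of Theorem \ref{characterize}, Part 3 together with Remark \ref{atleastoneinbetween}, and your Part 1 goes through: the characters you name do isolate each four-leaf shape among the five rootings, which is the same kind of explicit finite verification the paper carries out (with slightly different character choices). The problem is Part 2, where the construction of the three characters and the verification that they exclude every alternative rooting \emph{is} the content of the lemma, and the step you defer as ``routine but lengthy bookkeeping'' can provably fail for the choices you describe. Take $T^u$ to be the unrooted caterpillar on $\{1,\dots,5\}$ with cherries $\{1,2\}$ and $\{4,5\}$ and middle leaf $3$, pick the cherry $\{a,b\}=\{1,2\}$ and $y=3$ (both cross characters then have parsimony score $2$, as you require), and let $T=(((1,2),3),(4,5))$. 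The characters with state-$1$ sets $\{2,3\}$ and $\{1,3\}$ are persistent on $T$ \emph{and} on the rootings on the pendant edges of $4$ and of $5$, so three candidates survive. By Theorem \ref{P2} we have $\mathcal{P}_2(T)=2$, and Lemma \ref{01induces2} shows these two characters are exactly the persistent score-$2$ characters of $T$; the analogous lists for the two surviving alternatives are $\{1,2,5\},\{3,5\},\{2,3,5\},\{1,3,5\},\{2,3\},\{1,3\}$ and $\{1,2,4\},\{3,4\},\{2,3,4\},\{1,3,4\},\{2,3\},\{1,3\}$, whose intersection is again just $\{2,3\},\{1,3\}$. Since score-$\leq 1$ characters are persistent on every rooting and score-$\geq 3$ characters on none, no single third character has a status on $T$ differing from its status on \emph{both} remaining candidates, so your plan cannot be completed from this starting point.

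The gap is therefore twofold. First, one extra character supplies only one bit, so it separates $T$ from a residual set of rootings only if \emph{all} of them simultaneously disagree with $T$ on it; you never verify this, and the example shows it can be false. Second, the choice of cherry and of $y$ cannot be made from $T^u$ alone but must be anchored to the actual root of $T$. This is precisely what the paper's proof does: it takes the deepest cherry $[x_1,x_2]$ inside one maximal pending subtree $T_a$ of $\rho$, builds two persistent characters $f_1,f_2$ around it to confine the candidate root positions to the pendant edge of $x_1$, the edge $\{a,b\}$, and the edges of $T_b$, and then, when $T_b$ has at least two leaves, adds two \emph{non-persistent} characters $f_3,f_4$ built from the two maximal pending subtrees of $b$ to sweep out all edges of $T_b$ (and, as a by-product, the pendant edge of $x_1$), with a separate case when $T_b$ is a single leaf. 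Supplying an argument of this kind --- an explicit, root-anchored choice of characters plus an edge-by-edge exclusion --- is what is missing from your proposal.
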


\begin{proof} \leavevmode
\begin{enumerate}
\item We first consider the case of four leaves. 
	Recall that for $n=4$, there are two different tree shapes, namely the caterpillar tree $T_{4}^{cat}$ and the  fully balanced tree $T_2^{bal}$ of height 2. We will consider both of them separately and show that in any case two characters together with their persistence status suffice to correctly determine the root position from the unrooted versions of $T_{4}^{cat}$ and $T_2^{bal}$. 

Consider $T_4^{cat}$ as depicted in Figure \ref{Fig_4Taxa_1}.
Then, e.g. $f_1 = 1001$ and $f_2 = 1010$ together with their persistence status suffice to distinguish $T_4^{cat}$ from any of the four other possible rootings of $T_4^{{cat}^u}$. We have $\mathcalligra{p}(f_1, T_4^{cat})=np$ and $\mathcalligra{p}(f_2, T_4^{cat})=p$ and there is no other rooting of $T_4^{{cat}^u}$ such that $f_2$ is persistent, while $f_1$ is not. 

\begin{figure}[h!]
	\centering
	\includegraphics[scale=0.15]{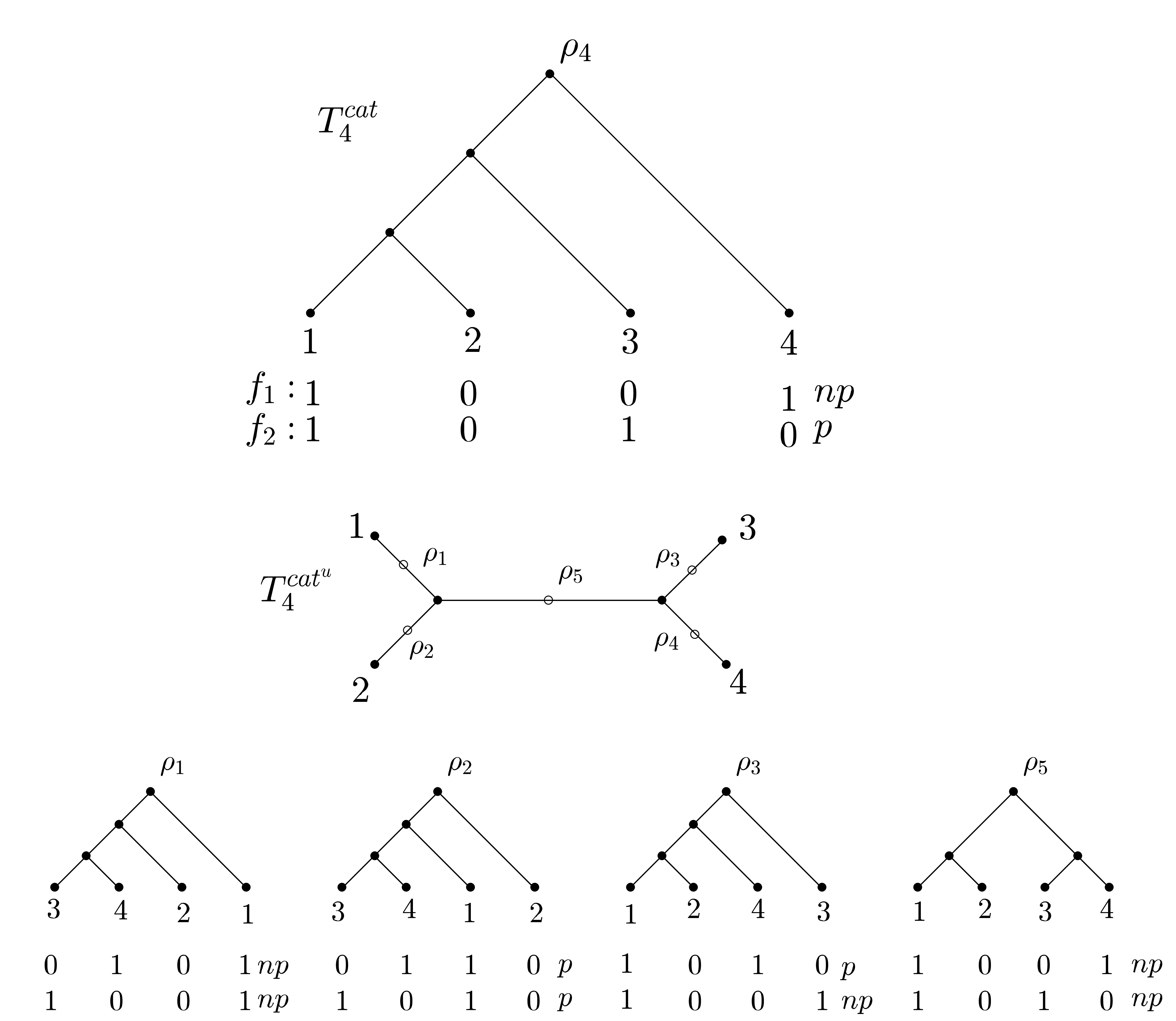}
	\caption{Caterpillar tree $T_4^{cat}$ and its unrooted version. The two characters $f_1$ and $f_2$ together with their persistence status suffice to distinguish $T_4^{cat}$ from any of the other rootings. }
	\label{Fig_4Taxa_1}
\end{figure}

Now, consider $T_2^{bal}$ as depicted in Figure \ref{Fig_4Taxa_2}.
We set $f_1=0101$ and $f_2 = 1010$. Then, we have $\mathcalligra{p}(f_1, T_2^{bal}) = \mathcalligra{p}(f_2, T_2^{bal})=np$ and there is no other rooting of $T_2^{{bal}^u}$ such that $f_1$ and $f_2$ are both not persistent. 

\begin{figure}[h!]
	\centering
	\includegraphics[scale=0.15]{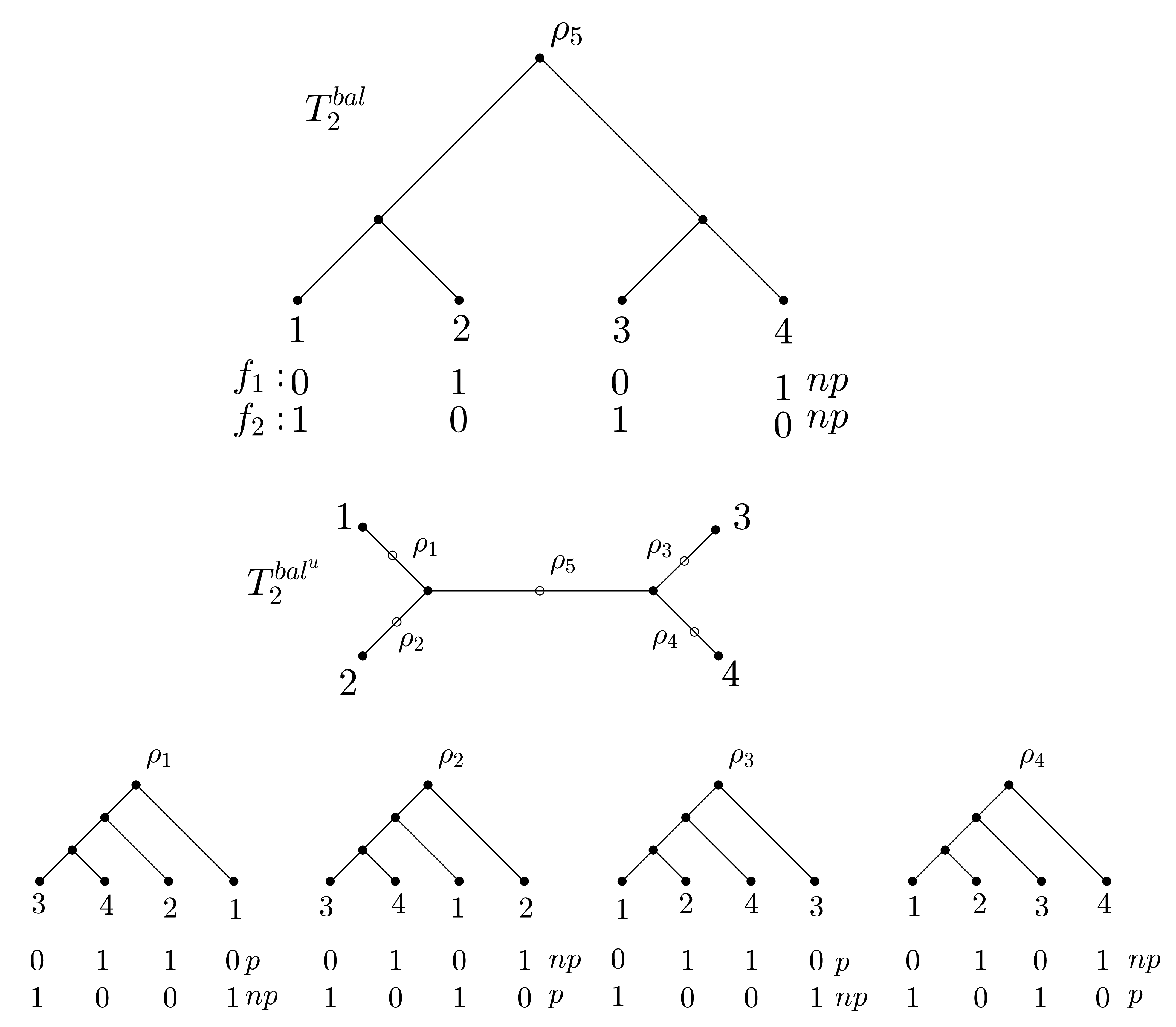}
	\caption{Fully balanced tree $T_2^{bal}$ and its unrooted version. The two characters $f_1$ and $f_2$ together with their persistence status suffice to distinguish $T_2^{bal}$ from any of the other rootings.}
	\label{Fig_4Taxa_2}
\end{figure}

Thus, in both cases two characters together with their persistence status suffice in order to distinguish a tree $T$ on four leaves from any other tree $\widetilde{T}$ on four leaves with $\widetilde{T}^u = T^u$.

\item In the case of $n \geq 5$, we provide an explicit construction of the three characters and their respective persistence status. 

It can be easily seen that every binary rooted tree with $n \geq 5$ leaves has height $h(T) \geq 3$. Note that the root $\rho$ has two direct descendants, say $a$ and $b$. We denote the maximal pending subtrees rooted at these nodes as $T_a$ and $T_b$ and their number of leaves as $n_a$ and $n_b$, respectively. 

Consider the longest path from $\rho$ to any cherry in $T$. We refer to the leaves of this cherry as $x_1$ and $x_2$, respectively. Then, we assume without loss of generality that both $x_1$ and $x_2$ are contained in $T_a$ (otherwise reverse the roles of $T_a$ and $T_b$). Due to $h(T) \geq 3$, $T_a$ has at least one more taxon other than $x_1$ and $x_2$. $T_a$ can thus be further subdivided into its two maximal pending subtrees $T_a^1$ and $T_a^2$. Without loss of generality, we assume that $x_1$ and $x_2$ are part of $T_a^1$.

Now we now construct $f_1$ with $\mathcalligra{p}(f_1,T)=p$ as follows:

\begin{itemize}
\item We set $f_1(x_1)=0$.
\item We set $f_1(x_2)=1$.
\item We set $f_1(x)=1$ for all $x$ that are leaves of $T_a$ but $x \neq x_1,x_2$.
\item We set $f_1(x)=0$ for all $x$ that are leaves of $T_b$.
\end{itemize} 

Note that by construction, we have $l(f_1,T)=2$ and $\mathcalligra{p}(f_1,T)=p$ (Figure \ref{Fig_4suffice_1}).

\begin{figure}[htbp]
	\centering
	\includegraphics[scale=0.1]{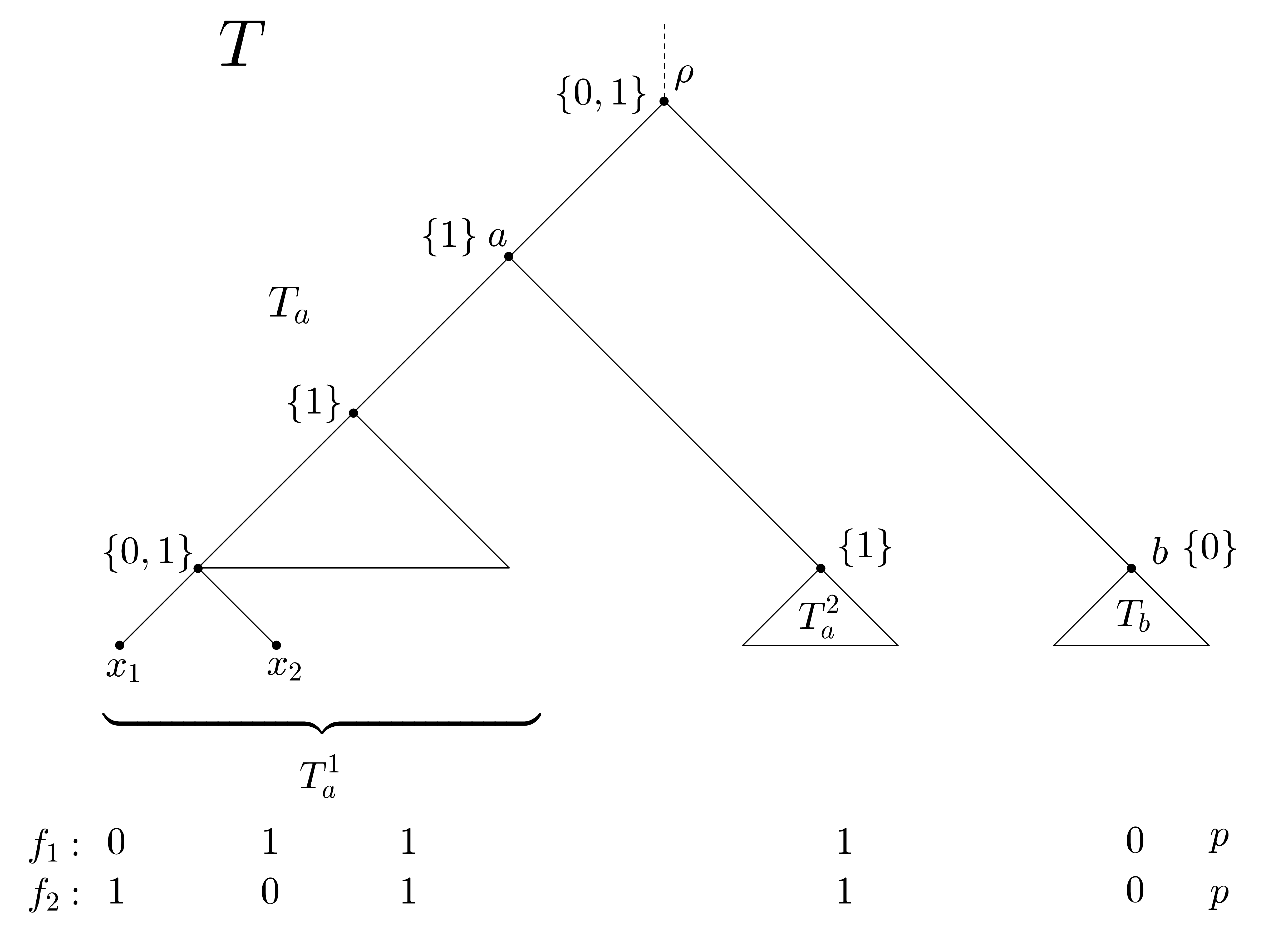}
	\caption{Phylogenetic \blue{$X$-}tree $T$ consisting of subtrees $T_a$ and $T_b$ together with characters $f_1$ and $f_2$. Note that $f_1$ and $f_2$ are both persistent on $T$. For $f_1$ we require a $0 \rightarrow 1$ change on the edge leading from $\rho$ to $a$ and a $1 \rightarrow 0$ change on the edge leading to $x_1$. Analogously, for $f_2$ we require a $ 0 \rightarrow 1$ change on the edge leading to $a$ and a $1 \rightarrow 0$ change on the edge leading to $x_2$. Furthermore, $f_1$ and $f_2$ have parsimony score 2, because the first phase of the Fitch algorithm would assign the state set $\{0,1\}$ to both the parent of $x_1$ and $x_2$ and the root $\rho$ and thus, we have $l(f_1,T)=l(f_2,T)=2$.}
	\label{Fig_4suffice_1}
\end{figure}

Given $T^u$ as depicted in Figure \ref{Fig_4suffice_2} and a persistent character such as $f_1$ with $l(f_1,T^u)=2$, we know by Theorem \ref{P2} that the two $\{0,1\}$ union sets that the Fitch phase will assign to inner nodes during the first phase (as $l(f_1,T^u)=2$) have to be on a common path from $\rho$ to one of the leaves, but they cannot be directly adjacent. Therefore, considering Figure \ref{Fig_4suffice_2}, $f_1$ already gives us some hints concerning the position of $\rho$: As the $0 \rightarrow 1$ change has to happen before the $1 \rightarrow 0$ change, $\rho$ could be placed on any edge of $T_b$ or on the edge $e=\{a,b\}$ connecting $T_a$ and $T_b$ in $T^u$ or on the edge on which $x_1$ is pending. It could not, however, be placed on any other edge of $T_a$, because otherwise $f_1$ would not be persistent.

\begin{figure}[htbp]
	\centering
	\includegraphics[scale=0.1]{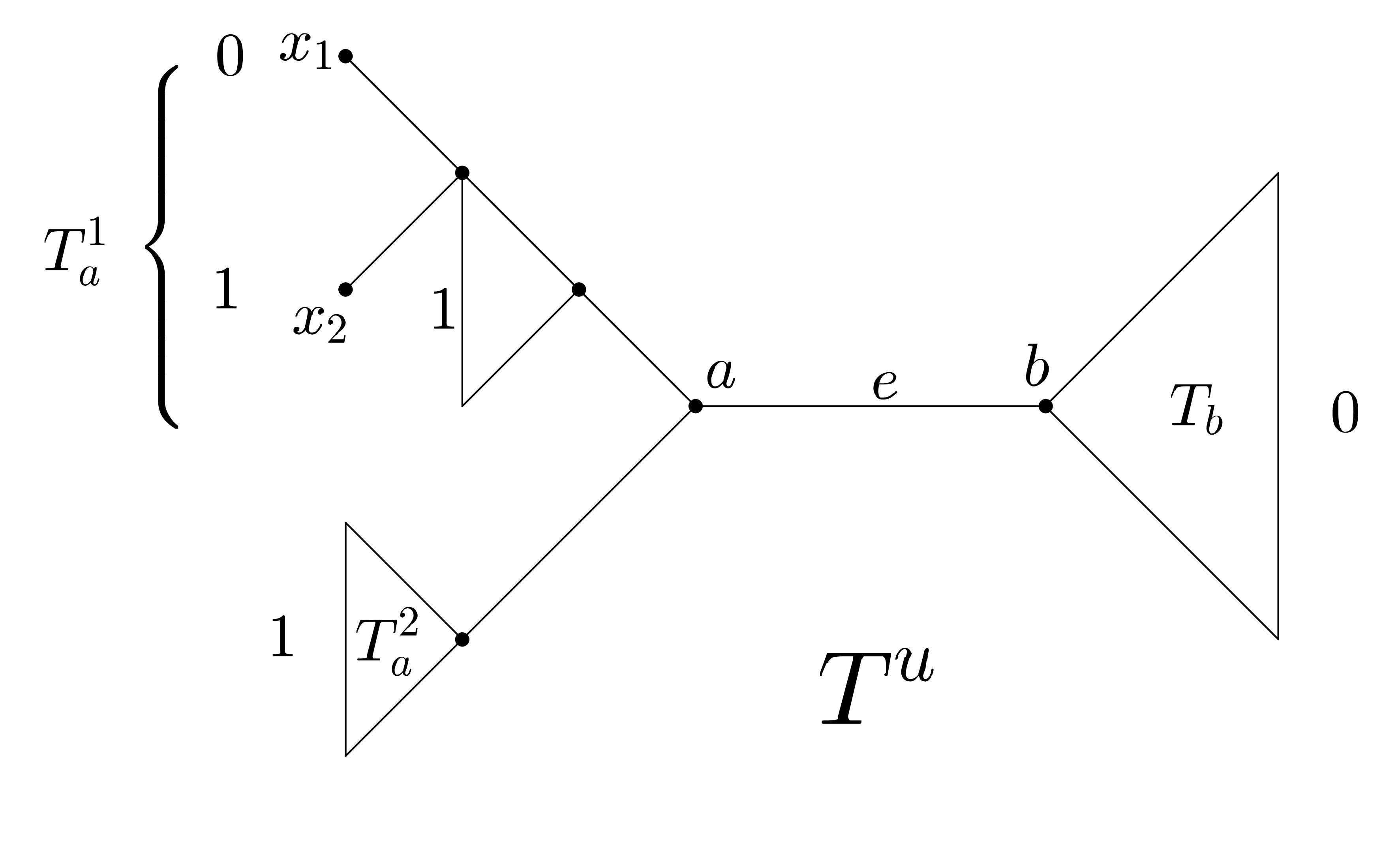}
	\caption{Unrooted version of the \blue{phylogenetic $X$-}tree $T$ depicted in Figure \ref{Fig_4suffice_1} together with character $f_1$.}
	\label{Fig_4suffice_2}
\end{figure}

Now in order to exclude the possibility that the root is wrongly positioned on the edge leading to $x_1$, we construct character $f_2$ with $\mathcalligra{p}(f_2,T)=p$ as follows:

\begin{itemize}
\item We set $f_2(x_1)=1$.
\item We set $f_2(x_2)=0$.
\item We set $f_2(x)=1$ for all $x$ that are leaves of $T_a$ but $x \neq x_1,x_2$.
\item We set $f_2(x)=0$ for all $x$ that are leaves of $T_b$.
\end{itemize} 

Note that by construction, we have $l(f_2,T)=2$ and $\mathcalligra{p}(f_2,T)=p$, i.e. $f_2$ is  persistent on $T$ (Figure \ref{Fig_4suffice_1}).

This indeed excludes the root position on the edge leading to $x_1$, because on that rooted version of $T^u$, $f_2$ would {\em not} be persistent as can be seen in Figure \ref{Fig_4suffice_3}.

\begin{figure}
	\centering
	\includegraphics[scale=0.1]{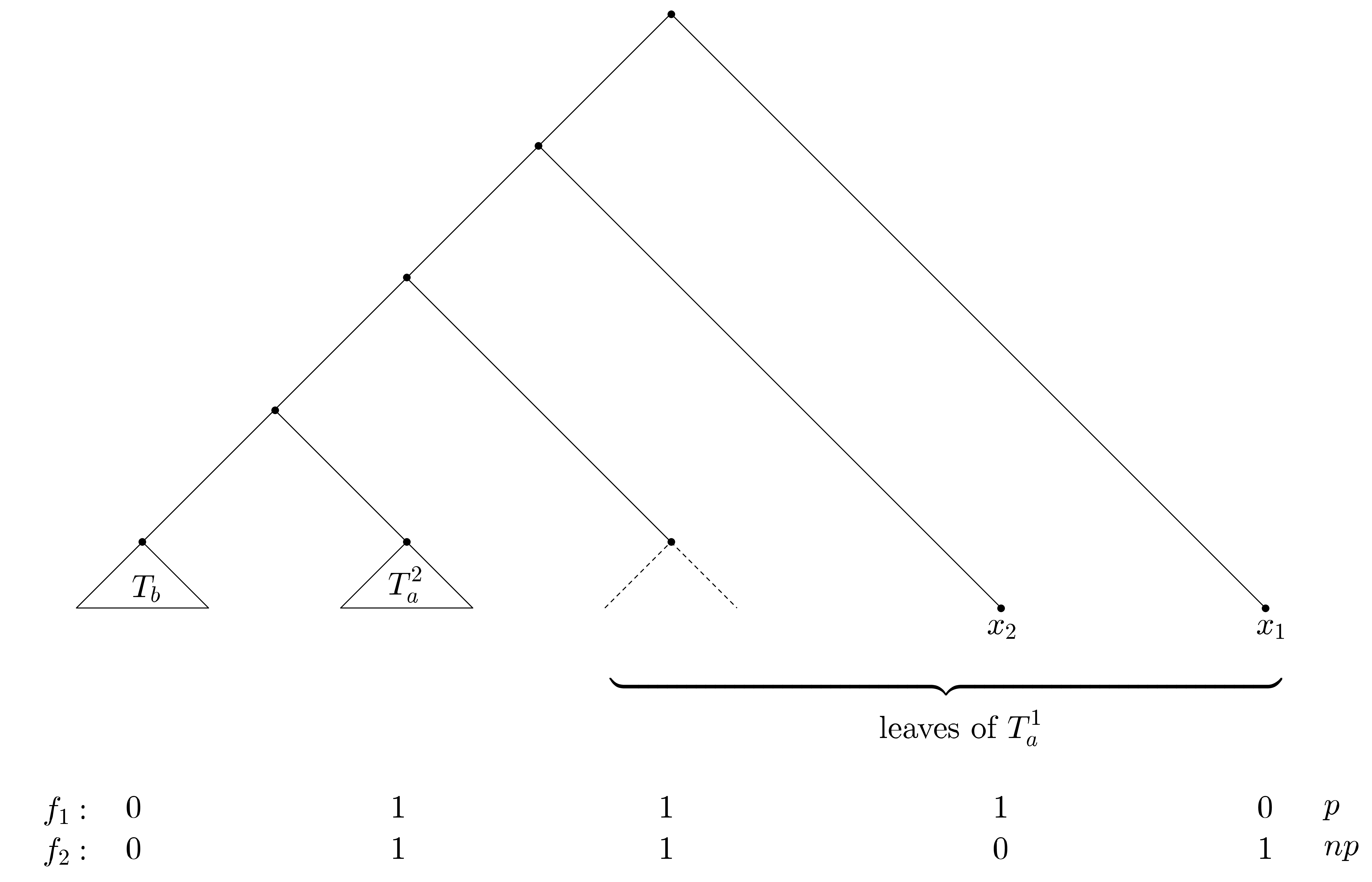}
	\caption{\blue{Phylogenetic $X$-}tree $T^u$ as depicted in Figure \ref{Fig_4suffice_2} rooted on the edge leading to leaf $x_1$ together with characters $f_1$ and $f_2$.}
	\label{Fig_4suffice_3}
\end{figure}

Note that the only possible root positions in $T^u$ are now $e=\{a,b\}$ \red{(i.e. the correct root position of $T$)} and all edges in $T_b$. So if $T_b$ consists of only one node (i.e. $T_b$ contains no edge), we are are already done -- in this case, two characters, namely $f_1$ and $f_2$ suffice to fix the root position (this is for instance the case when $T$ is a caterpillar), because the only remaining root position would be the correct edge $e=\{a,b\}$.

Now if $T_b$ has at least two leaves, we will see that two more characters suffice to exclude the edges in $T_b$ as possible root positions. We will also see that these  two automatically exclude the possibility that the root is wrongly positioned on the edge leading to $x_1$ and make character $f_2$ redundant in this case. In order to show that we note that as $T_b$ has at least two leaves, $T_b$ can also be subdivided into its two maximal pending subtrees $T_b^1$ and $T_b^2$.

Now we construct the following characters $f_3$ and $f_4$ with $\mathcalligra{p}(f_3,T)=\mathcalligra{p}(f_4,T)=np$:

\begin{itemize}
\item We set $f_3(x)=f_4(x)=0$ for all leaves $x$ of $T_a^1$.
\item We set $f_3(x)=f_4(x)=1$ for all leaves $x$ of $T_a^2$.
\item We set $f_3(x)=0$ for all leaves $x$ of $T_b^1$.
\item We set $f_3(x)=1$ for all leaves $x$ of $T_b^2$.
\item We set $f_4(x)=1$ for all leaves $x$ of $T_b^1$.
\item We set $f_4(x)=0$ for all leaves $x$ of $T_b^2$.
\end{itemize} 

Note that by construction, we have $l(f_3,T)=l(f_4,T)=2$ \red{(because applying the first phase of the Fitch algorithm for $f_3$, respectively $f_4$, will result in two $\{0,1\}$ union nodes, namely $a$ and $b$. Note that in this case the root of $T$ will be a $\{0,1\}$ intersection node)} and $\mathcalligra{p}(f_3,T)=\mathcalligra{p}(f_4,T)=np$, i.e. $f_3$ and $f_4$ are {\em not} persistent on $T$ (Figure \ref{Fig_4suffice_6}).

\begin{figure}[htbp]
	\centering
	\includegraphics[scale=0.1]{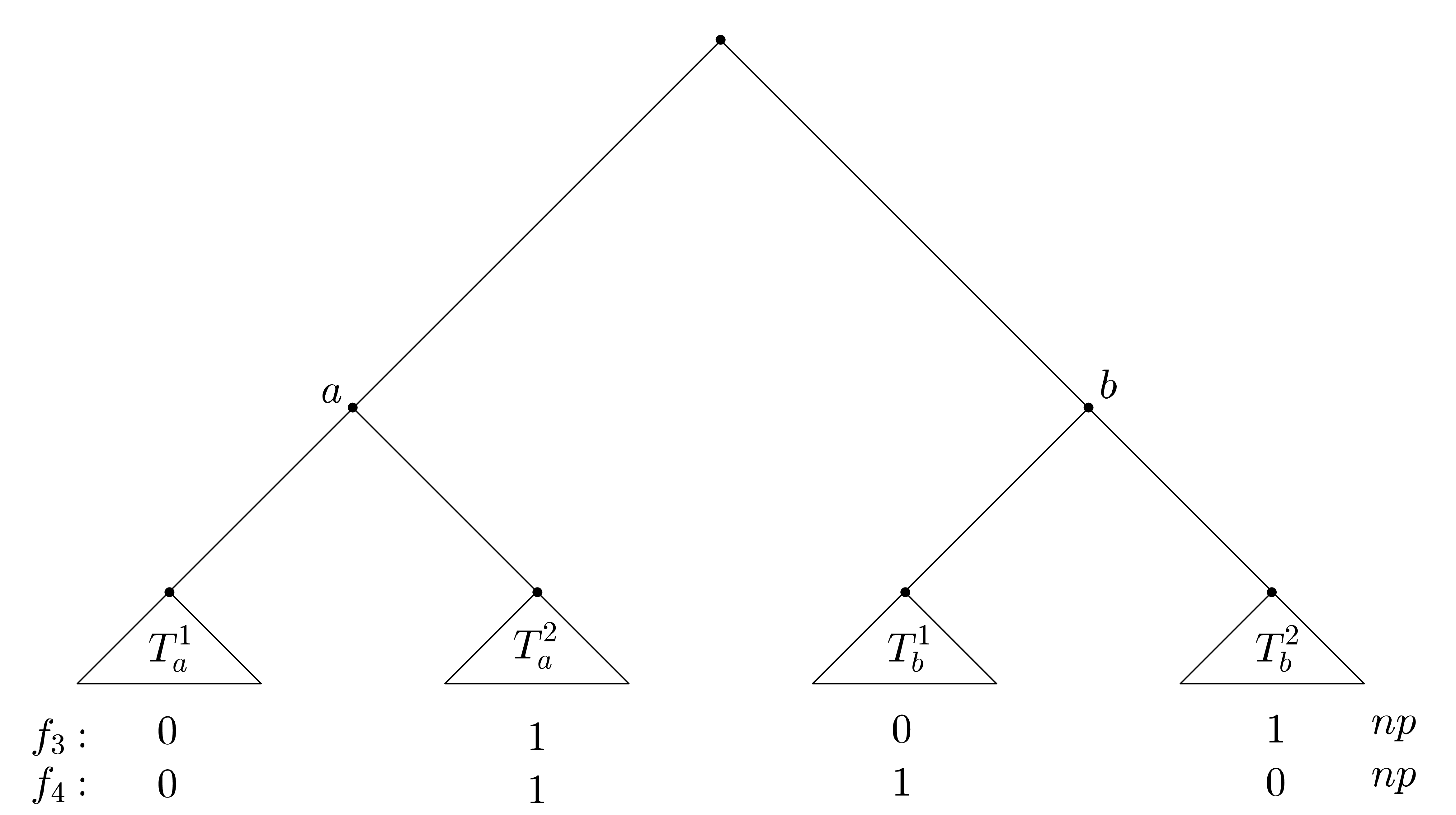}
	\caption{\blue{Phylogenetic $X$-}tree $T$ together with characters $f_3$ and $f_4$.}
	\label{Fig_4suffice_6}
\end{figure}

Now, notice that both $f_3$ and $f_4$ exclude the root position on the edge leading to $x_1$, because they would be both persistent on a tree rooted at this edge (Figure \ref{Fig_4suffice_7}). Thus, in the case that $T_b$ has at least two leaves and we have characters $f_3$ and $f_4$, we do not need character $f_2$ anymore to exclude this root position.

\begin{figure}[htbp]
	\centering
	\includegraphics[scale=0.1]{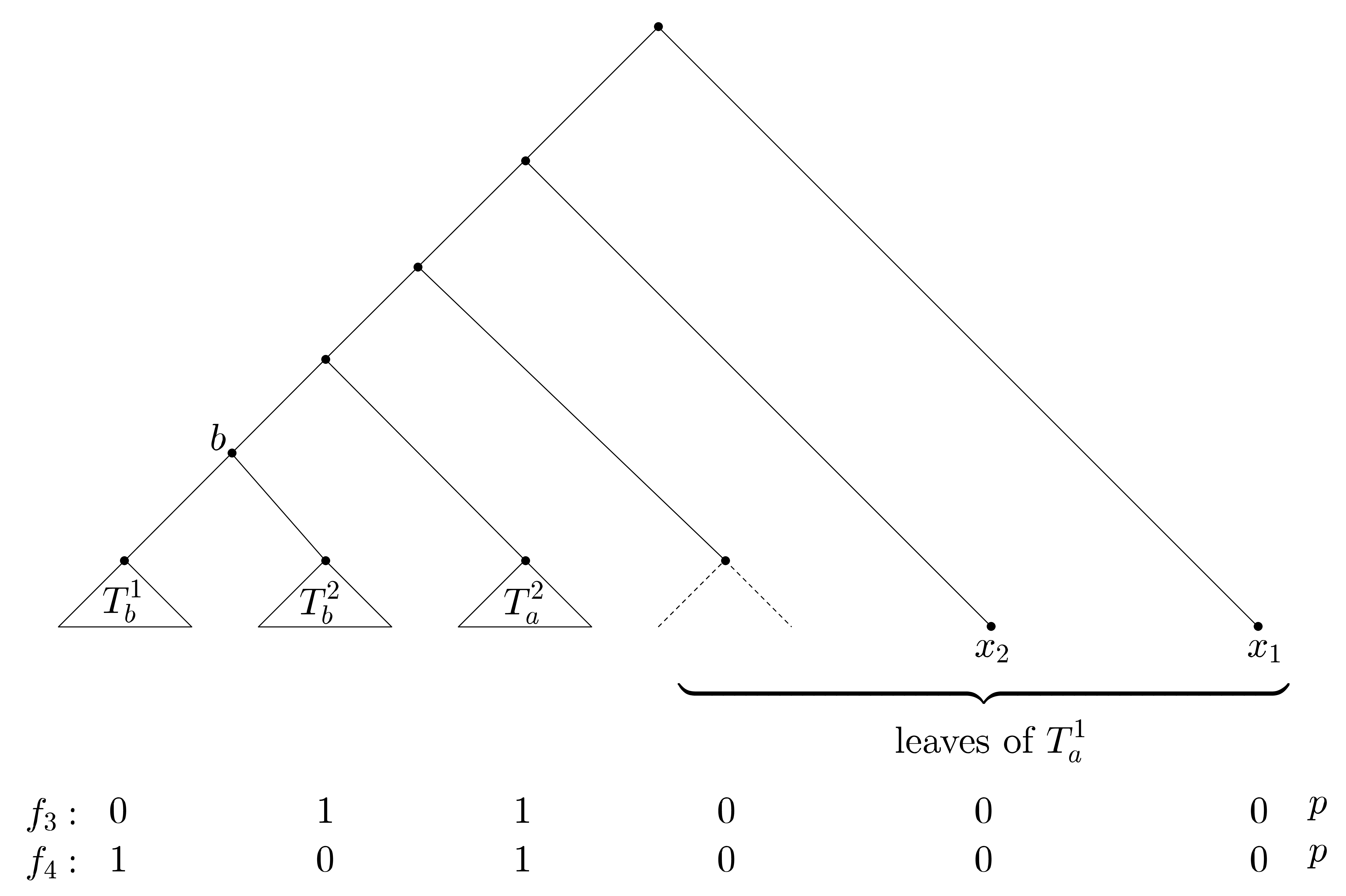}
	\caption{\blue{Phylogenetic $X$-}tree $T^u$ as depicted in Figure \ref{Fig_4suffice_2} rooted on the edge leading to leaf $x_1$ together with characters $f_3$ and $f_4$.}
	\label{Fig_4suffice_7}
\end{figure}

Moreover, $f_3$ and $f_4$ also exclude all edges in $T_b$ as root positions.
For illustration, $f_3$ on $T^u$ is depicted by Figure \ref{Fig_4suffice_4}. Now if the root position was placed somewhere in $T_b^1$ or on the edge leading from $b$ to $T_b^1$, $f_3$ would be persistent on $T$, as can be seen in Figure \ref{Fig_4suffice_5}. But we know that the persistence status of $f_3$ is $np$, so all these edges can be excluded. 

\begin{figure}[htbp]
	\centering
	\includegraphics[scale=0.1]{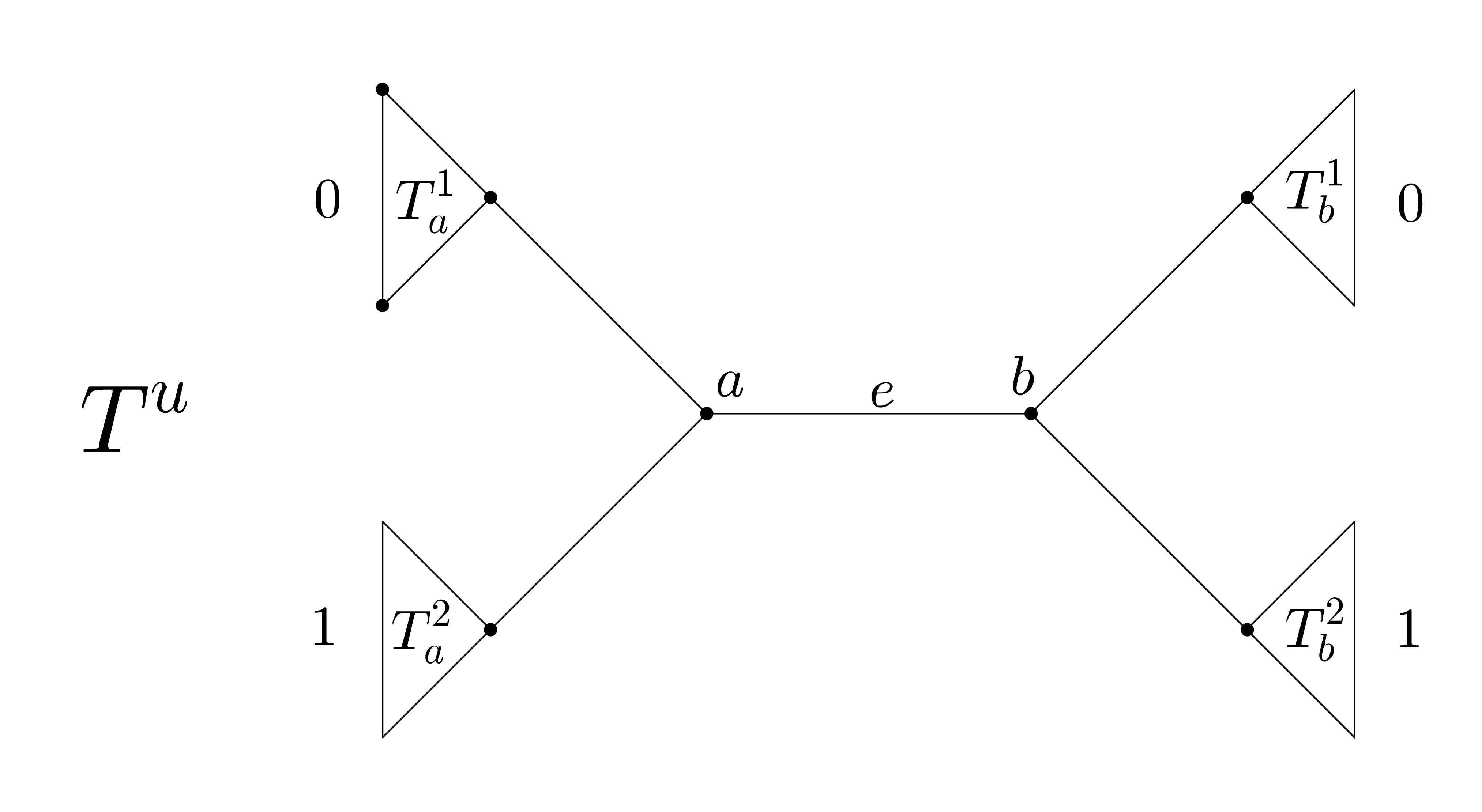}
	\caption{Character $f_3$ on $T^u$ when $T_b$ has more than one leaf.}
	\label{Fig_4suffice_4}
\end{figure}

\begin{figure}[htbp]
	\centering
	\includegraphics[scale=0.1]{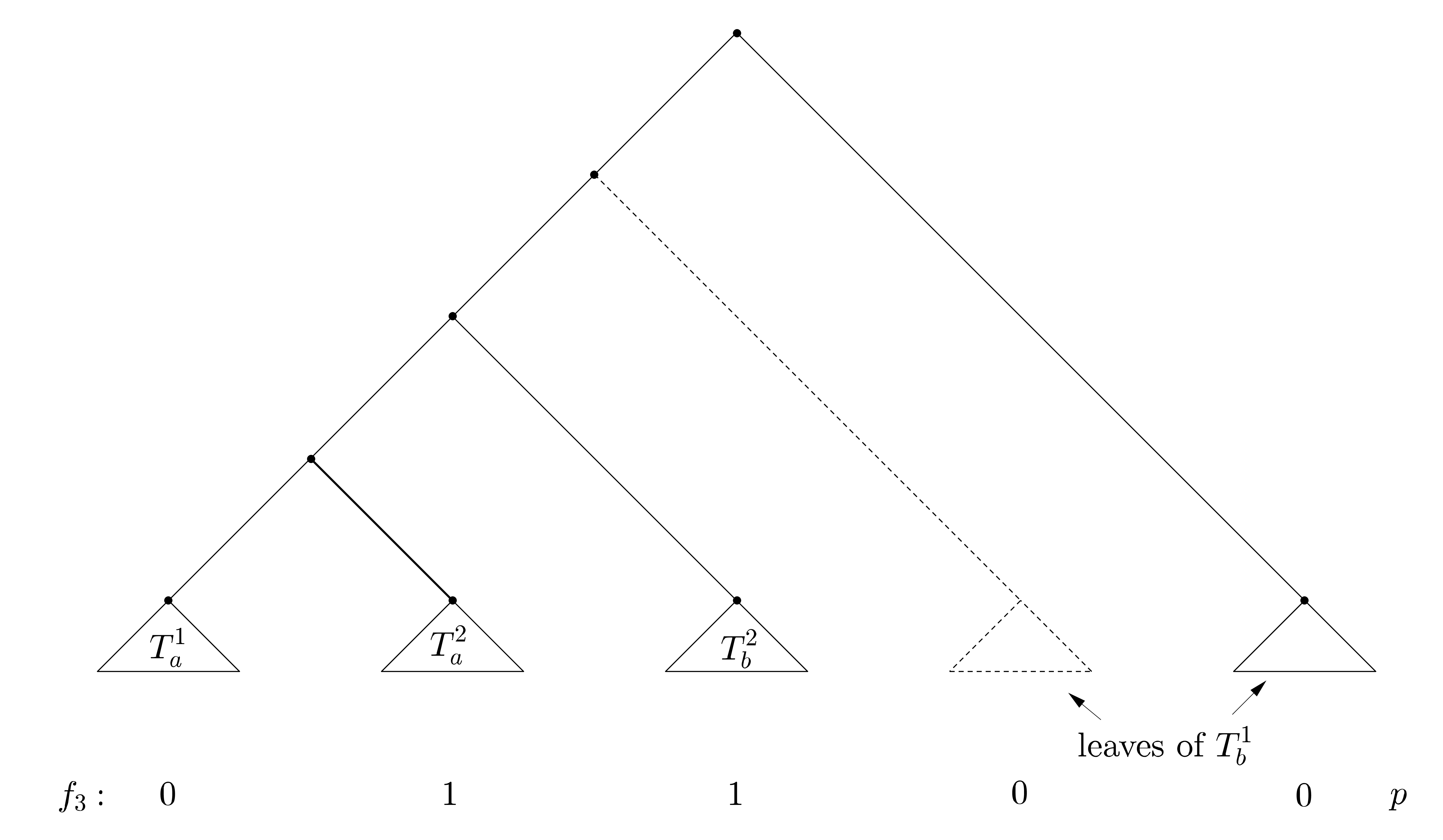}
	\caption{$T^u$ as depicted in Figure \ref{Fig_4suffice_4} rooted somewhere in $T_b^1$ or on the edge leading from $b$ to $T_b^1$ together with the character $f_3$. Note that $f_3$ is persistent on this rooting, while it is not persistent on $T$ as depicted in Figure \ref{Fig_4suffice_6}. }
	\label{Fig_4suffice_5}
\end{figure}

Using $f_4$, we can analogously exclude all edges in $T_b^2$ as well as the one leading from $b$ to $T_b^2$. So in total, all edges in $T_b$ can now be excluded. Thus, the only remaining option for the root position is edge $e=\{a,b\}$, which induces the correct rooted tree $T$. 

Summarizing the above, this completes the proof, as we have shown that three characters together with their persistence status suffice to determine the correct root position. If $T_b$ contains only one leaf, we can use $f_1$ and $f_2$ to determine the correct root position. If $T_b$ contains at least two leaves, we use $f_1, \, f_3$ and $f_4$.
\end{enumerate}
\end{proof}

We now illustrate Lemma \ref{4suffice} with two examples. 
\begin{example} First, consider \blue{the phylogenetic $X$-}tree $T_1^{\ast}$ on 5 leaves depicted in Figure \ref{Fig_Example1}. Decomposing $T_1^{\ast}$ into its two maximal pending subtrees $T_a$ and $T_b$ leads to the situation that $T_b$ consists of only one leaf. Thus, as indicated in the proof of Lemma \ref{4suffice}, two characters, namely $f_1 = 01110$ and $f_2=10110$, together with their persistence status \red{(in this case: $\mathcalligra{p}(f_1,T_1^\ast)=\mathcalligra{p}(f_2,T_1^\ast)=p$)} suffice to distinguish $T_1^{\ast}$ from any other \blue{phylogenetic $X$-}tree $\widetilde{T}$ with $\widetilde{T}^u = T_1^{{\ast}u}$. 

Now, consider \blue{the phylogenetic $X$-}tree $T_2^{\ast}$ on 5 leaves depicted in Figure \ref{Fig_Example2}. In this case, both maximal pending subtrees contain more than one leaf and three characters together with their persistence status suffice to distinguish $T_2^{\ast}$ from any other \blue{phylogenetic $X$-}tree $\widetilde{T}$ with $\widetilde{T}^u = T_2^{{\ast}u}$, namely \red{$f_1=01100$ (with $\mathcalligra{p}(f_1, T_2^\ast)=p$), $f_3=00101$ (with $\mathcalligra{p}(f_3, T_2^\ast)=np$ ), and $f_4=00110$ (with $\mathcalligra{p}(f_4, T_2^\ast)=np$)}.

\begin{figure}[htbp]
	\centering
	\includegraphics[scale=0.22]{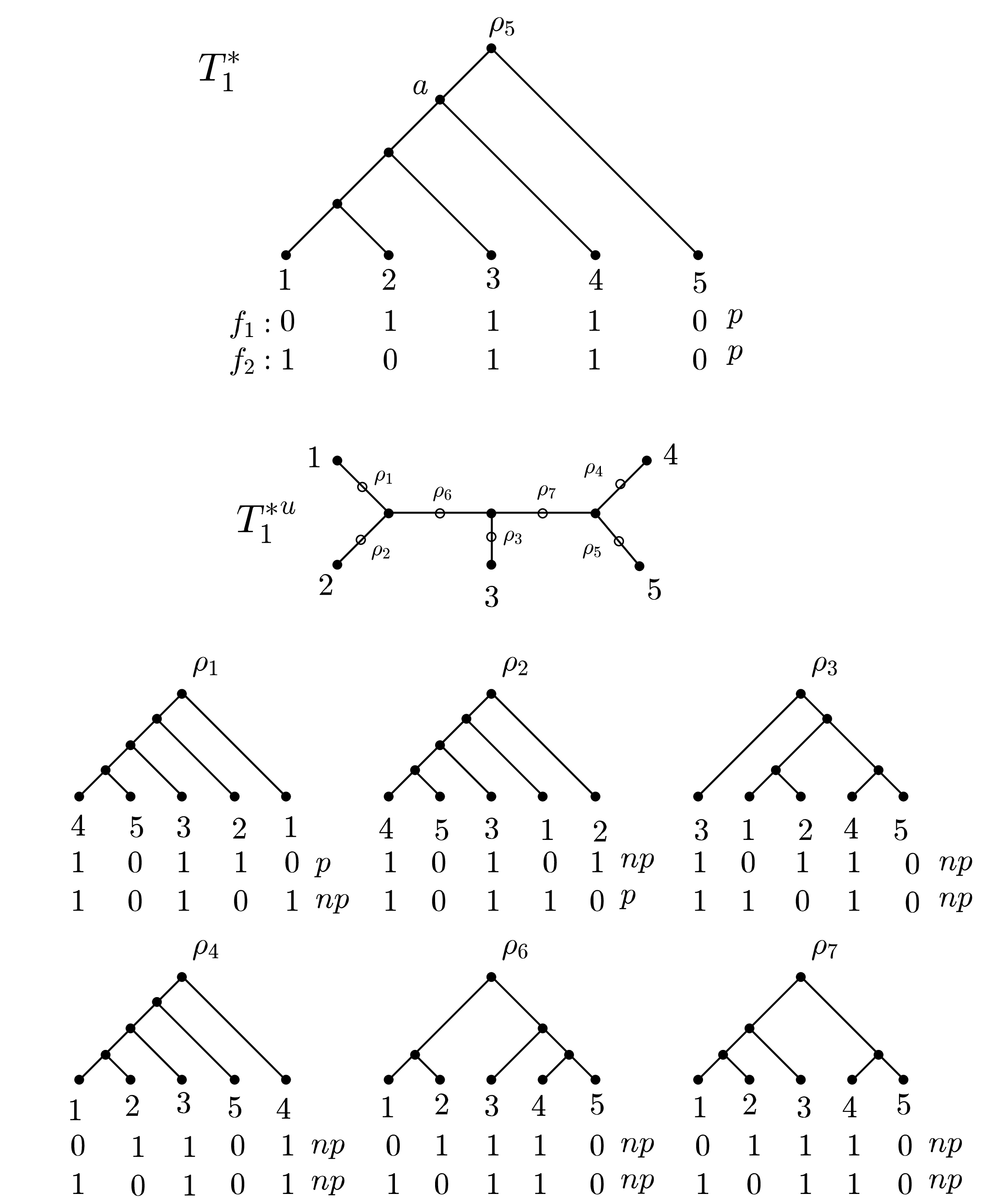}
	\caption{\blue{Phylogenetic $X$-}tree $T_1^{\ast}$ and its unrooted version $T_1^{{\ast}u}$. The two characters $f_1$ and $f_2$ together with their persistence status suffice to distinguish $T_1^{\ast}$ from any of the other rootings of $T_1^{{\ast}u}$, because on none of the six other rootings $f_1$ and $f_2$ are \emph{both} persistent.}
	\label{Fig_Example1}
\end{figure}

\begin{figure}[htbp]
	\centering
	\includegraphics[scale=0.2]{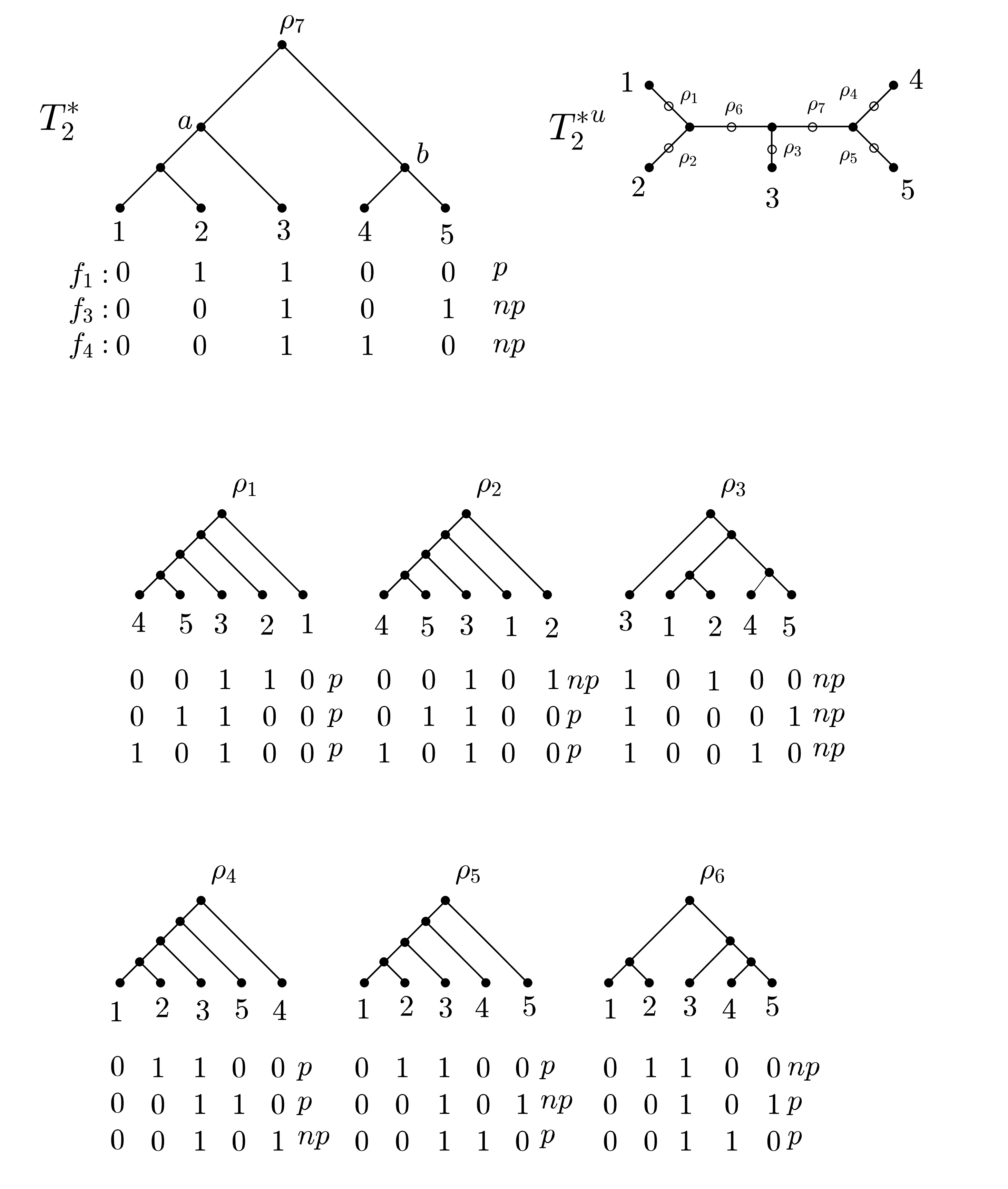}
	\caption{\blue{Phylogenetic $X$-}tree $T_2^{\ast}$ and its unrooted version $T_2^{{\ast}u}$. The three characters $f_1, f_3$ and $f_4$ together with their persistence status suffice to distinguish $T_2^{\ast}$ from any of the other rootings of $T_2^{{\ast}u}$, because on none of the other rootings we have that $f_1$ is persistent, while $f_3$ and $f_4$ are not persistent.}
	\label{Fig_Example2} 
\end{figure}

\end{example}

We are now in the position to prove that $2(n-3)+3 = 2n-3$ characters suffice to uniquely determine a tree $T$, which was the claim of Theorem \ref{bound}.

\begin{proof}[Proof of Theorem \ref{bound}] By Corollary \ref{2n-3suffice}, we can fix the unrooted version $T^u$ of $T$ using $2(n-3)$ characters and their persistence status. Then, by Lemma \ref{4suffice} we know that, given $T^u$, three characters with their persistence status suffice to fix $\rho$. This completes the proof.
\end{proof}

Note, however, that $2n-3$ is only an upper bound for the number of characters together with their persistence status needed to uniquely determine a tree $T$ and does not have to be tight. We will elaborate on this in the discussion section.

\section{Discussion}
The \emph{perfect phylogeny with persistent characters} has recently been thoroughly studied from an algorithmic and computational perspective (e.g. \citet{Bonizzoni2012, Bonizzoni2014, Bonizzoni2017b, Bonizzoni2017}). 
Here, we have considered persistent characters from a combinatorial point of view and have analyzed different aspects.  

First of all, we have illustrated the connection between persistent characters and the principle of Maximum Parsimony. In particular, we have established a connection between persistence and the first phase of the Fitch algorithm. Based on the Fitch algorithm it can easily be decided whether a binary character $f$ is persistent on a given rooted binary phylogenetic tree $T$.

We have then turned to the question of how many characters are persistent on a given tree $T$. We have seen that this quantity depends on the tree shape of $T$ and we could show that the number of persistent characters $\mathcal{P}(T)$ can be derived from the so-called Sackin index of $T$. 
In principle, the more balanced a tree is (in terms of the Sackin index), the fewer persistent characters it has. 

The last aim of our manuscript was then to determine the number of (carefully chosen) binary characters together with their persistence status that uniquely determine a phylogenetic tree $T$. We have shown that this number is bounded from above by $2n-3$, where $n$ is the number of leaves of $T$. 
Note, however, that this only provides a rough upper bound on the minimum number of characters needed together with their persistence status in order to uniquely determine $T$. We used Mathematica \citet{Mathematica} to perform an exhaustive search through tree space for $n=4$, $n=5$ and $n=6$, respectively. This search yielded that $3$, $4$ and $6$ characters, respectively, together with their persistence status are sufficient to determine all possible trees. Thus, we conjecture that the bound suggested by Corollary \ref{bound}, which would have been 5, 7 and 9, respectively, is not tight for any value of $n$. We are therefore planning to establish an improved bound in a subsequent study.

\section*{Acknowledgements}
The authors wish to thank Remco Bouckaert for helpful discussions and two anonymous reviewers for valuable suggestions on an earlier version of this manuscript. Moreover, Kristina Wicke thanks the German Academic Scholarship Foundation for a studentship, and Mareike Fischer thanks the joint research project \textit{\textbf{DIG-IT!}} supported by the European Social Fund (ESF), reference: ESF/14-BM-A55-0017/19, and the Ministry of Education, Science and Culture of Mecklenburg-Vorpommern, Germany.

\bibliographystyle{model1-num-names}\biboptions{authoryear}
\bibliography{References}

\section{Appendix}
\subsection*{Supplementary Results}

\setcounter{lemma}{1}
\begin{lemma}
Let $f$ be persistent on $T$ and let $g$ be a minimal persistent extension  of $f$ on $T$. Then, if $g$ contains a $0 \rightarrow 1$ change on an edge $(u,v)$ and a subsequent $1 \rightarrow 0$ change on an edge $(w,x)$, these two change edges are not adjacent, i.e. $v \neq w$. 
\end{lemma}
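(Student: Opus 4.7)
The plan is to argue by contradiction. Suppose $v=w$, so $g$ contains a $0\to 1$ change on the edge $(u,v)$ followed immediately by a $1\to 0$ change on $(v,x)$. Since $g$ is a persistent extension, these are the only two changes anywhere in $T$ (persistence permits at most one change of each type), so in particular $l_p(f,T)=2$ and $g$ uses exactly two changes. From the change pattern we read off $g(u)=0$, $g(v)=1$, and $g(x)=0$. Note that $v$ must be an internal node: it has a child $x$, and it is the endpoint of the outgoing change edge $(v,x)$, so $v$ cannot be a leaf. In particular, $v$ has a second child in $T$, which I will call $y$.

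Next I would determine $g(y)$. Because the only changes in $g$ are on $(u,v)$ and $(v,x)$, every edge in the subtree rooted at $y$ is change-free under $g$, and so is the edge $(v,y)$ itself. Combined with $g(v)=1$, this forces $g(y)=1$ and $g(z)=1$ for every descendant $z$ of $y$.

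Now I would exhibit a strictly better persistent extension $g'$ by the local swap $g'(v):=0$, leaving $g'$ equal to $g$ on all other nodes. Since $v$ is internal, $g'$ still agrees with $f$ on every leaf of $T$, so $g'$ is indeed an extension of $f$. Checking the three edges incident to $v$: on $(u,v)$ both endpoints are $0$, on $(v,x)$ both endpoints are $0$, and on $(v,y)$ we have endpoints $0$ and $1$, which is a single $0\to 1$ change. All other edges retained their values and hence remain change-free. Therefore $g'$ realizes $f$ with exactly one change, and this change is of type $0\to 1$, so $g'$ is a persistent extension with $ch(g')=1<2=ch(g)$.

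This contradicts the minimality of $g$ among persistent extensions, so the assumption $v=w$ cannot hold, establishing the lemma. The whole argument is essentially a one-line local modification; the only point that needs a touch of care is the observation (from $g$ having no further changes) that $g(y)=1$, which is what forces $g'$ to still be an extension while genuinely saving a change.
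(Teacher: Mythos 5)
Your proof is correct and takes essentially the same approach as the paper's: both exploit the fact that when the two change edges meet at $v$, the set of leaves in state 1 is exactly the leaf set of the subtree below the second child of $v$, and then produce a persistent extension with a single $0\rightarrow 1$ change on the edge to that child, contradicting minimality. The paper builds this cheaper extension from scratch while you obtain the same one by flipping $g(v)$ to $0$, but the argument is identical in substance.
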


\begin{proof} We assume that the statement does not hold, i.e. we assume $g$ is a minimal persistent extension of $f$ on $T$ with two change edges $(u,v)$ ($0 \rightarrow 1$ change) and $(v,x)$ ($1 \rightarrow 0$ change), i.e. the two change edges share a common node $v$ (Figure \ref{Fig_notadjacent}). Then, because $f$ is persistent and $T$ is binary, $v$ has another direct descendant $x'$, which only has descending leaves of state 1. This is due to the fact that $x'$ is contained in the subtree rooted at $v$ (so the 1 has already been gained), but not in the subtree rooted at $x$ (so the 1 has not yet been lost). Note that by construction, these leaves of the subtree rooted at $x'$ are the only leaves in state 1 in $T$, i.e. all other leaves are in state 0. So if we construct an extension $\tilde{g}$ (Figure \ref{Fig_notadjacent}) with all nodes in the subtree rooted at $x'$ in state 1 and all other nodes in state 0, this extension $\tilde{g}$ has a $0 \rightarrow 1$ change on the edge $(v,x')$ but no other changes. So $\tilde{g}$ is a persistent extension of $f$ on $T$, but requires only one change. This contradicts the minimality of $g$, which completes the proof.
\end{proof}

\begin{figure} [htbp]
	\centering
	\includegraphics[scale=0.1]{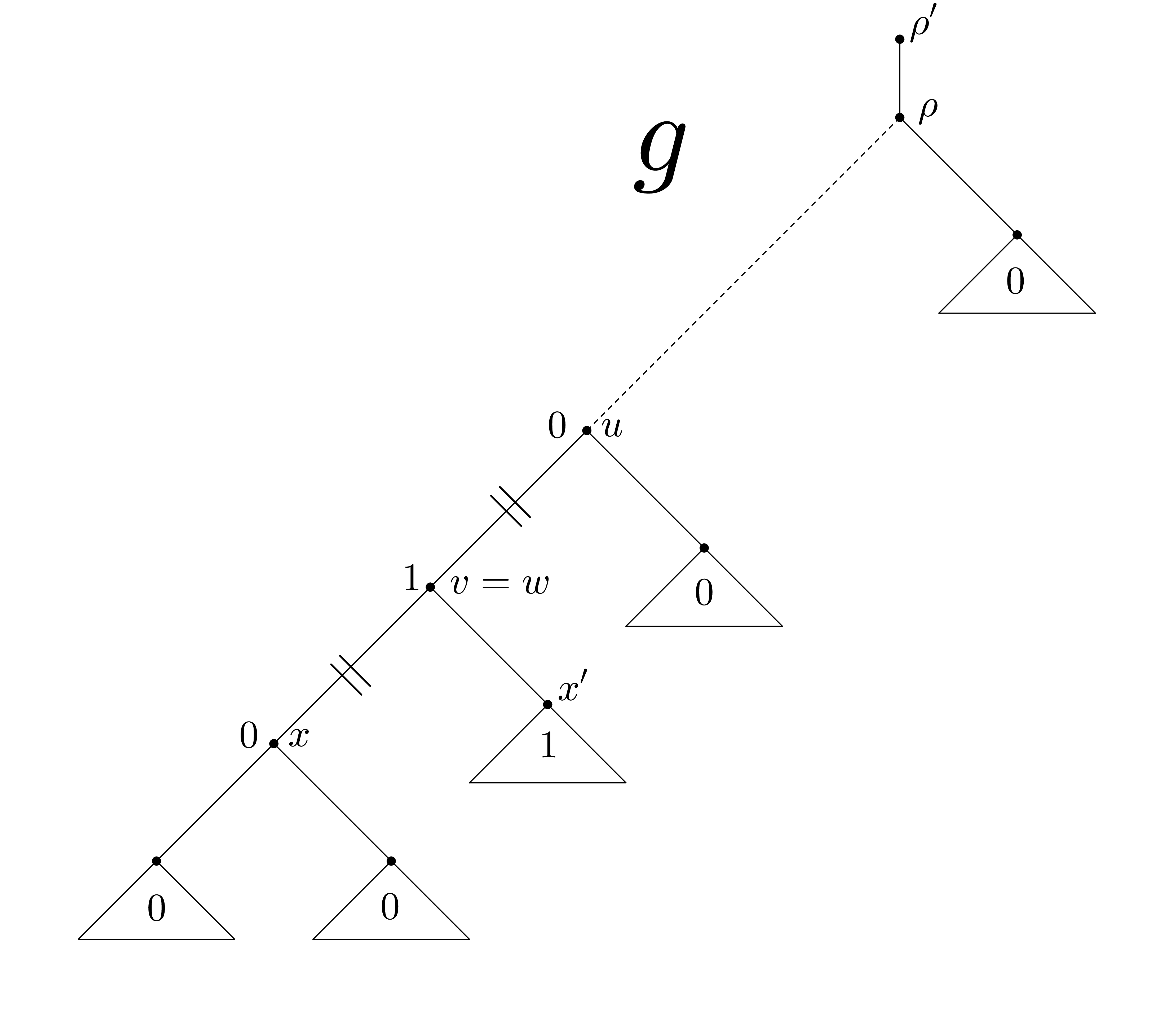}
	\includegraphics[scale=0.1]{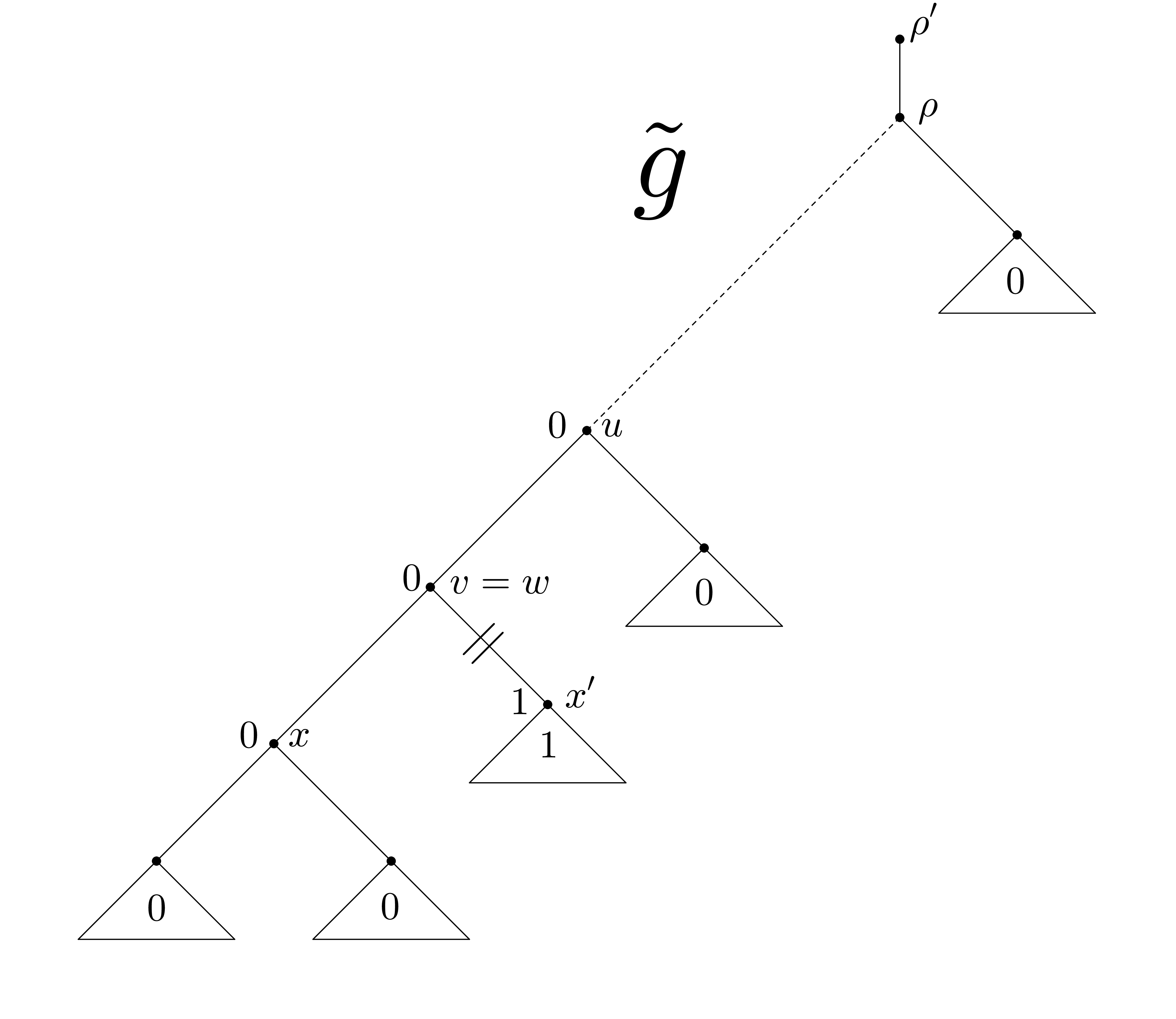}
	\caption{Extensions $g$ and $\tilde{g}$ considered in the proof of Lemma \ref{notadjacent}. In $g$ the $0 \rightarrow 1$ change and the $1 \rightarrow 0$ change are adjacent and lead to two changes. As $\tilde{g}$ requires only one $0 \rightarrow 1$ change, $\tilde{g}$ leads to a persistence score of 1. Thus, $g$ is not a minimal persistent extension.}
	\label{Fig_notadjacent}
\end{figure}

\setcounter{lemma}{2}
\begin{lemma}Let $f$ be persistent on $T$ and $l(f,T)=2$ and let $g$ be a minimal persistent extension of $f$ on $T$. Then, $g$ has the property that all of its change edges have a source node that is assigned $\{0,1\}$ by the first phase of the Fitch algorithm. 
\end{lemma}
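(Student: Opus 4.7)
My plan is to first extract the structure of any minimal persistent extension $g$ from the hypotheses, and then simulate the first Fitch phase on $f$, reading off that the source nodes of the two change edges are precisely the nodes assigned $\{0,1\}$. Combining Lemma \ref{MPboundsPersistence} with the hypothesis $l(f,T)=2$ and the persistence of $f$ yields $l_p(f,T)=2$: indeed $l_p(f,T)\ge l(f,T)=2$ by part~1 of that lemma, and $l_p(f,T)\le 2$ by the very definition of persistence. Hence $g$ uses exactly two changes, a $0\to 1$ on some edge $(u,v)$ followed by a $1\to 0$ on some descendant edge $(w,x)$, and neither change may sit on the root edge $(\rho',\rho)$: otherwise only one change would remain inside $T$, contradicting $l(f,T)=2$. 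Consequently both $u$ and $w$ are genuine inner nodes of $T$ that do receive ancestral state sets from the first Fitch phase. Lemma \ref{notadjacent} further gives $v\neq w$, so $w$ is a strict descendant of $v$ along a path $v=p_0,p_1,\ldots,p_k=w$ with $k\ge 1$.

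Next, I would read off the leaf-state pattern that $g$ (and therefore $f$) forces: descendants of $x$ are all in state $0$, descendants of $v$ outside the $x$-subtree are all in state $1$, and leaves outside the $v$-subtree are all in state $0$. In particular, the non-$x$ child $x'$ of $w$, each off-path sibling $q_i$ of $p_{i+1}$ (for $0\le i<k$), the off-path sibling $v'$ of $v$ at $u$, and every sibling subtree of a strict ancestor of $u$ root subtrees whose leaves all carry a single state (state $1$ for the first three types, state $0$ otherwise). A routine induction shows that in the first Fitch phase any subtree whose leaves are uniformly in state $s$ has its root assigned the singleton $\{s\}$.

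Running the first Fitch phase bottom-up with this information makes the conclusion essentially automatic. At $w$ the operation yields $\{0\}\ast\{1\}=\{0,1\}$, marking $w$ as a union $\{0,1\}$ node. Moving upward along the path to $v$, each $p_i$ with $i<k$ takes the intersection with $\{1\}$ (coming from its off-path sibling $q_i$) and hence is assigned $\{1\}$; in particular $v\mapsto\{1\}$. At $u$ we then obtain $\{1\}\ast\{0\}=\{0,1\}$, the second union node. Thereafter every strict ancestor of $u$ intersects with $\{0\}$ and so remains in $\{0\}$, and no further $\{0,1\}$ nodes can appear. This identifies the source nodes $u$ and $w$ of the two change edges of $g$ as the nodes assigned $\{0,1\}$ by the first Fitch phase, proving the claim. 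The only mildly delicate part is the bookkeeping along the $v$-to-$w$ path and above $u$; once the uniform-state subtree structure induced by $g$ is identified, the Fitch computation is forced and there is no real obstacle.
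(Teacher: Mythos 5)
Your proof is correct and follows essentially the same route as the paper's: establish via Lemma \ref{MPboundsPersistence} and Lemma \ref{notadjacent} that $g$ has exactly two non-adjacent change edges, read off the forced uniform-state subtrees, and trace the first Fitch phase to see that $w$ and then $u$ receive $\{0,1\}$. Your explicit upward induction along the path $v=p_0,\ldots,p_k=w$ and your remark that neither change can sit on the root edge are slightly more careful bookkeeping than the paper's wording, but the argument is the same.
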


\begin{proof}
Let $g$ be a minimal persistent extension of $f$ on $T$. As $l(f,T)=2$ and $f$ is persistent, we conclude that $g$ contains precisely two change edges, namely one $0 \rightarrow 1$ change edge, say $(u,v)$, and one $1 \rightarrow 0$ edge, say $(w,x)$, which by Lemma \ref{notadjacent} do not share a common node. We need to show that the first phase of the Fitch algorithm will assign state set $\{0,1\}$ to both $u$ and $w$ (Figure \ref{fig_01nodes}).

We first consider $w$. By definition of persistence, all leaves descending from $x$ are in state 0 (as there cannot be any more changes after the $1 \rightarrow 0$ change). We refer to the other child node of $w$ as $x'$. Now, all leaves descending from $x'$ are in state 1, because they are part of the subtree rooted at $v$, and are thus subject to the $0 \rightarrow 1$ change, but they are not part of the subtree rooted at $x$ and are thus not subject to the $1 \rightarrow 0$ change. But as all leaves descending from $x$ are in state 0, the first phase of Fitch will assign state set $\{0\}$ to $x$. Likewise, as as all leaves descending from $x'$ are in state 1, the first phase of Fitch will assign state set $\{1\}$ to $x'$. Thus, $w$ will be assigned $\{0,1\}$ by the first phase of the Fitch algorithm.

Now we consider $u$. We know that $g(u)=0$ and $g(v)=1$, and we know that $u$ has another descendant $v'$ with all leaves descending from $v'$ being in state 0. This is due to the fact that $g$ is a persistent extension of $f$, and thus all leaves of $T$ that are not descending from the $0 \rightarrow 1$ edge $(u,v)$ must be in state 0. So as all leaves descending from $v'$ are in state 0, the first phase of the Fitch algorithm will assign state set $\{0\}$ to $v'$. 

Next, note that $v$ has a direct descendant $z$ on the path from $v$ to $w$ (note that $z$ might be equal to $w$), and another direct descendant $z'$. Note that all leaves descending from $z'$ must be in state $1$, as they belong to the subtree rooted at $v$, i.e. they come `after' the $0 \rightarrow 1$ change, but they do not belong to the subtree rooted at $x$, i.e. they come `before' the $1 \rightarrow 0$ change. So $z'$ will be assigned state set $\{1\}$ by the first phase of the Fitch algorithm. 

Note that $z$ has a child node $y$ which is not on the path from $z$ to $w$, and all leaves descending from $y$ are in state 1 (if $z=w$, then $y=x'$). This is again due to the fact that $y$ belongs to the subtree rooted at $v$, i.e. $y$ comes `after' the $0 \rightarrow 1$ change, but it does not belong to the subtree rooted at $x$, i.e. $y$ comes `before' the $1 \rightarrow 0$ change. So $y$ will be assigned state set $\{1\} $ by the first phase of the Fitch algorithm, which implies that $z$ can only be assigned $\{1\} $ or $\{0,1\} $ (if $z=w$, we already know that $z$ is assigned $\{0,1\}$). In both cases, as we have already shown that $z'$ will be assigned state set $\{1\}$ and as we have $\{1\} \cap \{0,1\} =\{1\} \cap \{1\} =\{1\} $, $v$ will be assigned state set $\{1\}$. 

So altogether we know that the children of $u$, namely $v$ and $v'$, will be assigned $\{1\}$ and $\{0\}$, respectively, which implies that $u$ must be assigned $\{0,1\}$ by the first phase of the Fitch algorithm. This completes the proof.

\end{proof}

\begin{figure} [htbp]
	\centering
	\includegraphics[scale=0.1]{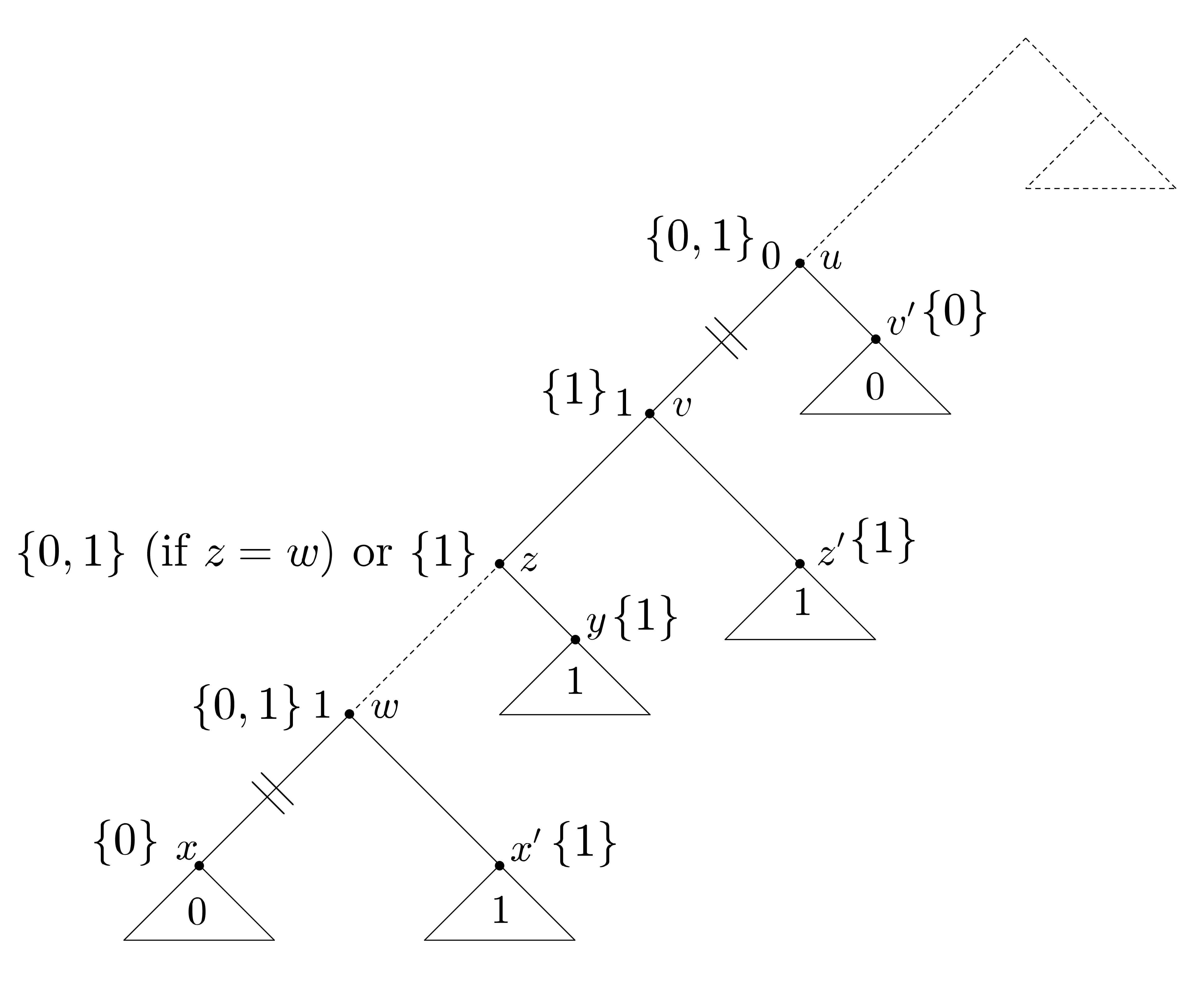}
	\caption{Situation in Lemma \ref{01nodes}. There are two change edges, namely a $0 \rightarrow 1$ change edge $(u,v)$ and a $1 \rightarrow 0$ change edge $(w,x)$, which do not share a common node. All leaves descending from $x$ are in state 0, while all leaves descending from $x'$ are in state 1. Moreover, all leaves that are not descending from $v$ (and are thus not affected by the $0 \rightarrow 1$ change) must be in state 0, while all nodes that are descending from $v$ but not from $w$ must be in state 1. The first phase of the Fitch algorithm, thus necessarily assigns state set $\{0,1\}$ to nodes $w$ and $u$.}
	\label{fig_01nodes}
\end{figure}

\setcounter{theorem}{0}
\begin{theorem}[Characterization of persistent characters] 
Let $f$ be a binary character on $\mathcal{R}=\{0,1\}$ and let $T$ be a phylogenetic tree. Then, we have:
\begin{enumerate}
\item If $l(f,T) > 2$, then $f$ is not persistent on $T$.
\item If $l(f,T) \leq 1$, then $f$ is persistent on $T$. 
\item If $l(f,T) = 2$, let the two $\{0,1\}$ union nodes found during the 1st phase of the Fitch algorithm be denoted by $u$ and $w$, respectively. Then, we have:
 $$ f \mbox{ is persistent on } T $$ 
 $$\Leftrightarrow $$
\hspace*{3cm} all of the following conditions hold: \\

\begin{enumerate} 
\item $u$ is an ancestor of $w$ or vice versa; wlog $u$ is the ancestor of $w$. 
\item The ancestral state sets found by the first phase of the Fitch algorithm fulfill the following conditions (Figure \ref{Fig_characterize}):
	\begin{itemize}
	\item all nodes that are descendants of the direct descendant $v$ of $u$ on the path to $w$, but not of $w$ are assigned state set $\{1\}$ (in particular, all nodes on the path from $v$ to $w$ (including $v$) are assigned state set $\{1\}$),
	\item all nodes that are not descendants of $v$ are assigned state set $\{0\}$.
	\end{itemize}
\end{enumerate}
\end{enumerate}
\end{theorem}

\begin{proof}[Proof of Theorem \ref{characterize}, Parts 1 and 2]  \leavevmode
\begin{enumerate}
\item Let $l(f,T) > 2$. Then we use the second part of Lemma \ref{MPboundsPersistence} to conclude that $f$ is not persistent on $T$.

\item Now, let $l(f,T) \leq 1$.
If $l(f,T)=0$, then $f$ equals either $0,\ldots,0$, in which case $l_p(f,T)=0$ (we can simply assign all nodes of $T$ state 0), or $f$ equals $1,\ldots,1$, in which case $l_p(f,T)=1$, as one $0\rightarrow 1$ change is needed on the root edge (i.e. all nodes of $T$ can be assigned state 1). In both cases, as there exists a persistent extension of $f$, $f$ is persistent.

Now, if $l(f,T)=1$, this implies that there is only one internal node of $T$ (excluding $\rho'$) that is assigned state set $\{0,1\}$, while all other internal nodes are assigned single states $\{0\}$ or $\{1\}$.  This necessarily implies that the two maximal pending subtrees of the node assigned $\{0,1\}$ are such that one of them only contains leaves in state 0 and the other one only contains leaves in state 1. We now construct a persistent extension of $f$ on $T$. For all nodes for which the 1st phase of the Fitch algorithm makes an unambiguous choice, i.e. $\{0\}$ or $\{1\}$, we assign the corresponding state to the respective node. So now we only have to decide what to do with the $\{0,1\}$ node, and we have to show that the resulting extension is persistent.

So if the $\{0,1\}$ node is $\rho$, we choose the state of $\rho$ to be 0 and require one $0 \rightarrow 1$ change on the edge leading to the maximal pending subtree with all nodes in state 1. Thus, this extension $g$ of $f$ is persistent, and therefore $f$ is persistent on $T$.

On the other hand, if the $\{0,1\}$ is assigned to some inner node other than $\rho$, the root is already either uniquely assigned state 0 or 1. If the root is in state 0, we can choose the $\{0,1\}$ node to be in state 0. Again, we require exactly one $0 \rightarrow 1$ change leading to the maximal pending subtree of the $\{0,1\}$ node that has only leaves in state 1. Thus, this extension $g$ and therefore also $f$ is persistent on $T$.

Now the only case that remains is the case where $\rho$ is assigned state 1 and $\{0,1\}$ is assigned to some other internal node. In this case, we can choose the $\{0,1\}$ node to be in state 1, which implies that we have one $1 \rightarrow 0$ change on the edge leading to its maximal pending subtree with all nodes in state 0. Additionally, we require one $0 \rightarrow 1$ change on the root edge. Again, the resulting extension $g$ and therefore also $f$ is persistent. This completes the proof.

\end{enumerate}
\end{proof}

\setcounter{lemma}{5}

\begin{lemma} Let $T$ be a rooted binary phylogenetic tree on $n \geq 2$ leaves. For each inner node $u$ of $T$, i.e. $u \in \mathring{V}(T)$, let $d_u$ denote the number of children of $u$ that are also inner nodes of $T$, i.e. $d_u$ can assume values 0, 1 or 2. Then, we have: $$\sum\limits_{u \in \mathring{V}(T)} d_u = n-2.$$
\end{lemma}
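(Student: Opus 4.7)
The plan is to prove the identity by a double-counting argument. The sum $\sum_{u \in \mathring{V}(T)} d_u$ counts ordered pairs $(u,v)$ such that $u\in \mathring{V}(T)$, $v$ is a child of $u$ in $T$, and $v\in\mathring{V}(T)$. In other words, it counts the number of directed edges of $T$ whose source and target are both inner nodes. My plan is to re-count this set in a second way that immediately produces $n-2$.

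First I would observe that every inner node $v\in\mathring{V}(T)$ other than the root $\rho$ has a unique parent in $T$, and that parent is necessarily an inner node (leaves, by definition, have no children). Conversely, $\rho$ has no parent in $T$ (the root edge and $\rho'$ play no role here, since the lemma is stated purely in terms of $\mathring{V}(T)$). Hence the edges of $T$ whose source and target are both inner nodes are exactly in bijection with the non-root inner nodes, via $v\mapsto (\mathrm{parent}(v), v)$. Since a rooted binary phylogenetic tree on $n$ leaves has $|\mathring{V}(T)|=n-1$ inner nodes, there are $(n-1)-1=n-2$ such non-root inner nodes, and therefore $\sum_{u\in \mathring{V}(T)} d_u = n-2$.

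The argument is essentially a one-step counting identity, so no real obstacle arises; the only point worth being careful about is that $\rho$ must be excluded from the non-root inner nodes (since it has no parent inside $T$) and that leaves are correctly excluded from the second coordinate because $d_u$ counts only inner-node children. If a more pedestrian verification were desired, one could alternatively argue by induction on $n$: the base case $n=2$ gives one inner node with $d_\rho=0=n-2$, and collapsing a cherry $[x_1,x_2]$ into a single leaf decreases $n$ by one, removes one inner node (the parent of the cherry) from $\mathring{V}(T)$, and decreases exactly one $d_u$ (that of the grandparent, if any) by one, so the sum drops by one, matching the drop in $n-2$. Either route yields the claim.
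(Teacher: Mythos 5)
Your double-counting argument is correct, but it takes a genuinely different route from the paper, which proves the lemma by induction on $n$: the base case is $n=2$, and the inductive step deletes a cherry $[x,y]$ with parent $a$, notes that $d_a=0$ so removing $a$ from the sum costs nothing, and that the grandparent $b$ loses exactly one inner-node child, so the sum drops by one as $n$ drops by one. Your primary argument instead observes that $\sum_u d_u$ counts the edges of $T$ with both endpoints in $\mathring{V}(T)$, and that these are in bijection with the non-root inner nodes via $v\mapsto(\mathrm{parent}(v),v)$; since $|\mathring{V}(T)|=n-1$, the count is $n-2$. This is shorter and arguably more illuminating, as it explains \emph{why} the sum is shape-independent (a point the paper itself flags as surprising) rather than merely verifying it; the paper's induction, on the other hand, fits the cherry-reduction machinery it reuses elsewhere. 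The alternative induction you sketch at the end is essentially the paper's proof, modulo the cosmetic difference that you collapse the cherry to a single leaf while the paper deletes the two cherry leaves and lets $a$ become a leaf. One small care point in your bijection, which you already handle correctly: the root edge and $\rho'$ must be disregarded, since $\rho'\notin\mathring{V}(T)$ and $d_u$ only looks at children.
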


\begin{proof}[Proof of Lemma \ref{d_u}] We prove this by induction on $n$. For $n=2$, there is only one possible tree, which consists only of one inner node, namely the root $\rho$, that is connected to both leaves and thus has $d_\rho=0$. So in this case, we have  $\sum\limits_{u \in \mathring{V}(T)} d_u = d_\rho=0=2-2=n-2. $ This completes the base case of the induction.

Now let us assume the statement holds for trees with up to $n$ leaves. Consider a tree $T$ with $n+1$ leaves, and assume without loss of generality that $n+1\geq 3$ (otherwise, if $T$ has only two leaves, consider the base case again). We now need to show that for $T$, we have $\sum\limits_{u \in \mathring{V}(T)} d_u =(n+1)-2$. 

Recall that every rooted binary phylogenetic tree on at least two leaves has at least one cherry (see for example \citet[p. 9]{Book_Steel}). So $T$ has at least one cherry, whose leaves we call $x$ and $y$, respectively. Now $x$ and $y$ have a direct ancestor, which we call $a$. Note that $d_a=0$ as $a$ is connected to two leaves.

However, now we create a tree $T'$ by deleting leaves $x$ and $y$ together with edges $(a,x)$ and $(a,y)$ (Figure \ref{Fig_du}). This implies that $a$ is now a leaf, so $T'$ has now $(n+1)-2+1=n$ leaves. Thus, by the inductive hypothesis,  $\sum\limits_{u \in \mathring{V}(T')} d_u = n-2$. 

Now consider $\sum\limits_{u \in \mathring{V}(T)} d_u$: In fact, this sum only differs from $\sum\limits_{u \in \mathring{V}(T')} d_u$ in two ways: first, it has one more summand, namely $a$, but this does not contribute to the sum because $d_a=0$. Second, as we know that $T$ has at least three leaves, we know that $a$ has an ancestor, say $b$, which is in $T$ connected to an inner node (namely $a$), but which in $T'$ is instead connected to a leaf (as $a$ is now a leaf). So $d_b(T)=d_b(T')+1$. All other nodes of $T'$ remain by construction unchanged compared to $T$. Thus, in total we conclude:

$$\sum\limits_{u \in \mathring{V}(T)} d_u = \sum\limits_{u \in \mathring{V}(T')} d_u +1 \stackrel{ind.}{=}(n-2)+1=(n+1)-2.$$

This completes the proof.
\end{proof}

\begin{figure}[htbp]
	\centering
	\includegraphics[scale=0.1]{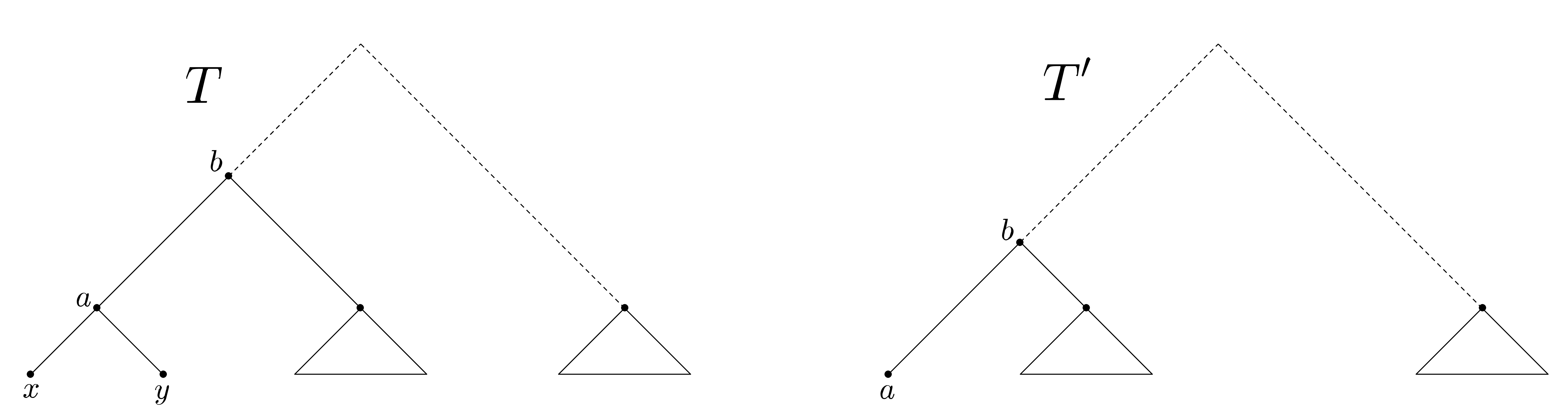}
	\caption{Trees $T$ and $T'$ used in the proof of Lemma \ref{d_u}.}
	\label{Fig_du}
\end{figure}

\subsection*{Algorithm}
\begin{algorithm}[H]
\KwIn{Rooted binary phylogenetic tree $T$, binary character $f$}
\KwOut{Persistence status $\mathcalligra{p}(f,T)$ of $f$ ($p$ if $f$ is persistent on $T$, $np$ if it is not)}
Use the first phase of the Fitch algorithm in order to compute ancestral state sets for all internal nodes of $T$ and the parsimony score $l(f,T)$ \;
\uIf{$l(f,T) \leq 1$}{
	return $f$ is persistent, i.e. $\mathcalligra{p}(f,T) = p$ \;
	}
\uElseIf{$l(f,T) \geq 3$}{
	return $f$ is not persistent, i.e. $\mathcalligra{p}(f,T) = np$\;
	}
\uElseIf{$f(l,T)=2$}{
	Let $u$ and $w$ be the two union nodes with ancestral state set $\{0,1\}$\;
	\uIf{$u$ is an ancestor of $w$ or $w$ is an ancestor of $u$}{
		Assume $u$ is the ancestor of $w$ (otherwise exchange the labels of $u$ and $w$)\;
		\uIf{$T$ does not contain the edge $(u,w)$}{
			Consider the child $v\neq w$ of $u$ on the path from $u$ to $w$\;
			\uIf{\begin{tabular}{@{\hspace*{2mm}}l@{}}
			\hspace*{2mm} $\bullet$ all nodes on the path from $v$ to $w$ (including $v$) are assigned \\ state set $\{1\}$ by the first Fitch phase and \\
				\hspace*{2mm}  $\bullet$ all nodes that are not descendants of $v$ are assigned state set $\{0\}$ \\
				 \end{tabular}}{
				return $f$ is persistent, i.e. $\mathcalligra{p}(f,T) = p$\;
				}
			\Else{
				return $f$ is not persistent, i.e. $\mathcalligra{p}(f,T) = np$\;
			}
		}
		\Else{
			return $f$ is not persistent, i.e. $\mathcalligra{p}(f,T) = np$\;
		}
	}
	\Else{
		return $f$ is not persistent, i.e. $\mathcalligra{p}(f,T) = np$\;
	}
}
\caption{Calculate persistence status} 
\label{CheckPersistence}
\end{algorithm}

\end{document}